\renewcommand{\S}{{\mathcal{S}}}
\newcommand{\T}{{\mathcal{T}}}
\newcommand{\nS}{{\overline{S}}}
\newcommand{\nT}{{\overline{T}}}
\newcommand{\nSp}{\overline{S'}}
\newcommand{\ox}{\overline{x}}
\newcommand{\oy}{\overline{y}}
\newcommand{\fone}{\ensuremath{{\mathsf{F1}}}}
\newcommand{\ftwo}{\ensuremath{{\mathsf{eF2}}}}
\newcommand{\fthree}{\ensuremath{{\mathsf{eF3}}}}
\newcommand{\ffour}{\ensuremath{{\mathsf{eF4}}}}
\newcommand{\oftwo}{\ensuremath{{\mathsf{oF2}}}}
\newcommand{\mbm}{\ensuremath{{\mathsf{mbM}}}}
\newcommand{\blf}{\ensuremath{{\mathbf{s}_0}}}
\newcommand{\blfp}{\ensuremath{{\mathbf{s}_0^\prime}}}
\newcommand{\st}{{\mathbf{s}}}
\newcommand{\defi}{{\mathsf{def}}}
\newcommand{\floor}[1]{{\lfloor{#1}\rfloor}}
\newcommand{\ceil}[1]{{\lceil{#1}\rceil}}
\newcommand{\ignore}[1]{}
\newcommand{\mone}{\ensuremath{{\mathsf{M2}}}}
\newcommand{\mtwo}{\ensuremath{{\mathsf{M1}}}}
\newcommand{\mthree}{\ensuremath{{\mathsf{M3}}}}
\newtheorem{theorem}{Theorem}
\newtheorem{definition}{Definition}
\newtheorem{prop}[theorem]{Proposition}
\newtheorem{lemma}[theorem]{Lemma}
\begin{document}

\pagestyle{plain}

\raggedbottom

\title{Minority Becomes Majority in Social Networks%
\thanks{This work was partially supported by the COST Action IC1205 ``Computational Social Choice'',
by the Italian MIUR under the PRIN 2010-2011 project ARS TechnoMedia -- Algorithmics for Social Technological Networks,
by the European Social Fund and Greek national funds through the research funding program Thales on ``Algorithmic Game Theory'', by the EU FET project MULTIPLEX 317532, and by a Caratheodory basic research grant from the University of Patras.}
}

\author{Vincenzo Auletta\thanks{Universit\`a degli Studi di Salerno \texttt{auletta@unisa.it}}
\and Ioannis Caragiannis\thanks{CTI ``Diophantus'' \& University of Patras \texttt{caragian@ceid.upatras.gr}}
\and Diodato Ferraioli\thanks{Universit\`a degli Studi di Salerno \texttt{dferraioli@unisa.it}}
\and Clemente Galdi\thanks{Universit\`a di Napoli ``Federico II'' \texttt{clemente.galdi@unina.it}}
\and Giuseppe Persiano\thanks{Universit\`a degli Studi di Salerno \texttt{pino.persiano@unisa.it}}
}

\date{}

\maketitle

\setcounter{footnote}{0}

\begin{abstract}
It is often observed that agents tend to imitate the behavior of 
their neighbors in a social network. 
This imitating behavior might lead to the strategic decision of 
adopting a public behavior that differs from what the agent 
believes is the right one and this can subvert the behavior of the 
population as a whole.

In this paper, we consider the case in which agents express preferences 
over two alternatives and model social pressure with 
the {\em majority} dynamics: at each step an agent is selected and its 
preference is replaced by the majority of the preferences of her neighbors.
In case of a tie, the agent does not change her current preference.
A profile of the agents' preferences is {\em stable} if
the preference of each agent coincides with the preference of at least 
half of the neighbors (thus, the system is in equilibrium).

We ask whether there are network topologies that are robust to social 
pressure.
That is, we ask whether there are graphs in which the majority of preferences
in an initial profile $\st$ always coincides with the majority of 
the preference in all stable profiles reachable from $\st$. 
We completely characterize the graphs with this 
robustness property by showing that this is possible only if the 
graph has no edge or is a clique or very close to a clique.
In other words, except for this handful of graphs, every graph admits
at least one initial profile of preferences in which the majority dynamics
can subvert the initial majority.
We also show that deciding whether a graph admits a minority that becomes 
majority is NP-hard when the minority size is
at most $1/4$-th of the social network size.
%
\end{abstract}

\section{Introduction}
Social scientists are greatly interested in understanding how 
social pressure can influence the behavior of agents in a social network. 
We consider the case in which agents connected through a social network 
must choose between two alternatives and, for concreteness, we consider
two competing technologies: 
the current (or old) technology and a new technology.
To make their decision, the agents take into account two factors: 
their personal relative valuation of the two technologies and the 
opinions expressed by their social neighbors. 
Thus, the public action taken by an agent 
(i.e., adopting the new technology or staying with the old) 
is the result of a mediation between her personal valuation and the 
social pressure derived from her neighbors.
 
The first studies concerning the adoption of new technologies
date back to the middle of 20-th century, with the analysis of 
the adoption of hybrid seed corn among farmers in Iowa~\cite{ryan1950acceptance}
and of tetracycline by physicians in US~\cite{coleman1966medical}.

We assume that agents receive an initial signal about the quality of 
the new technology that constitutes the agent's initial preference. 
This signal is independent from the agent's social network;
e.g., farmers acquired information about the hybrid corn from salesman and physicians acquired information about tetracycline from scientific publications. 
After the initial preference is formed, 
an agent tends to conform her preference to the one of her neighbors and thus to \emph{imitate} 
their behavior, even if this disagrees with her own initial preference.
This imitating behavior can be explained in several ways:
an agent that sees a majority agreeing on an opinion
might think that her neighbors have access to some information unknown to her and hence 
they have made the better choice;
also agents can directly benefit from adopting the same behavior as their friends 
(e.g., prices going down).

Thus, the natural way of modeling the evolution of preferences in 
networks is through a majority dynamics: each agent has an initial 
preference and at each time step a subset of agents updates their 
opinion conforming to the majority of their neighbors in the network. 
As a tie-breaking rule it is usual to assume that when 
exactly half of the neighbors adopted the new technology, the agent 
decides to stay with her current choice to avoid the cost of a change. 
Thus, the network undergoes an opinion formation process where 
agents continue to update their opinions until a stable profile is 
reached, where each agent's behavior agrees with the majority of her neighbors.
Notice that the dynamics does not take into account the relative merits of the two
technologies and, without loss of generality, 
we adopt the convention that the technology 
that is preferred by the majority of the agents in the initial 
preference profile is the new technology.

In the setting described above, it is natural to ask whether and when 
the social pressure of conformism can change the opinion of some of the 
agents so that the initial majority is subverted.
In the case of the adoption of a new technology, 
we are asking whether a minority of agents supporting the old technology 
can orchestrate a campaign and convince enough agents to reject 
the new technology, even if the majority of the agents had initially preferred the new 
technology.

This problem has been extensively studied in the literature.
If we assume that updates occur \emph{sequentially}, 
one agent at each time step, then it is easy to design graphs (e.g., a star) where the 
old technology, supported by an arbitrarily small minority of agents, 
can be adopted by most of the agents. Berger \cite{berger} proved that 
such a result holds even if at each time step all agents \emph{concurrently} update their actions.
However, Mossel et al. \cite{mossel} and Tamuz and Tessler \cite{tamuz} 
proved that there are graphs for which, both with concurrent and sequential updates, at the end of the update 
process the new technology will be adopted by the majority of agents
with high probability.

In \cite{mossel,feldman} it is also proved that when the 
graph is an expander, agents will reach a 
\emph{consensus} on the new technology with high probability for both sequential and concurrent updates
(the probability is taken on the choice of initial configuration with a majority of 
new technology adopters). 
Thus, expander graphs are particularly efficient in aggregating 
opinions since, with high probability, social pressure does not prevent the diffusion of the new technology.

In this paper, 
we will extend this line of research by taking a worst-case approach instead of a probabilistic one.
We ask whether there are graphs that are {robust} to social pressure,
even when it is driven by a carefully and adversarially designed campaign.  
Specifically, 
we want to find out whether there are graphs in which no subset of the agents preferring the old technology
(and thus consisting of less than half of the agents) 
can manipulate the rest of the agents and drive the network to a stable profile in which the 
majority of the agents prefers the old technology.
This is easily seen to hold for two extreme graphs: the clique and the graph with no edge.
In this paper, we prove that these are essentially\footnote{It turns out that for an even number of nodes, 
there are a few more very dense graphs enjoying such a property.}
the only graphs where social pressure cannot subvert the majority.

In particular, our results highlight that even for expander graphs,
where it is known that agents converge with high probability to 
consensus on the new technology, it is possible to fix a minority and 
orchestrate a campaign that brings the network into a stable profile where at least half of the agents decide to not adopt the new technology.

\smallskip \noindent \textit{Overview of our contribution.}
%
We consider the following sequential dynamics. 
We have $n$ agents and at any given point the system is described by the 
profile $\st$ in which $\st(i)\in\{0,1\}$ is the preference of the $i$-th agent.
We say that agent $i$ is {\em unhappy} in profile $\st$ if the majority of her 
neighbors have a preference different from $\st(i)$.
Profiles evolve according to the dynamics in which an {\em update} 
consists of non-deterministically selecting an unhappy agent and 
changing its preference.
A profile in which no agent is unhappy is called {\em stable}. 

In Section \ref{sec:char} (see Theorem~\ref{thm:forbiddennotmbm} and~\ref{thm:forbidden}),
we characterize the set of social networks (graphs) where a majority can 
be subverted by social pressure. More specifically, 
we show that for each of these graphs it is possible to select a minority 
of agents not supporting the new technology and 
a sequence of updates (a campaign) that leads the network to a stable 
profile where the majority of the agents prefers the old technology.
As described above, we will prove that this class is very large and contains all graphs except a small set of forbidden graphs,
consisting of the graph with no edges and of other 
graphs that are almost cliques.
Proving this fact turned out to be a technically challenging task and 
it is heavily based on properties of local optima of graph bisections.

Then we turn our attention to related computational questions.
First we show that
we can compute in polynomial time an initial preference profile, 
where the majority of the agents supports the new technology, 
and a sequence of update that ends in a stable profile where at 
least half of the agents do not adopt the new 
technology. This is done through a polynomial-time local-search computation of a bisection of locally minimal width.

We actually prove a stronger result. 
In principle, it could be that from the starting profile the 
system needs to undergo a long sequence of updates, in which the minority 
gains and loses member to eventually reach a stable profile in which the
minority has become a majority. Our algorithm shows that this can always
be achieved by means of a short sequence of at most two updates
after which any sequence of updates will bring the system to a stable
profile in which the initial minority has become majority.
This makes the design of an adversarial campaign even more realistic,
since such a campaign only has to identify the few ``swing'' agents and thus
it turns out to be very simple to implement.

However, the simplicity of the subverting campaign comes at a cost.
Indeed, our algorithm always computes an initial preferences profile that 
has very large minorities, consisting of $\left\lfloor\frac{n-1}{2}\right\rfloor$ agents.
We remark that, even in case of large minorities, it is not trivial
to give a sequence of update steps that ends in a stable profile where the majority is subverted.
Indeed, even if the large minority of the original profile makes it easy to find a few agents of the original majority that prefer to change their opinions, this is not sufficient in order to prove that the majority has been subverted,
since we have also to prove that there are no other nodes in the original minority that prefer to change their preference. 

Moreover, we observe that, 
even if there are cases in which such a large minority is necessary,
the idea behind our algorithm can be easily turned into an heuristic
that checks whether the majority can be subverted
by
a smaller minority
(e.g., by considering unbalanced partitions in place of bisections).

On the other side, we show that a large size of the minority in the 
initial preference profile seems to be necessary in order to quickly compute 
a subverting minority and its corresponding sequence of updates.
Indeed, given a $n$-node social network, deciding whether there exists 
a minority of less than $n/4$ nodes and a sequence of update steps that bring the system to a stable profile in which the majority has been subverted is an NP-hard problem (see Theorem~\ref{thm:reduction}).

The main source of computational hardness seems to arise from the computation of initial preference profile.
Indeed, if this profile is given, computing the maximum number of adopters of the new technology (and, hence, deciding whether majority can be subverted) and the corresponding sequence of updates turns out to be 
possible in polynomial time (see Theorem~\ref{thm:poly-mbm}).

\smallskip \noindent \textit{Related work.}
There is a vast literature on the effect that social pressure has on the behavior of a system as a whole.
In many works, 
influence is modeled by agents simply following the majority \cite{berger,mossel,tamuz,feldman}.
A generalization of this imitating behavior is discussed in \cite{mossel}.

A different approach is taken in \cite{mossel-sly}, where each agent updates her behavior
according to a Bayes rule that takes in account its own initial preference and what is declared by neighbors
on the network.

Yet another approach assumes that agents are strategic and rational.
That is, they try to maximize some utility function that depends on
the level of coordination with the neighbors on the network.
Here, the updates occur according to a best response dynamics or
some other more complex game dynamics.
Along this direction, particularly relevant to our works
are the ones considering best-response dynamics from truthful profiles
in the context of iterative voting, e.g., see \cite{MPRJ10} and \cite{BCMP13}.
In particular, closer to our current work is the paper of Br\^anzei et al. \cite{BCMP13}
who present bounds on the quality of equilibria that can be reached from a truthful
profile using best-response play and different voting rules.
The important difference is that there is no underlying network in their work.

Our work is also strictly related with a line of work in social sciences
that aims to understand how opinions are formed and expressed in a social context. 
A classical simple model in this context has been proposed by Friedkin and Johnsen \cite{FJ90} (see also~\cite{D74}). 
Its main assumption is that each individual has a private initial belief and that the
opinion she eventually expresses is the result of a repeated averaging between her 
initial belief and the opinions expressed by other individuals with whom she has social relations.
The recent work of Bindel et al. \cite{BKO11} assumes that initial beliefs and opinions belong to $[0,1]$ and 
interprets the repeated averaging process as a best-response play in a naturally defined 
game that leads to a unique equilibrium.

An obvious refinement of this model is to consider discrete initial beliefs and opinions
by restricting them, for example, to two discrete values
(see \cite{fgvSAGT12} and \cite{ckoEC13}).
Clearly, the discrete nature of the opinions does not allow for averaging anymore and 
several nice properties of the opinion formation models mentioned above --- such as the uniqueness of the outcome --- are lost. 
In contrast, in \cite{fgvSAGT12} and \cite{ckoEC13}, it is assumed that each agent is strategic and aims to pick the most beneficial strategy for her,
given her internal initial belief and the strategies of her neighbors.
Interestingly, it turns out that the majority rule used in this work
for describing how agents update their behavior
can be seen as a special case of the discrete model of \cite{fgvSAGT12} and \cite{ckoEC13},
in which agents assign a weight to the initial preference smaller than the one given to
the opinion of the neighbors.

Studies on social networks consider several phenomena related to the spread of social influence
such as information cascading, network effects, epidemics, and more.
The book of Easley and Kleinberg \cite{EK10} provides an excellent introduction
to the theoretical treatment of such phenomena. From a different perspective,
problems of this type have also been considered in the distributed computing literature,
motivated by the need to control and restrict the influence of failures in distributed systems;
e.g., see the survey by Peleg \cite{Peleg02} and the references therein.

\smallskip \noindent \emph{Preliminaries.}
We formally describe our model as follows.
There are $n$ agents; we use $[n]=\{1, 2, ..., n\}$ to denote their set. 
Each agent corresponds to a distinct node of a graph $G=(V,E)$ that 
represents the {\em social network}; 
i.e., the network of social relations between the agents. 
Agent $i$ has an initial preference $\blf(i)\in\{0,1\}$.
At each time step, agent $i$ can update her preference to $\st(i) \in \{0,1\}$. 
A {\em profile} is a vector of preferences, with one preference per agent.
We use bold symbols for profiles; i.e., $\st = (\st(1),\ldots, \st(n))$.
In particular, we sometimes call the profile of initial preferences $(\blf(1),\ldots,\blf(n))$ 
as the \emph{truthful profile}.
Moreover, for any $y \in \{0,1\}$, we denote as $\oy$ the negation of $y$;
i.e., $\oy=1-y$.

%

A graph $G$ is \mbm\ ({\em minority becomes majority}) if
there exists a profile $\blf$ of initial preferences such that:
the number of nodes that prefer $0$ is a strict majority,
i.e., $|\{ x\in V \colon\blf(x) = 0\}| > n/2$;
and there is a {\em subverting} sequence of updates
that starts from $\blf$
and reaches a stable profile $\st$ in which the number of nodes that prefer $0$
is not a majority,
i.e., $|\{x\in V \colon \st(x)=0\}|\leq n/2$.
A profile of initial preferences that witnesses a graph being \mbm\ will be also termed \mbm.

\section{Characterizing the \mbm\ Graphs}\label{sec:char}
The main result  of this section is a characterization of the \mbm\ graphs.
More formally, we have the following definition.
\begin{definition}\label{def:evenforbidden}
A graph $G$ with $n$ nodes is {\em forbidden} 
if one of the following conditions is satisfied.\\
\textbf{\ \fone:} $G$ has no edge;\\
\textbf{\oftwo:}
$G$ has an odd number of nodes,
  all of degree $n-1$ (that is, $G$ is a clique);\\
\textbf{\ftwo:}
$G$ has an even number of nodes
    and all its nodes have degree at least $n-2$;\\
\textbf{\fthree:}
$G$ has an even number of nodes, 
    $n-1$ nodes of $G$ form a clique, and 
            the remaining node has degree at most $2$;\\
\textbf{\ffour:}
$G$ has an even number of nodes, 
    $n-1$ nodes of $G$ have degree $n-2$ but they do not form a clique, 
            and the remaining node has degree at most $4$.
\end{definition}

%
We begin by proving the following statement.
\begin{theorem}\label{thm:forbiddennotmbm}
No forbidden graph is \mbm.
\end{theorem}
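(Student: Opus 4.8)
The plan is to fix, for contradiction, a forbidden graph $G$ that is \mbm, together with an initial profile $\blf$ having more than $n/2$ zeros and a subverting sequence ending in a stable profile with at most $n/2$ zeros; throughout I write $S$ and $T$ for the current sets of agents preferring $0$ and $1$. Since one update changes $|S|$ by exactly $1$, the first moment $|S|$ drops to $n/2$ must occur by flipping a $0$-agent that has become unhappy while $|S|=n/2+1$; hence the engine of every case is to control \emph{when a $0$-agent can be unhappy near the balance point}. For \fone\ this is immediate: with no edges every agent is vacuously happy, no update is ever possible, and the majority is frozen. For \oftwo\ (an odd clique) a routine count shows that at $|S|\ge(n+1)/2$ every $0$-agent is happy while every $1$-agent is unhappy and wants to switch to $0$, so $|S|$ is non-decreasing and never reaches $n/2$.

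For \ftwo\ I argue directly at the balance point. Fix any profile with $|S|=n/2+1$ and any $0$-agent $x$. Since $x$ misses at most one edge, it has at least $|S|-2=n/2-1$ neighbours in $S$ and at most $|T|=n/2-1$ neighbours in $T$, so it cannot have a strict majority of disagreeing neighbours and is happy. Thus no $0$-agent is ever unhappy when $|S|=n/2+1$, the transition to $n/2$ is impossible, and $G$ is not \mbm.

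The delicate cases are \fthree\ and \ffour, where a single low-degree vertex $w$ can make a boundary $0$-agent unhappy. Here I combine two ingredients. First, a structural fact about stable profiles: letting $c$ be the number of $0$-agents among the $n-1$ bulk (near-clique) vertices, two opposite-colour bulk vertices both non-adjacent to $w$ are each adjacent to all other bulk vertices and would force the contradictory inequalities $n-1-c\le c-1$ and $c\le n-2-c$; since $w$ has degree at most $2$ (resp.\ $4$), all but at most $2$ (resp.\ $4$) bulk vertices are monochromatic in any stable profile. Second, and this is the heart of the argument, a reachability invariant: I claim $c\ge n/2$ (i.e.\ $0$ stays the bulk majority) is maintained along the whole dynamics, together with the auxiliary clause that the \emph{dangerous} configuration ``$c=n/2$, $w=1$, and some neighbour of $w$ equals $0$'' never arises. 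The point is that the only way to push $c$ below the threshold is a $w$-assisted flip of a boundary $0$-agent adjacent to $w$, which requires $w=1$ and a $0$-neighbour of $w$ simultaneously; but flipping $w$ to $1$ requires a strict majority of its $\le 2$ (resp.\ $\le 4$) neighbours to be $1$, while creating a $0$-neighbour of $w$ requires, and leaves, $c$ strictly above the threshold. A case check over the four update types (bulk $1{\to}0$, bulk $0{\to}1$, $w$ $0{\to}1$, $w$ $1{\to}0$) shows that none can produce the dangerous configuration from an invariant-satisfying state, so the threshold is never crossed. Combining the two ingredients, every reachable stable profile has its bulk monochromatic, and since $c\ge n/2>2$ that colour must be $0$, giving $|S|\ge n-3>n/2$ and contradicting subversion.

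I expect \ffour\ to be the main obstacle. There the $n-1$ bulk vertices form a clique minus a matching, so every local count is shifted by a missing edge and the bulk-majority threshold must be recomputed, and $w$ may touch up to four bulk vertices. Crucially, the clean step used for \fthree---that flipping $w$ to $1$ forces \emph{all} its neighbours to be $1$---degrades with degree $4$ to forcing only at most one $0$-neighbour, so the invariant must simultaneously track $c$, the value of $w$, and the colours of $w$'s up-to-four neighbours, turning the two-line update check into a somewhat lengthy case analysis. Small values of $n$, where the bound $n-3>n/2$ (resp.\ $n-5>n/2$) fails, are dispatched by direct inspection.
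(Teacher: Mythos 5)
Your handling of \fone, \oftwo, and \ftwo\ is correct and is essentially the paper's own argument (no $0$-agent can be unhappy while the $0$'s hold at least $n/2+1$ seats, so the balance point is never crossed). Your invariant for \fthree\ also checks out: at $c=n/2$ the only possibly-unhappy bulk $0$-agent is a neighbour of $w$ with $w=1$; bulk $0\to1$ flips are impossible whenever $c\ge n/2+1$; and since $\deg(w)\le 2$, a $0\to1$ flip of $w$ requires \emph{all} of $w$'s neighbours to be $1$. Combined with your stable-profile fact this rules out subversion for large $n$, with $n=6$ deferred to inspection.

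The genuine gap is \ffour, which you yourself flag as ``the main obstacle'' but never complete --- and the invariant as you state it is actually false there, so the induction breaks rather than merely getting longer. With $\deg(w)=4$, node $w$ can flip from $0$ to $1$ when exactly three of its neighbours are $1$, leaving one $0$-neighbour $x$; if this happens at $c=n/2$ (a state your invariant allows, since $w=0$ before the flip), the post-flip state is precisely your ``dangerous configuration.'' What saves the approach --- and what you do not identify --- is that this configuration is \emph{not} dangerous in \ffour: $w$'s neighbours form non-adjacent pairs, so $x$'s unique bulk non-neighbour is a $1$-coloured neighbour of $w$, and the count then shows $x$ is unhappy only when $c<n/2$. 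The correct dangerous configuration is ``$c=n/2$, $w=1$, and an \emph{entire non-adjacent pair} of $w$'s neighbours coloured $0$,'' and one must check that $w$'s flip rule (at least $3$ of $4$ neighbours equal to $1$) can never create such a pair. None of this appears in your write-up, and the deferred small-$n$ cases ($n=6$ for \fthree; $n\in\{6,8,10\}$ for \ffour, where $n-5>n/2$ fails) require extra stability arguments, not mere bound checking. For comparison, the paper needs no dynamic invariant at all: from any $0$-majority profile, the only possible $0\to1$ flip is by the low-degree node; such a flip requires all (or, in \ffour, all but one, paired with a $1$) of its neighbours to be $1$, hence it cannot trigger further $0\to1$ flips, after which the $1$-agents cascade to $0$. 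The pairing observation you are missing is exactly the one line that makes \ffour\ go through in that proof.
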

\begin{proof}
We will distinguish between cases for a forbidden graph $G$. 
Clearly, if $G$ is \fone, then it is not {\mbm} since no 
node can change its preference. Now assume that $G$ is \ftwo\
(respectively, \oftwo) and consider a profile in which there are at least $\frac{n}{2}+1$ 
(respectively, $\frac{n+1}{2}$) agents with preference $0$. Then, every node $x$ with
initial preference $0$ has at most $\frac{n}{2}-1$ neighbors with initial preference $1$ 
and at least $\frac{n}{2}-1$ neighbors with initial preference $0$ 
(respectively, at most $\frac{n-1}{2}$ neighbors with initial preference $1$ 
and at least $\frac{n-1}{2}$ neighbors with initial preference $0$).
Hence, $x$ is not unhappy and stays with preference $0$.

Now, consider the case where $G$ is \fthree\ and let $u$ 
be the node of degree at most $2$. Consider profile $\blf$ of initial preferences 
in which there are at least $\frac{n}{2}+1$ agents with preference $0$. 
First observe that in the truthful profile $\blf$
any node $x$ other than $u$ that has preference $0$ 
is adjacent to at most 
$\frac{n}{2}-1$ nodes with initial preference $1$ and to at least 
$\frac{n}{2}-1$ nodes with initial preference $0$.
Then, $x$ is not unhappy and stays with preference $0$.
Hence, $u$ is the only node that may want to switch from $0$ to $1$.
But this is possible only if all nodes in the 
neighborhood of $u$ have preference $1$, which implies 
that the neighborhood of any node with initial preference $0$ does 
not change after the switch of $u$, i.e., nodes with 
preference $0$ still are not unhappy and thus they have no incentive to switch to $1$. 
Then, any node with preference $1$ that is not adjacent
to $u$ has at most $\frac{n}{2}-2$ neighbors 
with preference $1$ and at least $\frac{n}{2}$ neighbors 
with preference $0$. Also, any node with preference $1$ that 
is adjacent to $u$ has $\frac{n}{2}-1$ neighbors with 
preference $1$ and $\frac{n}{2}$ neighbors with preference 
$0$. So, every node with preference 
$1$ will eventually switches to $0$.

It remains to consider the case where $G$ is \ffour; 
let $u$ be the node of degree at most $4$. Actually, 
it can be verified that $u$ can have degree either $2$ 
or $4$ and its neighbors form pair(s) of non-adjacent 
nodes. Consider a truthful profile in which there are at least $\frac{n}{2}+1$ 
agents with preference $0$. Observe that a 
node different from $u$ that has initial preference $0$ has at 
most $\frac{n}{2}-1$ neighbors with preference $1$ and at least $\frac{n}{2}-1$ neighbors with preference 
$0$. So, it is not unhappy and has no incentive to switch to preference 
$1$. The only node that might do so is $u$, provided 
that the strict majority of its neighbors (i.e., both 
of them if $u$ has degree $2$ and at least three of 
them if $u$ has degree $4$) have preferences $1$. This 
switch cannot trigger another switch of the preference of an agent from $0$
node to $1$. Indeed, there is at most one agent with preference $0$
that can be adjacent to $u$. Since this node 
is not adjacent to one
of the neighbors of $u$ with preference $1$, it has at most 
$\frac{n}{2}-1$ neighbors with preference $1$ (and at least 
$\frac{n}{2}-1$ neighbors with preference $0$). Hence, it has 
no incentive to switch to preference $1$ either.
Now, consider two neighbors of $u$ with preference $1$ that 
are not adjacent (these nodes certainly exist). Each
of them is adjacent to $\frac{n}{2}-2$ nodes with 
preference $1$ and $\frac{n}{2}$ nodes with preference 
$0$. Hence, they have an incentive to switch to $0$.
Then, the number of nodes with preference $1$ is at most
$\frac{n}{2}-2$ and eventually all nodes will switch
to preference $0$.
\end{proof}

The following is the main result of this section. 
\begin{theorem}\label{thm:forbidden}
Every non-forbidden graph is \mbm.
\end{theorem}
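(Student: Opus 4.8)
The plan is to recast the dynamics as a descent process and then isolate a clean combinatorial target. First I would observe that flipping an unhappy node strictly decreases the number of \emph{bichromatic} edges (edges whose endpoints disagree): an unhappy node has strictly more disagreeing than agreeing neighbours, so after it flips those edges become agreeing and this integer potential drops by a positive amount. Hence every sequence of updates is finite and terminates at a stable profile, and a stable profile is exactly a local minimum of this potential under single-node moves. This reduces proving that $G$ is \mbm\ to exhibiting one starting profile together with one terminating sequence.

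The engine of the argument is the following monotonicity observation, which I would prove next. Call $\st$ a \emph{launch profile} if every node with preference $1$ is not unhappy and the number of $1$'s is at least $\lceil n/2\rceil$. In a launch profile the only unhappy nodes have preference $0$, so every update flips some $0$ to $1$; moreover such a flip only adds an agreeing neighbour to the other $1$-nodes and makes the flipped node (strictly) happy, so the launch property is preserved and the number of $1$'s strictly increases. Consequently every maximal update sequence out of a launch profile ends at a stable profile with at least $\lceil n/2\rceil$ nodes preferring $1$, i.e.\ with at most $n/2$ nodes preferring $0$. Thus it suffices to reach \emph{some} launch profile from a profile with a strict majority of $0$'s using a short (at most two-step) sequence of valid updates; this is also what yields the strong ``any continuation'' form announced in the overview.

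Reaching a launch profile reduces to a purely structural statement about $G$: the existence of a set $B\subseteq V$ with $\lceil n/2\rceil \le |B|\le \lfloor n/2\rfloor+2$ such that every vertex of $B$ has at least half of its neighbours inside $B$ (so that declaring $B$ the $1$-set gives the launch property), together with one or two vertices of $B$ that can be \emph{ignited}, i.e.\ flipped from $0$ to $1$ in order, starting from the profile in which $V_1=B$ with those vertices removed. Putting the complement of the reduced set into $V_0$ yields a strict $0$-majority precisely because removing up to two vertices drops $|B|$ to at most $\lfloor n/2\rfloor$; an ignitable vertex is one that is strictly happy inside $B$ (the single-flip case) or becomes unhappy once its partner has flipped (the two-flip case), which is exactly the condition that makes the corresponding update legal. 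Note that the initial minority produced this way has size $\lceil n/2\rceil - 1 = \lfloor (n-1)/2\rfloor$, matching the bound announced in the overview.

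The whole difficulty is now to produce such a set $B$ for every non-forbidden graph, and this is where I expect the main obstacle to lie. My plan is to start from a bisection $(\nS,S)$ of locally minimal width --- one in which no swap of a pair of vertices across the cut decreases the number of crossing edges --- take $B$ as the larger side, and repair it by local moves. Local minimality already constrains the unhappy vertices severely: if a vertex on one side is unhappy then each of its non-neighbours on the other side is strictly happy, which is the raw material both for locating ignitable vertices and for pushing unhappy vertices across the cut while controlling the balance. The delicate point is that this repair can fail, and it fails exactly on the dense graphs at the boundary of Definition~\ref{def:evenforbidden}: cliques and near-cliques, where every balanced set leaves some high-degree vertex with a majority of neighbours outside, and the near-clique-plus-low-degree-vertex graphs, where parity forces ties. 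I would therefore organise the proof as a case analysis driven by the maximum and minimum degrees and by the parity of $n$, showing in each surviving case that either a single strictly-happy swing vertex exists (one update) or a non-adjacent pair of tight vertices can be ignited in sequence (two updates); the cases in which both options are simultaneously obstructed are precisely \fone, \oftwo, \ftwo, \fthree, and \ffour, so that excluding them by hypothesis leaves a launch profile reachable in at most two updates in every remaining graph.
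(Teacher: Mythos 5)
Your first three paragraphs correctly reconstruct the paper's framework: the monotone ``launch profile'' observation is exactly the paper's footnote to Lemma~\ref{lem:tone-ICALP} (once at least $\lceil n/2\rceil$ nodes hold preference $1$ and none of them is unhappy, any continuation only grows the $1$-side), the one-or-two-step ignition is how every construction in the paper works (Lemmas~\ref{lem:tone-ICALP}, \ref{lem:mtwo}, \ref{lem:mone}, \ref{lem:mthree}), and your swap observation for locally minimal bisections is precisely Lemma~\ref{lem:minimal-ICALP}, $\defi_\S(x)+\defi_\S(y)+2W(x,y)\geq 0$. Up to this point the two arguments coincide. (One arithmetic slip: with $|B|\leq\lfloor n/2\rfloor+2$ and two ignited vertices, the initial profile has $n/2$ ones when $n$ is even, which is not a strict $0$-majority; you need $|B|\leq n/2+1$ there, as in the paper's even-case constructions.)

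The genuine gap is that your fourth paragraph, where you acknowledge ``the whole difficulty'' lies, is a plan rather than a proof, and the plan as stated would fail. First, for even $n$ your repair loop of local swaps cannot reach the goal: there are bisections that are locally minimal under all pair swaps yet admit no ignitable vertex, namely those in which one side is a clique of zero-deficiency nodes of degree $n-2$ and the other side contains a degree-$(n-2)$ node of deficiency $-2$. Escaping this configuration is the content of the paper's Lemma~\ref{lem:strongminimal}, whose proof builds an entirely new bisection (choosing a low-degree node and packing its neighbors and half of a matching of companion pairs onto one side) rather than performing any local move; the paper's Section~\ref{sec:algo} stresses that plain local minimality must be strengthened to ``strongly locally minimal'' exactly because of this case. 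Second, the \emph{extremal} even graphs (at least $n-1$ nodes of degree at least $n-2$) do not yield to bisection-deficiency reasoning at all; the paper handles them by a separate explicit construction around the minimum-degree vertex (Proposition~\ref{prop:extremal}), and your degree/parity case analysis would have to rediscover this. Third, your concluding claim that the simultaneously obstructed cases are ``precisely'' \fone, \oftwo, \ftwo, \fthree, \ffour\ is asserted, not argued --- but that identification \emph{is} the theorem: the paper needs the classification of minimum bisections into weak, strong and zero, the three special types \mone, \mtwo, \mthree, and Lemmas~\ref{lem:weak}--\ref{lem:zero} to establish it. So while your scaffolding is sound and matches the paper, the proof of the statement itself is missing.
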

We next give the proof for the simpler case of graphs with an odd number of vertices and postpone the full proof to Section~\ref{sec:even}.
Let us start with the following definitions.
A {\em bisection} $\S=(S,\nS)$ of a graph $G=(V,E)$ with $n$ nodes
is simply a partition of the nodes of $V$ into two sets $S$ and $\nS$ of sizes
$\lceil n/2\rceil$ and $\lfloor n/2\rfloor$, respectively.
We will refer to $S$ and $\nS$ as the {\em sides} of bisection $\S$.
The {\em width} $W(S,\nS)$ of a bisection $\S$ is the number of 
edges of $G$ whose endpoints belong to different sides of the partition.
The \emph{minimum} bisection $\S$ of $G$ has minimum width among
all partitions of $G$.
We extend notation $W(A,B)$ to any pair $(A,B)$ of subsets of nodes of 
$G$ in the obvious way. When $A=\{x\}$ is a singleton we will
write $W(x,B)$ and similarly for $B$.
Thus, if nodes $x$ and $y$ are adjacent, then $W(x,y)=1$; 
otherwise $W(x,y)=0$.
For a bisection $\S=(S,\nS)$, we define
the {\em deficiency $\defi_\S(x)$ of node $x$ w.r.t. 
bisection $\S$} as
$\defi_\S(x)=W(x,S)-W(x,\nS)$ if $x\in S$,
and $\defi_\S(x)=W(x,\nS)-W(x,S)$ if $x\in \nS$.

\begin{lemma}\label{lem:minimal-ICALP}
Let  $\S=(S,\nS)$ be a minimum bisection of a graph $G$ with $n$ nodes.
Then, for every $x\in S$ and $y\in\nS$, 
$\defi_\S(x)+\defi_\S(y)+2W(x,y)\geq 0.$
Moreover if $n$ is odd,
$\defi_\S(x)\geq 0.$
\end{lemma}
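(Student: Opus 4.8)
The plan is to use the fact that a minimum bisection is a local optimum with respect to two natural neighborhood moves, and to read off the two inequalities directly from local optimality. The relevant moves are \emph{swaps} (exchanging a node $x \in S$ with a node $y \in \nS$), which preserve both side sizes and hence produce a valid bisection for any parity of $n$, and \emph{single-node relocations}, which produce a valid bisection only when $n$ is odd, since only then do the two sides differ in size by exactly one.

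For the first inequality, I would consider the bisection $\S'$ obtained from $\S$ by swapping $x \in S$ with $y \in \nS$ and compute $W(\S')-W(\S)$. The only edges whose status can change are those incident to $x$ or $y$, so it suffices to track these. Writing $a = W(x,S)$, $b = W(x,\nS)$, $c = W(y,S)$, $d = W(y,\nS)$ and $e = W(x,y)$, a direct count gives $b + c - e$ crossing edges incident to $\{x,y\}$ before the swap and $a + d + e$ afterwards, where the terms involving $e$ correct for the edge $xy$, which is incident to both endpoints and must not be double-counted. Hence $W(\S')-W(\S) = (a-b)+(d-c)+2e = \defi_\S(x)+\defi_\S(y)+2W(x,y)$, and since $\S$ has minimum width this quantity is nonnegative, which is exactly the claim.

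For the moreover part, I would assume $n$ odd, so $|S| = \lceil n/2\rceil = \lfloor n/2\rfloor + 1 = |\nS|+1$. For $x \in S$, relocating $x$ to $\nS$ yields sides of sizes $\lfloor n/2\rfloor$ and $\lceil n/2\rceil$, i.e.\ again a valid bisection $\S''$. Tracking the edges incident to $x$, its $W(x,S)$ internal edges become crossing while its $W(x,\nS)$ crossing edges become internal, so $W(\S'')-W(\S) = W(x,S)-W(x,\nS) = \defi_\S(x)$; minimality of $\S$ then forces $\defi_\S(x)\geq 0$. Note that this argument applies only to $x$ in the larger side $S$, which is exactly the quantifier in the statement, since relocating a node out of the smaller side $\nS$ would change the two sizes by three and hence not give a valid bisection.

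I expect the only delicate point to be bookkeeping rather than conceptual: the correct treatment of the edge $xy$ in the swap. This edge is crossing both before and after the swap, so it leaves the width unchanged; yet, being a crossing edge, it is subtracted once in $\defi_\S(x)$ and once in $\defi_\S(y)$, contributing $-2$ in total. The term $+2W(x,y)$ in the inequality is precisely what cancels this spurious contribution. Once this accounting is handled carefully, both inequalities are immediate from the minimality of $\S$.
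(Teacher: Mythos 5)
Your proposal is correct and takes essentially the same route as the paper: the first inequality comes from comparing the width of $\S$ with that of the bisection obtained by swapping $x\in S$ and $y\in\nS$ (the paper does the identical computation via the decomposition $W(S,\nS)=W(A,B)+W(x,\nS\setminus\{y\})+W(y,S\setminus\{x\})+W(x,y)$ with $A=S\setminus\{x\}$, $B=\nS\setminus\{y\}$, including the same $+2W(x,y)$ correction for the edge $xy$), and the second from relocating a single node $x$ from the larger side when $n$ is odd. Your accounting of the crossing edges and your remark on why the relocation argument only applies to the larger side are both accurate, so there is nothing to fix.
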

\begin{proof}
Set $A=S\setminus\{x\}$, $B=\nS\setminus\{y\}$, 
$T=A\cup\{y\}$ and $\overline{T}=B\cup\{x\}$.
Note that $W(T,\overline{T}) = W(A,B)+W(x,A)+W(y,B)+W(x,y)$ and
$W(S,\nS) = W(A,B)+W(x,B)+W(y,A)+W(x,y)$.
Then, by minimality,
$
0\leq W(T,\overline{T}) - W(S,\nS) = W(x,A)+W(y,B) - W(x,B)-W(y,A)
  =   W(x,S)-W(x,\nS)+W(y,\nS)-W(y,S)+2W(x,y)
  =   \defi_\S(x)+\defi_\S(y)+2W(x,y).
$
For the second part of the lemma, we consider partition 
$(\nS\cup\{x\},S\setminus\{x\})$.
\end{proof}

We have the following technical lemma.
\begin{lemma}\label{lem:tone-ICALP}
Suppose that a
graph $G$ admits a bisection $\S=(S,\nS)$ in 
which $S$ consists of nodes with non-negative deficiency and
includes at least one node with positive deficiency. Then $G$ is \mbm.
\end{lemma}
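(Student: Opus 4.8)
The plan is to exhibit a single initial profile, built directly from the bisection $\S=(S,\nS)$, together with a short subverting sequence. Let $x^\star\in S$ be a node of positive deficiency, which exists by hypothesis. I would define the truthful profile $\blf$ by setting $\blf(x^\star)=0$, $\blf(x)=1$ for every $x\in S\setminus\{x^\star\}$, and $\blf(y)=0$ for every $y\in\nS$. Then the number of nodes preferring $0$ is $|\nS|+1=\floor{n/2}+1>n/2$, so $0$ is a strict majority in $\blf$, as required for the \mbm\ condition. Because $\defi_\S(x^\star)=W(x^\star,S)-W(x^\star,\nS)>0$, node $x^\star$ has strictly more neighbors in $S$ (all preferring $1$ in $\blf$) than in $\nS$ (all preferring $0$); hence a strict majority of its neighbors prefer $1$ and $x^\star$ is unhappy. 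The first update of the subverting sequence flips $x^\star$ to $1$, reaching the profile $\st_1$ in which every node of $S$ prefers $1$ and every node of $\nS$ prefers $0$, so that the number of nodes preferring $0$ has already dropped to $|\nS|=\floor{n/2}\le n/2$.

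The heart of the argument is the invariant that, from $\st_1$ onward, no node of $S$ ever switches away from $1$. I would prove this by induction on the updates: assuming a reachable profile in which all of $S$ prefers $1$, consider any $x\in S$. All of its $W(x,S)$ neighbors in $S$ prefer $1$, while at most $W(x,\nS)$ of its neighbors prefer $0$. Since $\defi_\S(x)\ge 0$, i.e.\ $W(x,S)\ge W(x,\nS)$, at least as many of $x$'s neighbors prefer $1$ as prefer $0$, so $x$ is not unhappy and will not switch. Thus the set of $S$-nodes preferring $1$ is never diminished by an update, and the invariant is maintained. Consequently every profile reachable from $\st_1$ has at most $|\nS|=\floor{n/2}\le n/2$ nodes preferring $0$.

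Finally, I would invoke the standard monovariant for sequential majority dynamics: whenever an unhappy node switches, the number of monochromatic edges (edges whose endpoints agree) strictly increases, because the switching node goes from disagreeing with a strict majority of its neighbors to agreeing with it, while no other edge changes status. This quantity is a nonnegative integer bounded by $|E|$, so any sequence of updates terminates at a stable profile. Continuing the dynamics arbitrarily from $\st_1$ therefore reaches some stable $\st$, and by the invariant $\st$ has at most $\floor{n/2}\le n/2$ nodes preferring $0$. Hence $\blf$ together with this sequence witnesses that $G$ is \mbm.

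I expect the only delicate point to be the invariant argument, since everything hinges on reading the non-negative-deficiency hypothesis exactly as the statement that each $S$-node weakly prefers its own side; the tie-breaking convention (a node facing an even split stays put) is what upgrades $\defi_\S(x)\ge 0$ to ``not unhappy'' rather than merely ``not strictly unhappy.'' The roles of the two hypotheses are then transparent: non-negativity of all $S$-deficiencies freezes $S$ at $1$, while the single positive-deficiency node supplies the one agent whose flip simultaneously leaves a legitimate initial $0$-majority and constitutes a valid first update.
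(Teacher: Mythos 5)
Your proof is correct and takes essentially the same route as the paper's: the same initial profile (preference $1$ on $S$ minus the positive-deficiency node, $0$ elsewhere), the same single first update, and the same use of non-negative deficiencies to show that no node of $S$ ever becomes unhappy once all of $S$ prefers $1$. Your explicit termination argument via the monochromatic-edge potential is a small addition of rigor that the paper leaves implicit (it only argues that any stable profile reachable from $\st_1$ has at least as many nodes with preference $1$), but it does not change the approach.
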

\begin{proof}
Let $v$ be the node with positive deficiency in $S$ and 
consider profile $\blf$ of initial preferences in which any node in $S$ except $v$
has preference $1$ and remaining nodes have preference $0$.
Hence, in $\blf$ there is a majority of $\lceil n/2\rceil$ agents with preference $0$.
Observe also that in $\blf$,
$v$ is adjacent to $W(v,S)$ nodes with preference $1$ and
to $W(v,\overline{S})$ nodes with preference $0$. 
Since $\defi_{\mathcal{S}}(v)>0$ then $v$ is unhappy with preference $0$ and updates her
preference to $1$.
We thus reach a profile $\st_1$ in which $\lceil n/2\rceil$ nodes
have preference $1$ (that is, all nodes in $S$).
We conclude the proof of the lemma by showing that 
every node of $S$ is not unhappy and thus it stays with preference $1$\footnote{This
is sufficient since the switch of nodes in $\overline{S}$ that are unhappy with preference $0$
only increases the number of nodes with preference $1$.
Moreover, if some nodes in $\overline{S}$ switch their preferences, then
the number of nodes with preference $1$ in the neighborhood of any node in $S$ can only increase.}.
This is obvious for $v$. Let us consider $u\in S$ and $u\ne v$.
Then $u$ has $W(u,S)$ neighbors with preference $1$ and 
$W(u,\overline{S})$ neighbors with preference $0$. 
Since $\defi_{\mathcal{S}}(u)\geq 0$, we have that $W(u,S)\geq W(u,\overline{S})$.
Hence, the number of neighbors of $u$ with preference $0$ is not a majority.
Then, $u$ is not unhappy, and thus stay with preference $1$.
\end{proof}

We are now ready to prove Theorem~\ref{thm:forbidden} for odd-sized graphs. 
We remind the reader that the (more complex) proof for even-size
graphs is in Section~\ref{sec:even}.
\begin{prop}
\label{prop:main}
Non-forbidden graphs with an odd number of nodes are \mbm.
\end{prop}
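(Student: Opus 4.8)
The plan is to prove the contrapositive-style statement directly: given a non-forbidden graph $G$ with an odd number $n$ of nodes, I want to exhibit a bisection $\S=(S,\nS)$ whose side $S$ consists entirely of nodes of non-negative deficiency and contains at least one node of strictly positive deficiency, so that Lemma~\ref{lem:tone-ICALP} immediately yields that $G$ is \mbm. The natural candidate is a minimum bisection, because Lemma~\ref{lem:minimal-ICALP} already guarantees that when $n$ is odd every node (on either side) has $\defi_\S(x)\geq 0$. Thus the non-negativity requirement of Lemma~\ref{lem:tone-ICALP} comes for free from minimality, and the whole difficulty reduces to locating a side containing a node of \emph{strictly} positive deficiency.

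First I would fix a minimum bisection $\S=(S,\nS)$ and suppose, for contradiction, that every node on both sides has deficiency exactly $0$; equivalently $W(x,S)=W(x,\nS)$ for all $x\in S$ and $W(y,\nS)=W(y,S)$ for all $y\in\nS$. I would then argue that this uniform-balance condition is extremely rigid and forces $G$ to be forbidden. The key observation is that $\defi_\S(x)=0$ means each node sees exactly as many neighbors across the cut as on its own side; summing $\defi_\S(x)$ over a side relates $W(S,\nS)$ to the internal edge counts, and having every individual deficiency vanish (not merely the sum) is what pins down the structure. I would also keep the freedom to move a single node across the cut: since $n$ is odd the two sides have sizes $\lceil n/2\rceil$ and $\lfloor n/2\rfloor=\lceil n/2\rceil-1$, so moving a node from the larger side $S$ to $\nS$ produces another legitimate bisection, and comparing widths gives the inequalities underlying Lemma~\ref{lem:minimal-ICALP}.

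The main obstacle, and the technical heart of the argument, is showing that the all-zero-deficiency scenario can only occur for the forbidden graphs \fone\ and \oftwo\ (no edges, or an odd clique). Here I would analyze low-degree nodes: if some node $x$ has degree strictly less than $n-1$, I expect to derive a contradiction with the all-zero hypothesis by a careful local counting argument, possibly after relocating $x$ or one of its non-neighbors to engineer a bisection of strictly smaller width. The intuition is that a node with a non-neighbor creates an imbalance that a minimum bisection would exploit unless the graph is essentially complete; conversely, if every node has degree $n-1$ the graph is the odd clique \oftwo. I anticipate needing to handle separately the degenerate case where $G$ has an isolated or nearly-isolated node (pushing toward \fone) and to treat the parity of degrees carefully, since with $n$ odd a node of degree $n-1$ has one more neighbor on one side than the other, which is precisely what can force a positive deficiency.

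Once the structural dichotomy is established, the proof concludes cleanly: if $G$ is non-forbidden, the all-zero-deficiency assumption is impossible, so some node in the minimum bisection has positive deficiency. Because minimality already guarantees all deficiencies are non-negative when $n$ is odd, at least one of the two sides satisfies both hypotheses of Lemma~\ref{lem:tone-ICALP} (if the positive-deficiency node lies in $\nS$, I would simply relabel the sides, noting that the roles of $S$ and $\nS$ are symmetric in the non-negativity conclusion of Lemma~\ref{lem:minimal-ICALP}). Applying Lemma~\ref{lem:tone-ICALP} to that side then shows $G$ is \mbm, completing the proof for odd-sized graphs.
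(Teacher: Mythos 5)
Your proposal follows the paper's scaffolding (minimum bisection, Lemma~\ref{lem:minimal-ICALP}, Lemma~\ref{lem:tone-ICALP}), but it rests on a misreading of Lemma~\ref{lem:minimal-ICALP} that hides essentially all of the real work. The second conclusion of that lemma, $\defi_\S(x)\geq 0$ for odd $n$, holds only for nodes $x$ of the \emph{larger} side $S$: its proof moves $x$ from $S$ (of size $\lceil n/2\rceil$) into $\nS$, which again yields a bisection, whereas the symmetric move from $\nS$ into $S$ would create sides of sizes $\lceil n/2\rceil+1$ and $\lfloor n/2\rfloor-1$ and is unavailable. So minimality does \emph{not} give non-negative deficiency ``on either side.'' Concretely, let $G$ be $K_5$ together with two isolated vertices ($n=7$, non-forbidden): the unique minimum bisection puts four clique nodes in $S$, and the fifth clique node, which lies in $\nS$, has deficiency $-4$. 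Hence your dichotomy ``either all deficiencies vanish or some node has strictly positive deficiency and Lemma~\ref{lem:tone-ICALP} applies'' is false, and the cases you skip are exactly the ones the paper's proof labors over: all of $S$ has deficiency $0$ while some $v\in\nS$ has $\defi_\S(v)=-2$ (then $v$ is adjacent to all of $S$; one swaps $v$ with a node $u\in S$ non-adjacent to some $w\in S$, so that $\defi_\T(w)\geq 2$ in the new bisection, and when no such pair exists because $S$ is a clique, a counting argument shows $G$ itself is a clique, i.e.\ forbidden), and all of $S$ zero while some $v\in\nS$ has $\defi_\S(v)>0$.

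The second genuine gap is the ``relabel the sides'' step. Lemma~\ref{lem:tone-ICALP} cannot be applied with the smaller side playing the role of $S$: by the paper's convention a bisection lists the side of size $\lceil n/2\rceil$ first, and the lemma's proof ends with exactly the nodes of $S$ holding preference $1$, which for $|S|=(n-1)/2$ is still a minority, so no majority has been subverted. When the positive-deficiency node $v$ lies in $\nS$, the paper does not relabel; it forms the \emph{new} bisection $\T=(\nS\cup\{u\},\,S\setminus\{u\})$ for an arbitrary $u\in S$, checks $\defi_\T(x)\geq\defi_\S(x)\geq 0$ for $x\in\nS$, $\defi_\T(u)=-\defi_\S(u)=0$, and $\defi_\T(v)\geq\defi_\S(v)>0$, and only then invokes Lemma~\ref{lem:tone-ICALP}. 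Similarly, your hoped-for structural claim that an all-zero minimum bisection forces \fone\ or \oftwo\ is not what the paper proves, and you give no argument for it: the all-zero case of a non-empty graph is handled not by contradiction but by swapping an endpoint of a crossing edge across the cut, which produces a deficiency-$2$ node on the larger side of the resulting bisection, again reducing to Lemma~\ref{lem:tone-ICALP}. Without these constructions your argument covers only the easy half of the proof.
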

\begin{proof}
Let $G$ be a non-forbidden graph with an odd number of nodes
and 
let $\S=(S,\nS)$ be a minimum bisection for $G$. 
By Lemma~\ref{lem:minimal-ICALP}, we have
that $\defi_\S(x)\geq 0$, for all $x\in S$.
If $S$ contains at least a node $v$ with $\defi_\S(v)>0$ then, by Lemma~\ref{lem:tone-ICALP}, 
$G$ is $\mbm$. So assume that $\defi_\S(x)=0$ for all $x\in S$.

Lemma~\ref{lem:minimal-ICALP} implies that if $\defi_\S(v)<0$ for 
$v\in\nS$ then $\defi_\S(v)\geq -2$ and $v$ is connected to all 
vertices in $S$. Therefore $W(v,S)=\lceil n/2\rceil$ and, since 
$W(v,\nS)\leq\lfloor n/2\rfloor-1$, we conclude that $\defi_\S(v)=-2$.
We denote by $A$ the set of all the nodes $y\in\nS$ with
$\defi_\mathcal{S}(x)=-2$; therefore, 
all nodes $y\in \overline{S} \setminus A$ have $\defi_\mathcal{S}(y) \geq 0$.

Let us first consider the case in which $A \neq \emptyset$ and 
there are two non-adjacent nodes $u,w\in S$.
Then pick any node $v\in A$ and consider partition 
$\mathcal{T}=(T,\overline{T})$ with $T=S \cup\{v\}\setminus\{u\}$.
We have that $W(v,T)=W(v,S)-1=\lceil n/2\rceil -1$ and $W(v,\nT)=W(v,\nS)+1=\lfloor n/2\rfloor+1$ and hence $\defi_{\mathcal{T}}(v)=0$.
For any $x\in T\setminus\{v, w\}$, 
we have $\defi_{\mathcal{T}}(x)\geq \defi_\mathcal{S}(x)=0$.
Node $w$ is connected to $v$ but not to $u$ and, thus,
$\defi_{\mathcal{T}}(w)\geq \defi_{\mathcal{S}}(w)+2=2$.
Then, by Lemma~\ref{lem:tone-ICALP}, $G$ is \mbm.

Assume now that $A \neq \emptyset$ and $S$ is a clique.
That is, $W(x,S)=\lceil n/2\rceil -1$ for every $x\in S$, and, 
since $\defi_\S(x)=0$, it must be that $W(x,\nS)=W(x,S)$
and thus $x$ is connected to all nodes in $\nS$. 
Therefore, for all $y\in\nS$, $W(y,S)=\lceil n/2\rceil$ and, since 
$\defi_\S(y)\geq -2$ it must be 
that $W(y,\nS)\geq\lceil n/2\rceil-2=|\nS|-1$. 
In other words, every node of $\nS$ is connected to every node of $\nS$
and thus $G$ is a clique.

Finally, assume that $A=\emptyset$;
that is, $\defi_\mathcal{S}(y)\geq 0$ for any $y\in\nS$.
If for some $v \in \overline{S}$, we have $\defi_\S(v)>0$,
then consider partition $\T=(T,\nT)$ 
with $T=\nS\cup\{u\}$,
where $u$ is any node from $S$.
For any $x\in T\cap\nS$, $\defi_\T(x) \geq \defi_\S(x)\geq 0$,
$\defi_\T(u) = - \defi_S(u)=0$and 
$\defi_\T(v) \geq \defi_\S(v)\geq 1$. 
By Lemma~\ref{lem:tone-ICALP}, $G$ is \mbm.

Finally, we consider the case in which $\defi_\S(y)=0$ for every node $x$ of $G$.
Since $G$ is not empty, there exists at least one edge in $G$ and,
since the endpoints of this edge have $\defi_\S=0$ there must be at least
 node $v\in S$ with a neighbor $w\in\nS$.
Now, consider partition $\mathcal{T}=(T,\nT)$ with $T=S\cup\{w\}$.
We have that every node $x \in T\cap S$, has $\defi_\T(x)\geq\defi_\S(x)=0$,
$\defi_\T(w) = - \defi_\S(w) = 0$,
and $\defi_\T(w)>\defi_\S(w)=0$. The claim again follows by Lemma~\ref{lem:tone-ICALP}.
\end{proof}

We note that the only property required for invoking Lemma \ref{lem:minimal-ICALP} is local minimality.
Since a local-search algorithm can compute a locally minimal bisection in polynomial time,
we can make constructive the proof of Proposition~\ref{prop:main},
and quickly compute the subverting minority and the corresponding updates.

\section{Characterization for Even-Sized Graphs}
\label{sec:even}
We partition the set of all graphs with an even number $n$ of nodes 
into two sets:
the {\em extremal} graphs 
(these are graphs with an even number of nodes 
in which at least $n-1$ nodes have degree at least $n-2$), and
the {\em non-extremal} graphs 
(these are graphs with an even number of nodes 
in which at least two nodes have degree less than $n-2$).
Proposition~\ref{prop:extremal} (see Section~\ref{sec:extremal}) 
shows that all extremal graphs are \mbm.
Then,
Proposition~\ref{prop:nonextremal} (see Section~\ref{sec:nonextremal})
proves that all non-forbidden, non-extremal graphs are \mbm.

We observe that in order to prove that a graph $G$ is \mbm\
we show that there exists a 
profile of initial preferences \blf\ with a minority of nodes having preference $1$
and a sequence of updates starting from \blf\ that reach a profile 
$\st$ in which at least $n/2$ nodes have preference $1$.
Then, we show that in $\st$
all nodes with preference $1$ are not unhappy, and thus they stay with preference $1$. 
We remark that 
this does not imply that $\st$ is a stable profile but only that in 
any stable profile that is reachable from $\st$ through a sequence of 
updates the nodes with preference $1$ are at least as many as in 
$\st$. This suffices to prove that $G$ is \mbm.

\subsection{Extremal Graphs}
\label{sec:extremal}
We remind the reader that an extremal graph $G$ is a graph 
with an even number $n$ of nodes and in which at least $n-1$
nodes have degree at least $n-2$.

Let $G$ be an extremal graph and let
$u$ be a node of $G$ of minimum degree.
We partition the set $V\setminus\{u\}$ of nodes into $A\cup B\cup C$ where
\begin{enumerate}
\item $A$ is the set of nodes of $V\setminus\{u\}$ that have degree $n-1$ in $G$; 
thus nodes of $A$ are adjacent to all the nodes $V$.
\item $B$ is the set of nodes of $V\setminus\{u\}$ that have degree $n-2$ in $G$ and 
that are adjacent to $u$;
thus nodes of $B$ are adjacent to all the nodes of $V$ except for one.
\item $C$ is the set of nodes of $V\setminus\{u\}$ that have degree $n-2$ 
and that are not adjacent to $u$;
thus nodes of $C$ are adjacent to all nodes in the graph except for $u$.
\end{enumerate}
We will denote by $\alpha$, $\beta$ and $\gamma$ the sizes of $A$, $B$ and $C$, respectively.
Observe that $\deg(u)=\alpha+\beta$.

Nodes of $B$ can be arranged into pairs of non-adjacent nodes. 
Specifically, let $v$ be a node of $B$ and let $w$ be the unique 
node of the graph that is not adjacent to $v$. 
Clearly, $w\not\in A$ (since nodes in $A$ are adjacent to all nodes) and 
$w\not\in C$ (since nodes of $C$ are adjacent to all nodes except for $u$ and $v\ne u$).
Thus each node $v\in B$ is non-adjacent to exactly one node $w$ of $B$
and we call $v$ and $w$ {\em companion} nodes.
In particular, this implies that $B$ has an even number of nodes and 
we name them 
$b_1,b_2,...,b_{2k}$ in such way that
$b_{2i-1}$ and $b_{2i}$ are companion nodes, for $i=1,\ldots,k$,  

\begin{prop}\label{prop:extremal}
Every non-forbidden extremal graph is \mbm.
\end{prop}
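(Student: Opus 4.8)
The plan is to reason directly about the dynamics rather than to invoke Lemma~\ref{lem:tone-ICALP}. The balanced construction behind that lemma is unavailable here: when $n$ is even, a node of degree $n-1$ has deficiency $-1$ with respect to every bisection, and a node of degree $n-2$ has non-positive deficiency (since $W(x,S)\leq n/2-1\leq W(x,\nS)$), so apart from $u$ no node can have positive deficiency, and the degree-$(n-1)$ nodes of $A$ cannot even sit on the positive side. I would therefore aim for a reached profile $\st$ whose set of ones has size $n/2+1$. The guiding observation is that every node other than $u$ has degree at least $n-2$, so from any profile with a minority of ones (at most $n/2-1$ of them) such a node sees at most $n/2-1<n/2$ ones among its neighbours and is never unhappy; hence the number of ones cannot grow until $u$ is flipped. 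Since the non-forbidden hypothesis excludes \ftwo, we have $\gamma\geq 2$ and thus $\deg(u)=\alpha+\beta\leq n-3$, which is exactly what allows $u$ to move first.

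I would build the initial profile $\blf$ as follows. Assign preference $1$ to exactly $n/2-1$ nodes, chosen so that a strict majority of $u$'s neighbours (the nodes of $A\cup B$) prefer $1$; the budget $n/2-1$ suffices for this precisely because $\gamma\geq 2$, and any leftover ones can be parked on $C$. Then $u$ is unhappy, and flipping it gives $n/2$ ones. At this balanced stage every remaining degree-$(n-2)$ node is at best tied and stays put, so I need a \emph{second} strictly unhappy node to push past $n/2$, and this is where the companion structure and the remaining forbidden cases enter. If $A\neq\emptyset$, I arrange $\blf$ to leave one node $a_0\in A$ with preference $0$ (possible since the exclusion of \fthree\ forces $\alpha+\beta\geq 3$); being adjacent to all nodes, $a_0$ now sees exactly $n/2$ ones and is unhappy, and flipping it yields $n/2+1$ ones. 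If $A=\emptyset$, the exclusion of \fthree\ and \ffour\ forces $\beta\geq 6$, i.e.\ at least $k\geq 3$ companion pairs, so I can cluster the $B$-ones to leave some companion pair entirely $0$ after $u$ flips; either node of that pair then sees all $n/2$ ones among its $n-2$ neighbours and is unhappy, again giving $n/2+1$ ones.

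It then remains to check that this profile $\st$ with $n/2+1$ ones is self-sustaining, i.e.\ no node of preference $1$ is unhappy. A degree-$(n-2)$ one-node has $n/2-[\,\text{its non-neighbour is a one}\,]$ one-neighbours against $n/2-2+[\,\cdots\,]$ zero-neighbours and so is never unhappy; a degree-$(n-1)$ one-node has $n/2$ one-neighbours against $n/2-1$; and $u$ keeps a strict majority of its $A\cup B$ neighbours at $1$ by construction. By the remark opening Section~\ref{sec:even}, every stable profile reachable from $\st$ then has at least $n/2+1$ ones, which proves $G$ is \mbm\ (and in fact with only two updates). The main obstacle, and the reason the even case is genuinely harder than the odd one, is exactly the tie at the balanced stage: a minority cannot even start the dynamics except through the unique low-degree vertex $u$, and getting strictly above $n/2$ requires manufacturing a second strictly-unhappy vertex. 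Producing that vertex is what consumes the non-forbidden hypotheses — $\gamma\geq 2$ to move $u$ at all, and the exclusion of \fthree\ and \ffour\ to guarantee either a spare degree-$(n-1)$ node or an all-zero companion pair — and verifying that a single profile $\blf$ can meet the budget constraint, make $u$ unhappy, and reserve the second trigger, uniformly across all admissible values of $\alpha,\beta,\gamma$, is the delicate bookkeeping I expect to be the crux.
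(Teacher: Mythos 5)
Your proposal is correct and takes essentially the same route as the paper's own proof: the same initial profile (a minority of $n/2-1$ ones placed so that a strict majority of $u$'s neighbours prefer $1$, with a designated second trigger — a degree-$(n-1)$ node left at $0$ when $A\neq\emptyset$, or an all-zero companion pair in $B$ when $A=\emptyset$, exactly the paper's condition on the node $x$), the same two-flip update sequence, and the same final check that no node with preference $1$ is unhappy at the $n/2+1$ stage. One minor attribution slip, not a gap: when $A\neq\emptyset$ and $B\neq\emptyset$, the needed bound $\alpha+\beta\geq 3$ follows from $\alpha\geq 1$ together with $\beta$ being even and positive, rather than from the exclusion of \fthree\ (which only constrains the case where $G\setminus\{u\}$ is a clique, i.e.\ $B=\emptyset$); the conclusion you use is nonetheless true in all cases.
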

\begin{proof}
Let $G$ be a non-forbidden extremal graph with $n$ nodes. 
Consider a profile $\blf$ of initial preferences that satisfies the 
following four properties:
\begin{enumerate}
\item $\blf(v)=1$ for $n/2-1$ nodes;
\item $\blf(u)=0$;
\item exactly $\floor{\deg(u)/2}+1$ nodes adjacent to $u$ have preference $1$;
\item there exists at least one neighbor $x$ of $u$ with $\blf(x)=0$ 
and if $x$ has degree $n-2$ then the one node $y$ 
to which $x$ is not adjacent has preference $\blf(y)=0$.
\end{enumerate}

We next prove that there exists a sequence of updates that,
starting from the truthful profile $\blf$, leads to a stable profile with at least 
$n/2+1$ nodes with preference $1$.
Then we shall prove that, if $G$ is not forbidden, 
a profile $\blf$ of initial preferences satisfying the four properties above always exists.

Let us consider the following sequence of updates.
Since in the truthful profile $\blf$
at least $\floor{\deg(u)/2}+1$ neighbors of $u$ have preference $1$,
$u$ is unhappy and adopts preference $1$. We thus reach a profile
$\st_1$ in which exactly $n/2$ nodes have preference $1$.

Now let us look at node $x$.
If $x$ has degree $n-1$
(that is, $x$ is adjacent to all nodes of the graph),
then $x$ has $n/2$ neighbors with preference $1$ and $n/2-1$ 
neighbors with preference $0$. Therefore, $x$ is unhappy and adopts preference $1$.
Suppose instead that $x$ has degree $n-2$. Then the last property listed above
implies that the $n/2$ nodes with 
preference $1$ are all adjacent to $x$. Therefore, again, $x$ is unhappy and adopts preference $1$.

Now we have reached a profile $\st_2$ in which $n/2+1$ nodes have 
preference $1$
(these are the initial $n/2-1$ nodes with preference $1$ plus $u$ and $x$). 
We show that they are not unhappy and stays with preference $0$.
This is obvious for $u$ and for $x$ and
thus we only need to consider the $n/2-1$ node with initial preference $1$.

Consider a neighbor $v$ of $u$ with degree $n-1$ and $\st_2(v)=1$.
Then $v$ is adjacent to all $n/2$ nodes with preference $1$ 
(all of them except for $v$ itself) and to 
$n/2-1$ nodes with preference $1$. Thus 
$v$ is not unhappy and stays with preference $1$.
Consider a neighbor $v$ of $u$ with degree $n-2$ and $\st_2(v)=1$.
Then $v$ is adjacent to all nodes in the graph except for one;
thus $v$ has at least $n/2-1$ neighbors with preference $1$ and 
at most $n/2-1$ neighbors with preference $0$.
Hence, $v$ is not unhappy and stays with preference $1$.
Finally, let us consider a node $v$ with degree $n-2$ and $\st_2(v)=1$
that is not adjacent to $u$. Then $v$ is adjacent to exactly $n/2-1$ 
neighbors with preference $1$ and $n/2-1$ neighbors with preference $0$.
Hence, $v$ is not unhappy and stays with preference $1$.

\smallskip
Now, we prove that 
if $G$ is not forbidden then a profile of initial preferences satisfying conditions 1-4 
can be constructed.
First of all observe that to satisfy condition $3$ it must be 
that 
$$\floor{\deg(u)/2}+1\leq n/2-1$$
which is always satisfied since the graph is not forbidden 
and thus $\deg(u)\leq n-3$.  Next to verify that condition $4$
can always be satisfied when the graph is not forbidden, 
we consider the following cases:

\begin{description}

\item[$\mathbf{B=\emptyset}$.]

In this case $G\setminus\{u\}$ is a clique with $n-1$ nodes and,
since $G$ is not forbidden, $\alpha=\deg(u)\geq 3$.
Thus we construct $\blf$ by assigning preference $1$ to $\floor{\alpha/2}+1$
nodes from $A$ and to $n/2-\floor{\alpha/2}-2$ nodes from $C$.
We pick $x$ from the  $\ceil{\alpha/2}-1\geq 1$ nodes of $A$ with preference $0$. 

\item[$\mathbf{B\ne\emptyset}$.]

In this case $G\setminus\{u\}$ is not a clique and, 
since $G$ is not forbidden, $\alpha+\beta=\deg(u)\geq 5$.
We distinguish two subcases.

    \begin{description}
            \item[$\mathbf{A\ne\emptyset}$.]
We construct $\blf$ by assigning preference $1$ to $\beta/2+1$ nodes
from $B$ and to $\floor{\alpha/2}$ nodes from $A$.  The remaining nodes
with preference $1$ are taken from $C$. Finally, we pick $x$ from the 
$\ceil{\alpha/2}\geq 1$ nodes of $A$ with preference $0$.
    \item[$\mathbf{A=\emptyset}$.]
    In this case $\beta \geq 6$.
    We construct $\blf$ by assigning preference $1$ to $\beta/2+1$ nodes
    from $B$ and the remaining nodes are taken from $C$.
    Specifically, we assign preference $1$ to nodes $b_1,\ldots,b_{k+1}$
    so that $b_{2k}$ and its companion $b_{2k-1}$ are assigned preference
    $0$. Then we pick $x$ to be $b_{2k}$.
    \end{description}
\end{description}
\end{proof}
We remark that the proof of Proposition~\ref{prop:extremal} immediately gives 
a polynomial-time algorithm for computing an \mbm\ profile of initial preferences for 
non-forbidden extremal graphs.

\subsection{Non-Extremal Graphs}
\label{sec:nonextremal}
In this section we consider {\em non-extremal} graphs. 
These are graphs with an even number $n$ of nodes in which at least
two nodes have degree less than $n-2$.
In this section we prove the following proposition.
\begin{prop}\label{prop:nonextremal}
Every non-forbidden, non-extremal graph is \mbm.
\end{prop}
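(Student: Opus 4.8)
The plan is to reuse the machinery of the odd case: I will exhibit a bisection one of whose sides consists entirely of nodes of non-negative deficiency and contains at least one node of positive deficiency, and then invoke Lemma~\ref{lem:tone-ICALP}. Since $G$ is non-extremal it has at least two nodes of degree at most $n-3$, and since we are in the even, non-extremal case the only forbidden condition that could apply is \fone\ (the conditions \ftwo, \fthree, \ffour\ all force extremal graphs, and \oftwo\ concerns odd $n$); hence $G$ has at least one edge. I would start from a minimum bisection $\S=(S,\nS)$ and exploit the pairwise inequality $\defi_\S(x)+\defi_\S(y)+2W(x,y)\geq 0$ of Lemma~\ref{lem:minimal-ICALP}. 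The essential difference from Proposition~\ref{prop:main} is that, for even $n$, Lemma~\ref{lem:minimal-ICALP} no longer guarantees that some side has all deficiencies non-negative; so the bulk of the work is to understand, and then to repair, the nodes of negative deficiency, always using \emph{swaps} (moving one node out and one node in) to respect the even bisection sizes.

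First I would extract structure from the pairwise inequality. Writing $S^{-}$ and $\nS^{-}$ for the sets of negative-deficiency nodes on the two sides, the inequality forces every negative deficiency to be at least $-2$, and it forces any $x\in S^{-}$ and $y\in\nS^{-}$ to be adjacent with $\defi_\S(x)=\defi_\S(y)=-1$ (in particular such nodes have odd degree), while a node of deficiency $-2$ must be joined to the whole opposite side. This splits the analysis into two regimes: either at least one side has no negative-deficiency node, or both $S^{-}$ and $\nS^{-}$ are non-empty and the bipartite graph between them is complete with all deficiencies equal to $-1$.

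In the first regime I relabel so that $\nS^{-}=\emptyset$. If some node of $\nS$ has positive deficiency I am done by Lemma~\ref{lem:tone-ICALP} with $\nS$ as the privileged side. If instead every deficiency on that side is zero, I use a size-preserving swap to create a positive-deficiency node while keeping the side non-negative: since $G$ has an edge there is a cross edge, and moving its endpoint $v\in S$ into $\nS$ (bringing back a suitably chosen node to preserve balance) leaves $\defi(v)=0$ but raises the deficiency of its former neighbour $w\in\nS$ by $2$. The verification that no other node of $\nS$ drops below zero is exactly where the non-forbidden hypothesis enters: the obstructions to such a swap are the dense, almost-clique incidences excluded by \ftwo--\ffour, which are in any case extremal and already handled in Proposition~\ref{prop:extremal}. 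This mirrors the final paragraphs of Proposition~\ref{prop:main}, with single-node moves replaced by swaps.

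The main obstacle is the second regime, where $S^{-}$ and $\nS^{-}$ are both non-empty. Here the rigid complete-bipartite structure between the negative classes is width-neutral under swaps — swapping a node of $S^{-}$ with an adjacent node of $\nS^{-}$ merely reproduces a minimum bisection with the same deficiency pattern — so no single swap makes progress, and this is where I expect the proof to become delicate. My plan is to break the symmetry by selecting, among all minimum bisections, one that additionally minimizes the total number of negative-deficiency nodes, and then to bring in the two guaranteed low-degree nodes. A deficiency-$(-1)$ node is nearly balanced across the cut, whereas a node of degree at most $n-3$ has at least two non-neighbours; I would argue that relocating such a low-degree node together with a compensating node either strictly decreases the number of negative-deficiency nodes (contradicting the secondary minimality) or yields an all-non-negative side containing a positive-deficiency node, unless the forced incidences produce one of the extremal/forbidden patterns, which are excluded. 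Turning this last step into a clean, exhaustive case analysis is the technically demanding core of the argument.
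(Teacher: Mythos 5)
Your strategy has a fatal structural gap: every branch of it funnels into Lemma~\ref{lem:tone-ICALP}, i.e.\ you must ultimately exhibit a bisection one side of which has all deficiencies non-negative and contains a node of positive deficiency. For even $n$ such a bisection need not exist at all, even for non-forbidden, non-extremal graphs. Take $G=K_{4,4}$ with parts $A,B$: it has an edge, and every node has degree $4=n-4<n-2$, so it is non-forbidden and non-extremal. A bisection places $k$ nodes of $A$ and $4-k$ nodes of $B$ in $S$. For $k=2$ every node of the graph has deficiency $0$, so no positive-deficiency node exists; for $k=3$ (and symmetrically $k=1$) each side consists of three nodes of deficiency $-2$ and one of deficiency $+2$; for $k=4$ (and $k=0$) all deficiencies are $-4$. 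So \emph{no} bisection of $K_{4,4}$ satisfies the hypothesis of Lemma~\ref{lem:tone-ICALP}, and no secondary minimization or swapping scheme can produce one. Your regime-1 repair also fails concretely here: swapping $v\in S$ with $u\in\nS$ changes the deficiency of every other node $y$ on the target side by $2W(y,v)-2W(y,u)$, so zero-deficiency nodes adjacent to the incoming node $u$ but not to $v$ drop to $-2$; starting from the all-zero bisection of $K_{4,4}$, every such swap does exactly this. Since $K_{4,4}$ is sparse and bipartite, the obstruction has nothing to do with the ``dense, almost-clique incidences excluded by \ftwo--\ffour'' where you claim the non-forbidden hypothesis enters.

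This is precisely why the paper's proof does not reuse the odd-case machinery alone but builds genuinely new profile constructions for even $n$: the \mtwo\ and \mthree\ bisections (Definitions~\ref{def:mtwo} and~\ref{def:mthree}, Lemmas~\ref{lem:mtwo} and~\ref{lem:mthree}), in which the initial preference-$1$ set is \emph{not} one side minus a single node, but instead contains a node $z$ of the opposite side and omits two non-adjacent neighbors $u,v$ of $z$ from $S$; then \emph{two} nodes flip, ending with $n/2+1$ ones whose stability is argued using the pairwise inequality of Lemma~\ref{lem:minimal-ICALP}. ($K_{4,4}$ is handled exactly this way: its zero bisection is \mtwo.) The paper additionally needs Lemma~\ref{lem:strongminimal} --- a substantial argument showing that a bisection with a degree-$(n-2)$, deficiency-$(-2)$ node opposite a clique of zero-deficiency nodes cannot be minimum --- to eliminate the one configuration where no \mtwo\ structure is available; your sketch has no counterpart to either ingredient. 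Two smaller inaccuracies: a lone negative-deficiency node can have deficiency below $-2$ (Lemma~\ref{lem:minimal-ICALP} then forces the entire opposite side to be positive, which is the easy \mone\ case the paper treats in Lemma~\ref{lem:weak}), and a deficiency-$(-2)$ node is only forced to be adjacent to opposite-side nodes of deficiency at most $1$, not to the whole opposite side.
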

The proof of Proposition~\ref{prop:nonextremal} relies
on graph bisections that satisfy particular properties. 
The most important among these properties is that the width 
of these bisections is the local minimum with respect to a 
specific neighborhood function between bisections. This 
neighborhood function is computable in polynomial-time:
this gives rise to a polynomial-time local-search algorithm 
that, given a non-forbidden non-extremal graph $G$, computes 
a locally minimum bisection.
However, for sake of readability,
we will ignore this issue throughout this section
and instead present 
a non-constructive proof using (globally) minimum bisections.
We identify three {\em special} types of minimum bisections:
\mtwo, \mone, and \mthree\ 
(see Definitions~\ref{def:mtwo},~\ref{def:mone} and~\ref{def:mthree}).
Then, we prove 
(see Lemmas~\ref{lem:mtwo},~\ref{lem:mone} and~\ref{lem:mthree})
that if a graph $G$ has a minimum bisection that is special, then it is \mbm\ (i.e., there exists an \mbm\ profile of initial preferences for $G$ that is constructed using the special bisection).
Finally, we show that all non-forbidden graphs admit a 
minimum bisection that is special. We do so by 
partitioning the set of minimum bisections into three classes:
{\em weak}, {\em strong}, and 
{\em zero};
then we prove the claim for each of the three classes 
(see Lemmas~\ref{lem:weak},~\ref{lem:strong} and~\ref{lem:zero}).
Finally
we discuss how the assumption
of minimum bisections can be weakened and how local-search can be
used to compute the \mbm\ profile of initial preferences.

\subsubsection{Bisections and Deficiency}
\label{sec:bisection}
We remind the reader that a
{\em bisection} $\S=(S,\nS)$ of a graph $G=(V,E)$ with an even number $n$ 
nodes is simply a partition of the nodes of $V$ into two sets $S$ and $\nS$ 
each of size $n/2$. 
We also recall that, 
given a bisection $\S=(S,\nS)$, we define
the {\em deficiency $\defi_\S(x)$ of node $x$ with respect to 
bisection $\S$} as
$$\defi_\S(x)= \begin{cases}
                W(x,S)-W(x,\nS), & \text{if } x\in S;\\
                W(x,\nS)-W(x,S), & \text{if } x\in \nS.
               \end{cases}
$$

%
For non-extremal graphs the following property of minimum bisections will be useful.
\begin{lemma}\label{lem:strongminimal}
Let $\S=(S,\nS)$ be a bisection of a non-extremal graph $G$. 
If one side of $\S$ has a node $z$ with degree $n-2$ and $\defi_\S(z)=-2$ and 
the other side is a clique with all nodes $x$ having 
$\defi_\S(x)=0$, then $\S$ is not a minimum bisection.
\end{lemma}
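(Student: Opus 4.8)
The plan is to argue by contradiction: assume $\S$ is a minimum bisection and derive a bisection of strictly smaller width. First I would make the hypothesis explicit as a rigid structure. Writing $\nS=\{z\}\cup R$, the condition $\defi_\S(x)=0$ on the clique side together with $S$ being a clique gives $W(x,S)=W(x,\nS)=\frac n2-1$, so every $x\in S$ has degree $n-2$ and exactly one non-neighbor, lying in $\nS$; and $\defi_\S(z)=-2$ with $\deg(z)=n-2$ forces $W(z,S)=\frac n2$ and $W(z,\nS)=\frac n2-2$, i.e. $z$ is adjacent to all of $S$ and misses exactly one vertex $z^\star\in R$. Hence $K:=S\cup\{z\}$ is a clique on $\frac n2+1$ vertices, each of degree exactly $n-2$ with a unique non-neighbor in $R$. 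Since $G$ is non-extremal it has at least two vertices of degree $<n-2$, and as all of $K$ has degree exactly $n-2$, these lie in $R$; fix two such low-degree vertices $p_1,p_2$.

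Next I would extract the consequences of minimality through the swap identity in the proof of Lemma~\ref{lem:minimal-ICALP}: exchanging $x\in S$ with $y\in\nS$ changes the width by $\defi_\S(x)+\defi_\S(y)+2W(x,y)=\defi_\S(y)+2W(x,y)$, which must be $\ge 0$. Taking $x$ to be a non-neighbor of $y$ (when one exists) gives $\defi_\S(y)\ge 0$, while if $y$ is adjacent to all of $S$ then $\defi_\S(y)=\deg(y)-n\ge-2$. In particular a low-degree vertex cannot be adjacent to all of $S$ (that would give $\defi_\S\le-3$), so each low-degree vertex has a non-neighbor in $S$ and therefore non-negative deficiency; writing $b$ for the number of non-neighbors of $p_1$ in $S$, the inequality $\defi_\S(p_1)=\deg(p_1)-2(\tfrac n2-b)\ge0$ yields $b\ge 2$. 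Let $A$ be the set of vertices of $\nS$ with $\defi_\S=-2$; each such vertex is adjacent to all of $S$ and has degree $n-2$, and $z\in A$.

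The contradiction would come from a single simultaneous move larger than a swap. Let $X\subseteq S$ be the $b$ non-neighbors of $p_1$ and set $P=\{p_1\}\cup A'$, where $A'\subseteq A$ has size $b-1$; moving $P$ into $S$ and $X$ out keeps both sides of size $\frac n2$. Using the general exchange identity (the change equals the sum of deficiencies of the moved vertices, plus twice the edges between the two moved groups, minus twice the edges internal to each group), and the facts that every vertex of $A'$ is adjacent to all of $S\supseteq X$ and that $X$ avoids $p_1$, the width change collapses to $\deg(p_1)-n+2+b(b-1)-2e(A')-2W(p_1,A')$, where $e(A')$ counts the edges inside $A'$. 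If $A'$ is chosen so that $\{p_1\}\cup A'$ is a clique, then $2e(A')+2W(p_1,A')=b(b-1)$ and the change becomes $\deg(p_1)-n+2\le-1<0$, contradicting minimality.

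The heart of the argument, and the step I expect to be the main obstacle, is therefore the selection of $A'$: one must exhibit $b-1$ deficiency-$(-2)$ vertices that are pairwise adjacent and all adjacent to $p_1$, i.e. extend $p_1$ to a clique of size $b$ inside $\{p_1\}\cup A$. This is exactly where non-extremality is indispensable: the presence of the \emph{second} low-degree vertex $p_2$ rules out the degenerate configurations (those with a single low-degree vertex, which are precisely the extremal borderline cases) in which $A$ is too small or too sparse to contain such a clique and in which $\S$ can genuinely be minimum. I would establish the selection by a counting argument on the unique-non-neighbor structure of $K$ restricted to $R$; equivalently, one can pass to the complement $\overline G$, where $S$ is an independent set of pendant vertices and minimizing the bisection width becomes maximizing a balanced cut, so that the task reduces to finding a balanced vertex set of $\overline G$ separating at least one complement edge, and here the two degree-$\ge2$ complement vertices coming from $p_1,p_2$ are what guarantee success.
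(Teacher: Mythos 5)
Your setup is sound, and so is your arithmetic up to the point you yourself flag: the structural consequences of the hypothesis, the pairwise-swap inequalities $\defi_\S(y)+2W(x,y)\ge 0$, the bound $b\ge 2$, the group-exchange identity, and the conclusion that exchanging $X$ for $\{p_1\}\cup A'$ improves the width by $\deg(p_1)-n+2<0$ whenever $A'\subseteq A$ has size $b-1$ and $\{p_1\}\cup A'$ is a clique. The gap is precisely the step you call the main obstacle, and it cannot be closed the way you suggest: non-extremality does \emph{not} guarantee that $A$ contains $b-1$ vertices forming a clique with $p_1$. Concretely, take $n=12$, let $S=\{x_1,\dots,x_6\}$ be a clique and $\nS=\{z,p_1,p_2,d_1,d_2,d_3\}$, where $d_1,d_2,d_3$ are adjacent to all other vertices, $z$ is adjacent to everything except $p_1$, $p_1$ is adjacent exactly to $x_4,x_5,x_6,d_1,d_2,d_3$, and $p_2$ is adjacent exactly to $x_1,x_2,x_3,z,d_1,d_2,d_3$. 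Every $x\in S$ then has degree $n-2$ and $\defi_\S(x)=0$, while $z$ has degree $n-2$ and $\defi_\S(z)=-2$, and the graph is non-extremal ($\deg(p_1)=6$, $\deg(p_2)=7$), so the lemma's hypotheses hold. Moreover, all the necessary conditions you derive from assumed minimality are satisfied: $\defi_\S(p_1)=0$, $\defi_\S(p_2)=1$, and $\defi_\S(d_i)=-1$ with each $d_i$ adjacent to all of $S$, so your argument reaches no contradiction from them. Yet $A=\{z\}$ is a single vertex that is not even adjacent to $p_1$; since both $p_1$ and $p_2$ have $b=3$ non-neighbors in $S$, no admissible $A'$ of size $2$ exists for either low-degree vertex, and your proof stalls.

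The lemma is still true for this graph, as it must be: the bisection with side $\{p_1,x_4,x_5,x_6,d_1,d_2\}$ has width $28<30=W(S,\nS)$. But note that this improving exchange uses the two deficiency-$(-1)$ vertices $d_1,d_2$ as helpers, not deficiency-$(-2)$ vertices; if you redo your computation with helpers of deficiency $-1$, the width change becomes $\deg(p_1)-n+b+1$ (equivalently $-(m+j)$, with $m$ the number of non-neighbors of $p_1$ inside $\nS$ and $j$ the number of deficiency-$(-2)$ helpers), which is no longer automatically negative, and the existence of enough mutually adjacent helpers adjacent to $p_1$ is again not guaranteed by anything you have derived. This is why the paper's proof abandons local exchanges around $\S$ altogether: it builds a fresh clique side of size $n/2$ around a low-degree vertex $u$ --- namely $u$, its neighbors among the degree-$(n-2)$ vertices whose unique non-neighbor is low-degree, half of the companion-paired degree-$(n-2)$ vertices, padded with degree-$(n-1)$ vertices --- and its technical core is a global counting argument (the quadratic $f(\alpha)$) showing that some low-degree $u$ has sufficiently many such neighbors. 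Your proposal contains no analogue of that counting step, and the example above shows it cannot be replaced by the clique-selection claim you make.
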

\begin{proof}
Without loss of generality assume that $z \in S$ and $\nS$ is a clique. 
Since every $y \in \nS$ has $\defi_\S(y) = 0$, then $y$ has 
$n/2-1$ neighbors in $\nS$ and $n/2-1$ neighbors in $S$. 
Therefore, all nodes in $\nS$ have degree $n-2$ and  
\begin{equation}\label{eq:cutsize}
W(S, \nS) = \frac n 2 \left( \frac n 2 - 1\right).
\end{equation}
In the rest of the proof we will show that it is possible to construct a new bisection $\S'=(S', \nSp)$ such that 
$W(S', \nSp) < W(S, \nS)$.

We partition nodes of $G$ in four sets: $A$ contains nodes of degree less than $n-2$ and,
since $G$ is not extremal, it has size $\alpha \geq 2$;
$B$, whose size is denoted by $\beta$, contains nodes of degree $n-2$
whose companion (their unique non-adjacent node) has degree less than $n-2$ (and thus it belongs to $A$); 
$C$, whose size is denoted by $\gamma$, contains nodes of degree $n-2$ whose companion has degree $n-2$ (and thus it belongs to $C$); 
$D$ contains the nodes of degree greater than $n-2$ and it has size $\delta$.

By the previous observations we have that all nodes in $A$ and 
$D$ belong to $S$. Moreover, since $\nS$ is a clique, a node $x \in C$ and her companion cannot both be in $\nS$ and thus at least half of the nodes in $C$ belong to $S$. Finally, either $z\in B$
or $z\in C$, but in this last case its companion should be in $S$ too, otherwise $\defi_\S(z) = 0$. Thus, since $|S| = n/2$, we have that
\begin{equation}\label{eq:Ssize}
\frac n 2 \geq \alpha+\delta+\frac{\gamma}{2}+1.
\end{equation}

Assume now there exists a node $u\in A$ such that $W(u, B) \geq (n/2-\gamma/2-\delta-1)$
(we will next prove that such a node always exists)
and set $k=|W(u, B)|$.
Consider now the bisection $\S'=(S', \nSp)$ where $S'$ contains node $u$, $\min\{k, n/2 - \gamma/2 -1\}$ nodes of $B$ chosen among the neighbors of $u$, $\gamma/2$ nodes of $C$ chosen in such a way that they form a clique, and $\max\{0, n/2 - \gamma/2 - k - 1\}$ nodes in $D$.
Observe that such a bisection is well defined.
Indeed, by hypothesis, there are enough neighbors of $u$ in $B$,
$C$ contains a clique of $\gamma/2$ nodes
and $\delta \geq n/2 - \gamma/2 - k - 1$ (since $k \geq n/2-\gamma/2-\delta-1$).
Moreover, by construction, $S'$ is a clique and thus each of its nodes has $n/2-1$ neighbors in $S'$.

Let us now compute $W(S', \nSp)$. We distinguish two cases, depending on the value of $k$. 

If $k \geq n/2 - \gamma/2 -1$, then $S'$ contains node $u$,
$n/2 - \gamma/2 -1$ nodes of $B$, $\gamma/2$ nodes of $C$ and no node of $D$.
Since nodes in $B$ and $C$ have $n/2-1$ neighbors in $\nSp$ and $u$ has $\deg(u)-(n/2 -1)$ neighbors in $\nSp$ we have that
$$
W(S', \nSp) = \deg(u) - \left(\frac n 2 - 1\right) + \left(\frac n 2 - 1\right)^2 = \frac n 2 \left(\frac n 2 - 1\right) + \deg(u) - (n-2).
$$
Since $u$ has degree less than $n-2$ then we can conclude that
$W(S', \nSp) \leq \frac n 2 \left(\frac n 2 - 1\right) - 1$ and 
$\S'$ has width strictly smaller than $\S$.

If $k < n/2 - \gamma/2 -1$, then $S'$ contains node $u$, $k$ nodes of $B$, $\gamma/2$ nodes of $C$ and $n/2 - \gamma/2 - k - 1$ nodes of $D$.
Since nodes in $B$ and $C$ have $n/2-1$ neighbors in $\nSp$, nodes in $D$ have $n/2$ neighbors in $\nSp$
and $u$ has $\deg(u)-(n/2 -1)$ neighbors in $\nSp$ we have that
\begin{align*}
W(S', \nSp) & = \deg(u) - \left(\frac n 2 - 1\right) + \left( k+\frac{\gamma}{2} \right) \left(\frac n 2 - 1\right) + \frac n 2 \left( \frac n 2 - \frac{\gamma}{2} - k - 1\right) \\
   & \leq \frac n 2 \left(\frac n 2 - 1\right) + \frac{\gamma}{2} + \delta + \alpha - \frac n 2
\end{align*}
where the inequality holds since $\deg(u) \leq \gamma + k + \delta + \alpha - 1$.
From \eqref{eq:Ssize} we have that $W(S', \nSp) \leq \frac n 2 \left(\frac n 2 - 1\right) - 1$ and 
$\S'$ has width strictly smaller than $\S$.

To conclude the proof it remains to prove that a 
node $u\in A$ such that $W(u, B) \geq (n/2-\gamma/2-\delta-1)$ always exists.
Assume for sake of contradiction, that all nodes in $A$ have less than  $(n/2-\gamma/2-\delta-1)$ neighbors in $B$. 
Then,
$$
W(A, B) < \alpha \left(\frac n 2 - \frac{\gamma}{2} - \delta - 1\right).
$$
On the other hand, by definition, a node in $B$ is adjacent to all the nodes in the 
graph except for her companion that belongs to $A$. Thus, 
$$
W(A, B) = \beta(\alpha-1) = (n-\alpha-\gamma-\delta)(\alpha-1).
$$
Hence, we have that
\begin{equation}
\label{eq:condition_alpha}
 \left(\alpha-1\right) \left(n-\alpha-\gamma-\delta\right) < \alpha \left( \frac n 2 - \frac{\gamma}{2} - \delta - 1\right).
\end{equation}
Let $f(\alpha) = \alpha^2 - \alpha \left( \frac n 2 - \frac{\gamma}{2} + 2 \right) + \left(n - \delta - \gamma \right)$.
By simple algebraic manipulations, we can see that \eqref{eq:condition_alpha} is satisfied
if only if $f(\alpha) > 0$, where $\alpha$ is the size of the set $A$ and thus can only assume values in $\{2, \ldots, n/2-\gamma/2-\delta-1\}$.
We next show that $f(\alpha) \leq 0$ for any admissible $\alpha$, by reaching in this way a contradiction.

Indeed, it is easy to see that the function $f(\alpha)$ is increasing for $\alpha \geq n/4 - \gamma/4 + 1$.
Thus, it has its local maximum in the extremes of the domain.
But, we can easily see that $f(2) = 4 - 2 \left( \frac n 2 - \frac{\gamma}{2} + 2 \right) + \left(n - \delta - \gamma \right) = -\delta \leq 0$
and
\begin{align*}
f\left(\frac n 2 - \frac{\gamma}{2}-\delta-1\right) & = \left(\frac n 2 - \frac{\gamma}{2} - \delta -1\right)^2\\
 & \qquad - \left(\frac n 2 - \frac{\gamma}{2} - \delta-1\right) \left( \frac n 2 - \frac{\gamma}{2} + 2 \right) + \left(n - \delta - \gamma \right) \\
  & = -(\delta+1)\left(\frac n 2 - \frac{\gamma}{2} - \delta - 3\right) - \delta\\
  & = -(\delta+1) (\alpha -2) - \delta \leq 0,
\end{align*}
where the inequality follows since $\alpha \geq 2$ and $\delta \geq 0$.
\end{proof}

\subsubsection{Special Bisections}\label{sec:special}
We define three types of {\em special} minimum bisections and,
for each of them, we prove that it constitutes a 
sufficient condition for a non-forbidden graph to be $\mbm$.

\begin{definition}\label{def:mtwo}
An \mtwo\ bisection $\S=(S,\nS)$ is a minimum bisection in which
one side of the bisection has a node $z$ with $\defi_\S(z)\leq 0$
and the other side has all nodes $x$ with $\defi_\S(x)\geq -1$ and 
two non-adjacent nodes $u$ and $v$ that are both adjacent to $z$. 
\end{definition}

\begin{lemma}\label{lem:mtwo}
Every non-forbidden, non-extremal graph that has an \mtwo\ bisection is \mbm.
\end{lemma}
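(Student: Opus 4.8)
The plan is to exhibit an explicit \mbm\ profile built directly from the \mtwo\ bisection, aiming for the final configuration in which exactly the set $P:=S\cup\{z\}$ (of size $n/2+1$) has preference $1$. Label the two sides so that $z\in\nS$ with $\defi_\S(z)\le 0$, while $S$ contains only nodes of deficiency at least $-1$ together with the two non-adjacent nodes $u,v$, both adjacent to $z$. The first thing I would record is the single consequence of minimality that makes everything work: applying Lemma~\ref{lem:minimal-ICALP} with $y=z$, every $x\in S$ with $\defi_\S(x)=-1$ must satisfy $x\sim z$, since $-1+\defi_\S(z)+2W(x,z)\ge 0$ together with $\defi_\S(z)\le 0$ forces $W(x,z)\ge 1$. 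This is what will keep the ``deficient'' nodes of $S$ happy once $z$ joins them.

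Next I would set the initial profile $\blf$ to assign preference $1$ to exactly $(S\setminus\{u,v\})\cup\{z\}$ and $0$ to everyone else. This is a minority of $n/2-1$ nodes with preference $1$, so $0$ is a strict majority. The subverting sequence is just two updates. First flip $u$: its neighbors of preference $1$ number $W(u,S)+1$ (its $S$-neighbors, none equal to $v$ since $u\not\sim v$, plus $z$), while its $0$-neighbors number $W(u,\nS)-1$; because $\defi_\S(u)=W(u,S)-W(u,\nS)\ge -1>-2$ we get $W(u,S)+1>W(u,\nS)-1$, so $u$ is unhappy and switches. The identical computation, now with the $1$-set $(S\setminus\{v\})\cup\{z\}$, shows $v$ is unhappy (here using only $v\sim z$ and $\defi_\S(v)\ge -1$), so it switches too, reaching the profile in which $P$ is exactly the $1$-set.

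It then remains to check that every node of $P$ is not unhappy; recalling that to establish \mbm\ it suffices to reach a profile with at least $n/2$ nodes of preference $1$, all of which stay, this finishes the argument. For $x\in S$ the happiness condition is $\defi_\S(x)+2W(x,z)\ge 0$, which is immediate when $\defi_\S(x)\ge 0$ and, when $\defi_\S(x)=-1$, follows from $x\sim z$ established above. For $z$ itself the condition reduces to $W(z,S)\ge W(z,\nS)$, i.e.\ $\defi_\S(z)\le 0$, which holds by hypothesis. Hence all $n/2+1$ nodes of $P$ remain with preference $1$ under any continuation, and $G$ is \mbm.

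I expect the only delicate point to be the triggering of the two updates, and this is exactly where the non-adjacency of $u$ and $v$ is essential: it ensures that dropping \emph{both} of them from the initial $1$-side does not cost $u$ a neighbor, so that the available bound $\defi_\S(u)\ge -1$ (rather than the unavailable $\defi_\S(u)>0$) already makes $u$ unhappy. The node $z$ plays a dual role: its adjacency to the deficiency-$(-1)$ nodes (forced by minimality) repairs their happiness, while $\defi_\S(z)\le 0$ secures its own. I do not expect to need the non-forbidden or non-extremal hypotheses beyond guaranteeing the existence of the \mtwo\ bisection, since the construction uses only its structural guarantees and the minimality of $\S$.
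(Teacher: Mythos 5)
Your proof is correct and follows essentially the same route as the paper's: the identical initial profile (preference $1$ on $(S\setminus\{u,v\})\cup\{z\}$), the same two updates by $u$ and then $v$, and the same use of Lemma~\ref{lem:minimal-ICALP} to force adjacency of deficiency-$(-1)$ nodes of $S$ to $z$ in the final happiness check. Your unified restatement of that check as $\defi_\S(x)+2W(x,z)\geq 0$ is just a cleaner packaging of the paper's case analysis, and your closing observation that the non-forbidden/non-extremal hypotheses are not otherwise used matches the paper's proof as well.
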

\begin{proof}
Let $G$ be a non-forbidden, non-extremal graph with $n$ nodes and 
let $\S=(S,\nS)$ be an \mtwo\ bisection of $G$. 
Assume, without loss of generality, that $z\in\nS$ and $u,v\in S$.
Consider the following profile $\blf$ of initial preferences: 
$\blf$ assigns preference 
$1$ to $z$ and to all nodes of $S$ except for $u$ and $v$;
$\blf$ assigns preference $0$ to $u$ and $v$ and to all nodes
of $\nS$ except for $z$.
It is easy to verify that $\blf$ assigns preference $1$ to 
exactly $n/2-1$ nodes.
In the truthful profile $\blf$,
$u$ is unhappy and adopts preference $1$.
In fact, since $u$ is adjacent to $z$ and not adjacent to $v$, 
then it has $W(u,\nS)-1$ neighbors with preference 0 and $W(u,S)+1$
with preference 1. The claim follows from $\defi_\S(u)\geq -1$.
Similarly, $v$ is unhappy and adopts preference $1$.
We have thus reached a profile $\st_1$ in which there are $n/2+1$ nodes
with preference $1$ (the initial $n/2-1$ plus $u$ and $v$)
and $n/2-1$ nodes with preference $0$.
We complete the proof by showing that in $\st_1$
every node with preference $1$ is not unhappy and thus stays with preference $1$. 
This is obvious for $u$ and $v$.
Node $z$ has preference $\st_1(z)=1$ and it is adjacent
to $W(z,S)$ nodes with preference $1$ and $W(z,\nS)$ nodes with preference $0$. 
Since, $\defi_\S(z)\leq 0$, $z$ is not unhappy and stays with preference 1.
Finally, let us consider a generic node $y\in S\setminus\{u, v\}$. 
If $\defi_\S(y)=-1$ then, by Lemma~\ref{lem:minimal-ICALP}, $y$ and $z$ are adjacent and thus
$y$ has $W(y,\nS)-1$ neighbors with 
preference $0$ and $W(y,S)+1$ nodes with preference 1.
Moreover, since $\defi_\S(y)=-1$ we have that
$$W(y,S)+1\geq W(y,\nS) > W(y,\nS)-1.$$
If $\defi_\S(y)\geq 0$, instead, 
$$W(y, S) + 1 \geq W(y, \nS) + 1 > W(y, \nS)-1.$$
In both the cases $y$ is not unhappy and stays with preference 1.
\end{proof}

\begin{definition}\label{def:mone}
An \mone\ bisection $\S=(S,\nS)$ is a minimum bisection
in which one side of the bisection has all nodes $x$ with 
$\defi_\S(x)\geq -1$ and at least one node $u$ with $\defi_\S(u)>0$.
\end{definition}

The following lemma will be very useful in 
proving that non-extremal non-forbidden graphs with an \mone\ bisection 
are \mbm.

\begin{lemma}\label{lem:tone}
Suppose that a non-extremal graph $G$ with $n$ nodes 
admits a bisection $\S=(S,\nS)$ in 
which $S$ consists of nodes with non-negative deficiency and
includes at least one of positive deficiency. Then $G$ is \mbm.
\end{lemma}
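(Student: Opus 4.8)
The plan is to reuse the argument of Lemma~\ref{lem:tone-ICALP} almost verbatim, paying attention only to the vertex counts, which now differ because $|S|=|\nS|=n/2$. First I would fix a node $v\in S$ with $\defi_\S(v)>0$ (which exists by hypothesis) and define the truthful profile $\blf$ by setting $\blf(x)=1$ for every $x\in S\setminus\{v\}$ and $\blf(x)=0$ for $v$ and for every node of $\nS$. This assigns preference $1$ to exactly $n/2-1$ nodes and preference $0$ to the remaining $n/2+1$ nodes (namely $v$ together with all of $\nS$), so preference $0$ is a strict majority, as the definition of \mbm\ requires.

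Next I would run the update on $v$: in $\blf$ node $v$ has $W(v,S)$ neighbors with preference $1$ and $W(v,\nS)$ neighbors with preference $0$, and since $\defi_\S(v)=W(v,S)-W(v,\nS)>0$, $v$ is unhappy and switches to $1$. After this single step we reach a profile $\st_1$ in which all $n/2$ nodes of $S$ prefer $1$ and all $n/2$ nodes of $\nS$ prefer $0$; in particular the number of nodes preferring $0$ is now exactly $n/2$, which is no longer a strict majority. The core step is then to check that the whole of $S$ is stable at $1$: for any $u\in S$ we have $W(u,S)\geq W(u,\nS)$ because $\defi_\S(u)\geq 0$, so the neighbors of $u$ preferring $0$ never form a strict majority and $u$ keeps preference $1$. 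As in the footnote to Lemma~\ref{lem:tone-ICALP}, any subsequent update can only be made by a node of $\nS$ flipping from $0$ to $1$, which only raises the count of $1$'s and can only increase the number of $1$-neighbors of each node of $S$; hence $S$ remains entirely at $1$ and every stable profile reachable from $\st_1$ has at most $n/2$ nodes with preference $0$, certifying that $G$ is \mbm.

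The one genuine difference from the odd case, and the only place requiring care, is that the final count here is tight: after $v$ flips we have exactly $n/2$ versus $n/2$, so the argument relies on the \mbm\ definition asking merely for at most $n/2$ nodes preferring $0$ rather than for a strict majority of $1$'s. I do not expect the non-extremality hypothesis to enter this particular proof; the bisection is used solely through its deficiency values, exactly as in Lemma~\ref{lem:tone-ICALP}, and that hypothesis is inherited from the surrounding section. Consequently there is no substantive obstacle beyond the bookkeeping of these counts.
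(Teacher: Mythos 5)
Your proposal is correct and follows essentially the same route as the paper's own proof: the same initial profile (preference $1$ on $S\setminus\{v\}$, preference $0$ elsewhere), the same single update of $v$, and the same stability argument for $S$ based on non-negative deficiencies, with reachable stable profiles handled exactly as in the footnote to Lemma~\ref{lem:tone-ICALP}. Your side remarks are also accurate: the count is tight at $n/2$ versus $n/2$ (so the \mbm\ definition's ``at most $n/2$ zeros'' is what is used), and neither minimality of the bisection nor non-extremality is actually needed in this proof.
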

\begin{proof}
Let $v$ be the node with positive deficiency in $S$.

Consider now a profile $\blf$ of initial preferences that assigns preference $1$ to all 
nodes of $S$ except for $v$ and preference $0$ to $v$ and to all nodes 
of $\overline{S}$. Observe that in the truthful profile $\blf$,
$v$ is adjacent to $W(v,S)$ nodes with preference $1$ and
to $W(v,\overline{S})$ nodes with preference $0$. 
Since $\defi_{\mathcal{S}}(v)>0$ then $v$ is unhappy and adopts
preference $1$. We thus reach a profile $\st_1$ in which $n/2$ nodes
have preference $1$ (all nodes in $S$).
We conclude the proof of the lemma by showing that 
every node of $S$ is not unhappy and stays with preference $1$.
This is obvious for $v$. Let us consider $u\in S$ and $u\ne v$.
Then $u$ has $W(u,S)$ neighbors with preference $1$ and 
$W(u,\overline{S})$ neighbors with preference $0$. 
Since $\defi_{\mathcal{S}}(u)\geq 0$, we have that $W(u,S)\geq W(u,\overline{S})$
which implies that $u$ is not unhappy and stays with preference $1$.
\end{proof}
We remark that Lemma~\ref{lem:tone} does not require the bisection $\mathcal{S}$ of the claim to be a minimum bisection.

\begin{lemma}\label{lem:mone}
Every non-forbidden, non-extremal graph that has an \mone\ bisection is \mbm.
\end{lemma}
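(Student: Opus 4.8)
The plan is to reduce Lemma~\ref{lem:mone} to the two tools already at hand: Lemma~\ref{lem:tone}, which produces \mbm\ from any bisection whose good side has all non-negative deficiencies and at least one strictly positive one, and Lemma~\ref{lem:mtwo}, which produces \mbm\ from an \mtwo\ bisection. Let $\S=(S,\nS)$ be the given \mone\ bisection, with $S$ the side on which every node has $\defi_\S(x)\geq -1$ and which contains a node $u$ with $\defi_\S(u)>0$. The easy case is immediate: if in fact every node of $S$ satisfies $\defi_\S(x)\geq 0$, then $S$ meets the hypothesis of Lemma~\ref{lem:tone} verbatim (non-negative throughout, positive at $u$), so $G$ is \mbm. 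Hence all the difficulty is concentrated in the case where $S$ contains a node $x$ with $\defi_\S(x)=-1$.

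In that case I would first extract structure from minimality. Applying Lemma~\ref{lem:minimal-ICALP} to this $x$ and an arbitrary $y\in\nS$ gives $\defi_\S(y)\geq -\defi_\S(x)-2W(x,y)=1-2W(x,y)\geq -1$, so the presence of a single $-1$ node in $S$ forces \emph{every} node of $\nS$ to have deficiency at least $-1$ as well (in fact every non-neighbour of $x$ in $\nS$ has deficiency at least $1$). The purpose of this observation is that $x$ now qualifies as a ``$z$'' for an \mtwo\ bisection: we have $\defi_\S(x)=-1\leq 0$, and the opposite side $\nS$ has all deficiencies $\geq -1$, which is exactly what Definition~\ref{def:mtwo} asks of the non-$z$ side.

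Consequently, as soon as $x$ admits two non-adjacent neighbours $u',v'$ lying in $\nS$, the bisection $\S$ itself satisfies the \mtwo\ condition with $z=x$ (note $\S$ is already a minimum bisection), and Lemma~\ref{lem:mtwo} completes the argument. So the key step to carry out is simply to verify these two non-adjacent witnesses inside $N(x)\cap\nS$.

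The main obstacle is therefore the residual configuration in which $N(x)\cap\nS$ is a clique, so that no two non-adjacent witnesses exist. Here I would argue that this cannot coexist with the non-extremal and non-forbidden hypotheses. The clique $N(x)\cap\nS$, together with the deficiency bounds above, pins down both $\deg(x)$ and the degrees of the clique members, and I expect to dispose of it by one of three moves: applying Lemma~\ref{lem:tone} to $\nS$ when $\nS$ turns out to carry only non-negative deficiencies with at least one positive; relocating the witness, i.e.\ seeking instead a node of $\nS$ with deficiency $\leq 0$ that has two non-adjacent neighbours in $S$ (which is the all-$\geq -1$ side) so as to reinvoke Lemma~\ref{lem:mtwo}; or, when the clique is forced to be large and the opposite side nearly complete, invoking Lemma~\ref{lem:strongminimal} together with non-extremality to contradict the minimality of $\S$. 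Making this final case split airtight---in particular handling the small-degree instances of $x$, where $N(x)\cap\nS$ is too small to contain a non-adjacent pair and the clique route degenerates---is where I expect the real work to lie.
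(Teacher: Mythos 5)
Your first two steps are sound and in fact coincide with the paper's opening moves: when every node of $S$ has non-negative deficiency, Lemma~\ref{lem:tone} applies verbatim, and your observation that a node $x\in S$ with $\defi_\S(x)=-1$ forces (via Lemma~\ref{lem:minimal-ICALP}) every node of $\nS$ to have deficiency at least $-1$, so that $\S$ is \mtwo\ with pivot $z=x$ whenever $x$ has two non-adjacent neighbours in $\nS$, is exactly how the paper reduces to Lemma~\ref{lem:mtwo} when it can. The gap is your final claim that the residual configuration --- $N(x)\cap\nS$ a clique --- ``cannot coexist with the non-extremal and non-forbidden hypotheses.'' It can. Take $G$ to be a perfect matching on $n=6$ vertices (three disjoint edges $\{x,z\}$, $\{u,a\}$, $\{b,c\}$) and $\S=(\{x,u,a\},\{z,b,c\})$. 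This graph is non-forbidden and non-extremal, $\S$ is a minimum bisection (width $1$; width $0$ is impossible because each side has odd size), and $\S$ is \mone: the deficiencies on $S$ are $-1,1,1$. Here $N(x)\cap\nS=\{z\}$ is a trivial clique, and all three of your rescue moves fail: $\nS$ contains the node $z$ with $\defi_\S(z)=-1$, so Lemma~\ref{lem:tone} cannot be applied to $\nS$; every node of $G$ has degree one, so no node on either side has two neighbours at all, and Lemma~\ref{lem:mtwo} can never be invoked for this graph, for any bisection; and Lemma~\ref{lem:strongminimal} is vacuous here, since no node has degree $n-2$ or deficiency $-2$ and neither side is a clique of zero-deficiency nodes. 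Yet this graph is \mbm, so no contradiction is available --- the residual case is a genuine case, not an impossibility.

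What the residual case requires is a tool absent from your kit: a direct profile construction, and this is the real content of the paper's proof. When some $v\in S$ has $\defi_\S(v)=-1$ and some $z\in\nS$ has $\defi_\S(z)=-1$ (Lemma~\ref{lem:minimal-ICALP} forces $v$ and $z$ adjacent), the paper assigns preference $1$ to all of $S$ except the positive-deficiency node $u$; then $u$ flips (as $\defi_\S(u)>0$), then $z$ flips, giving $n/2+1$ ones, and stability of the ones follows because, again by Lemma~\ref{lem:minimal-ICALP}, every deficiency-$(-1)$ node of $S$ is adjacent to $z$, so $z$'s switch exactly compensates their deficit. A second, similar two-flip construction (seeding $z\in\nS$ with preference $1$ and flipping $u$ and then $v$) handles the other surviving sub-case, in which $\nS$ is a clique of zero-deficiency nodes and $u$ has a neighbour $z\in\nS$; the paper disposes of the companion situation where $u$ has no neighbour in $\nS$ by showing $G$ would then be extremal --- an extremality contradiction, not an appeal to Lemma~\ref{lem:strongminimal}. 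Without these explicit constructions the case analysis cannot be closed, so the proposal has a genuine gap precisely where you anticipated the ``real work'' to lie.
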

\begin{proof}
Let $G$ be a non-forbidden, non-extremal graph with $n$ nodes and denote by $\S=(S,\nS)$ the 
\mone\ bisection of $G$. 
Assume, without loss of generality, 
that all  $x\in S$ have $\defi_\S(x)\geq -1$
and that there exists $u\in S$ with $\defi_\S(u)>0$.
If all $x\in S$ have $\defi_\S(x)\geq 0$ then the claim follows
by Lemma~\ref{lem:tone}.
Similarly, 
if all $y\in\nS$ have $\defi_\S(y)\geq 0$ and 
there exists $z\in\nS$ with $\defi_\S(z)>0$ then the claim follows 
from Lemma~\ref{lem:tone} (when applied to $(\nS, S)$).

Suppose that there exists $v\in S$ with $\defi_\S(v)=-1$ and 
$z\in\nS$ with $\defi_\S(z)<0$. 
Clearly, by Lemma~\ref{lem:minimal-ICALP}, it must
be that $\defi_\S(z)=-1$ and that $v$ and $z$ are adjacent. 
Consider profile $\blf$ of initial preferences that assigns preference $1$ to all 
nodes of $S$ except for $u$ and preference $0$ to $u$ and to all 
nodes of $\nS$.
Now observe that, in the truthful profile $\st_0$,
$u$ is adjacent to $W(u,S)$ nodes with preference $1$ and 
to $W(u,\nS)$ nodes with preference $0$. Since $\defi_\S(u)>0$,
it follows that $W(u,S)>W(u,\nS)$. Thus $u$ is unhappy and adopts preference $1$.
As a result of this update, 
we reach a profile $\st_1$ in which there are exactly $n/2$ nodes with 
preference $1$ (all of $S$) and $n/2$ with preference $1$ (all of $\nS$).
In profile $\st_1$, $z$ is adjacent to $W(z,S)$ nodes with preference $1$
and to $W(z,\nS)=W(z,S)-1$ nodes with preference $0$. Therefore
$z$ is unhappy and adopts preference $1$. 
We thus reach profile $\st_2$ in which $n/2+1$ nodes have preference $1$
(these are the $n/2-1$ nodes with preference $1$ plus $u$ and $z$)
and $n/2-1$ have preference $0$.
Clearly, 
in profile $\st_2$, $u$ and $z$ are unhappy and stay with preference $1$. 
Consider now node $x\in S$ with $x\ne u$.
Node $x$ is adjacent to $W(x,S)+W(x,z)$ nodes with preference $1$ and to 
$W(x,\nS)-W(x,z)$ nodes with preference $0$. 
If $\defi_\S(x)\geq 0$ then $W(x,S)\geq W(x,\nS)$ and thus $x$ is not unhappy
and stays with preference $1$.
If instead $\defi_\S(x)=-1$ then, by Lemma~\ref{lem:minimal-ICALP},
$x$ and $z$ are adjacent and thus 
$W(x,S)+W(x,z)=W(x,\nS)$ 
and, again, $x$ is not unhappy and stays with preference $1$.
The claim thus follows.

Let us now consider the case in which 
there exists $v\in S$ with $\defi_\S(v)=-1$ and 
all $y\in\nS$ have $\defi_\S(y)=0$. 
By Lemma~\ref{lem:minimal-ICALP}, $v$ is adjacent to all nodes in $\overline S$. 
Therefore if $\nS$ is not a clique, $\S$ is an \mtwo\ bisection and 
thus the claim follows from Lemma~\ref{lem:mtwo}.
If instead $\nS$ is a clique then for all $y\in\nS$ we have that 
$W(y,\nS)=n/2-1$ and, since $\defi_\S(y)=0$, 
$W(y,S)=n/2-1$. This implies that all $y\in\nS$ have $\deg(y)=n-2$ and
that $W(S,\nS)=n/2(n/2-1)$. 
If $u$ has no neighbor in $\nS$, 
the fact that 
$W(S,\nS)=n/2(n/2-1)$ implies that each of the remaining $n/2-1$ nodes
$x\in S\setminus\{u\}$ have $W(x,\nS)=n/2$. Since $\defi_\S(x)\geq -1$
we have that $W(x,S)\geq n/2-1$ which implies that $\deg(x)=n-1$.
We can then conclude that the graph $G$ is extremal thus reaching a 
contradiction.

Thus the last case left 
is when there exists $v\in S$ with $\defi_\S(v)=-1$,
$\nS$ is a clique,
all $y\in\nS$ have $\defi_\S(y) = 0$ and $u$ has a neighbor $z\in\nS$.
For this case, 
we consider a different profile $\blfp$ of initial preferences:
$\blfp$ assigns preference $1$
to $z$ and to all nodes of $S$ except for $u$ and $v$;
nodes $u$ and $v$ and all nodes of $\nS$ except for $z$ have preference $0$.
In the truthful profile $\blfp$,
$u$ is adjacent to at least $W(u,S)+1-W(u,v)\geq W(u,S)$ 
nodes with preference $1$ and to at most $W(u,\nS)-1+W(u,v)\leq W(u,\nS)$ 
nodes with preference $0$.
Since $\defi_S(u)>0$, we have that $W(u,S)>W(u,\nS)$ and thus
$u$ is unhappy and adopts preference $1$.
We thus reach a profile $\st_1^\prime$ with the same number of nodes with 
preference $0$  and preference $1$.
Node $v$ has preference $\st_1^\prime(v)=0$ and, in profile $\st_1^\prime$, 
$v$ is adjacent to 
$W(v,S)+1=W(v,\nS)$ nodes with preference $1$ 
(remember that, by Lemma~\ref{lem:minimal-ICALP}, $z$ and $v$ are adjacent)
and to  $W(v,\nS)-1$ nodes with preference $0$.
Therefore $v$ is unhappy in $\st_1^\prime$ and adopts preference $1$.
We thus reach a profile $\st_2^\prime$ with $n/2+1$ nodes with 
preference $1$ and $n/2-1$ nodes with preference $0$.
We conclude the proof by showing that in $\st_2^\prime$
all nodes with preference $1$ are not unhappy and stay with preference $1$.
This is obvious for $v$ and $u$.
Consider now a node $x\in S$ other than $u$ and $v$.
In $\st_2^\prime$ 
node $x$ has $W(x,S)+W(x,z)$ neighbors with preference $1$ and 
$W(x,\overline{S})-W(x,z)$ nodes with preference $0$.
If $\defi_\S(x)\geq 0$ then $x$ is not unhappy and stays with 
preference $1$. If instead $\defi_\S(x)=-1$ then, by Lemma~\ref{lem:minimal-ICALP},
$x$ and $z$ are adjacent and thus also in this case $x$ is not unhappy
and stays with preference $1$.
Finally, in $\st_2^\prime$ node $z$ is adjacent to $W(z,S)$ nodes with 
preference $1$ and to $W(z,\nS)$ nodes with preference $0$.
Since $\defi_\S(z)=0$, the claim follows.
\end{proof}

\begin{definition}\label{def:mthree}
An \mthree\  bisection $\S=(S,\nS)$ is a minimum bisection in which
all nodes $x$ of one side have $\defi_\S(x)=0$ and 
all nodes $y$ of the other side have $\defi_\S(y) \in \{-1,0\}$
and at least one node $u$ has $\defi_\S(u)=-1$.
\end{definition}

\begin{lemma}\label{lem:mthree}
Every non-forbidden, non-extremal graph that has an \mthree\ bisection is \mbm.
\end{lemma}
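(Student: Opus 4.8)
The plan is to reduce this lemma to the already-established Lemma~\ref{lem:mtwo}: I will show that, after possibly relabelling the two sides, an \mthree\ bisection is in fact an \mtwo\ bisection, and then the statement follows immediately. So fix the orientation of $\S=(S,\nS)$ so that $S$ is the side on which every node has $\defi_\S(x)=0$ and $\nS$ is the side whose nodes have $\defi_\S\in\{-1,0\}$, with at least one node $u\in\nS$ satisfying $\defi_\S(u)=-1$. The first step is to record a structural consequence of minimality: by Lemma~\ref{lem:minimal-ICALP}, for every $x\in S$ we have $\defi_\S(x)+\defi_\S(u)+2W(x,u)\ge 0$, i.e. $2W(x,u)\ge 1$, so $u$ is adjacent to every node of $S$; since $\defi_\S(u)=-1$ this forces $W(u,\nS)=W(u,S)-1=n/2-1$, so $u$ is also adjacent to every other node of $\nS$ and hence has degree $n-1$. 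The same argument applies to any node of $\nS$ of deficiency $-1$.

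To exhibit an \mtwo\ bisection (Definition~\ref{def:mtwo}) I need a node $z$ on one side with $\defi_\S(z)\le 0$, together with two non-adjacent nodes on the other side that are both adjacent to $z$; the ``all deficiencies $\ge -1$'' requirement on the opposite side is automatic, since every deficiency in $\S$ lies in $\{-1,0\}$. I split according to whether $S$ is a clique. If $S$ is not a clique, take $z=u$ (so $\defi_\S(z)=-1\le 0$) and let $a,b\in S$ be any two non-adjacent nodes; since $u$ is adjacent to all of $S$, both $a$ and $b$ are adjacent to $z$, and the opposite side $S$ has all deficiencies equal to $0\ge -1$. Thus $\S$ is an \mtwo\ bisection and Lemma~\ref{lem:mtwo} finishes this case. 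Concretely this corresponds to starting from the profile that is $1$ on $(S\setminus\{a,b\})\cup\{u\}$ and $0$ elsewhere, flipping $a$ and then $b$ to reach $S\cup\{u\}$, all of whose nodes one checks are not unhappy.

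It remains to treat the case in which $S$ is a clique. Here I first observe, using non-extremality, that $\nS$ cannot also be a clique: if it were, a short computation with the deficiencies shows every node of $G$ would have degree at least $n-2$, contradicting non-extremality. Hence $\nS$ contains two non-adjacent nodes, and to conclude that $\S$ is \mtwo\ it suffices to find a node $z\in S$ (which has $\defi_\S(z)=0\le 0$) that is adjacent to two non-adjacent nodes of $\nS$. Because $S$ is a clique and every $x\in S$ has $\defi_\S(x)=0$, each $x\in S$ has degree $n-2$ and therefore misses (is non-adjacent to) exactly one node of $\nS$.

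The hard part is precisely this last existence claim, and I expect it to be the crux of the proof. The natural attack is by contradiction: suppose no node of $S$ is adjacent to both endpoints of any non-adjacent pair of $\nS$, and fix one such pair $p,q\in\nS$. Then every $x\in S$ must miss $p$ or $q$, so every node of $\nS\setminus\{p,q\}$ is adjacent to all of $S$; combined with the bound $\defi_\S\in\{-1,0\}$ this forces all of $\nS\setminus\{p,q\}$ to have degree $n-1$. Writing $S=S_p\sqcup S_q$ according to which of $p,q$ a node misses, the deficiency constraints on $p$ and $q$ pin down $|S_p|$ and $|S_q|$ to lie in $\{n/2-2,\,n/2-1\}$, while $|S_p|+|S_q|=n/2$; these are incompatible except for a few small values of $n$ (one gets $n\le 8$). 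Thus for all but a bounded set of small even $n$ the desired $z$ exists and $\S$ is \mtwo, and the finitely many residual graphs can be disposed of directly, by checking that each is either forbidden or extremal or admits an explicit subverting profile. The main obstacle, then, is controlling this clique sub-case and the small exceptional instances it produces; everything else is an immediate reduction to Lemma~\ref{lem:mtwo}.
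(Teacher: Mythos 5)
Your handling of the non-clique case is exactly the paper's first step, and your counting argument in the clique case is sound for large $n$: fixing a non-adjacent pair $p,q\in\nS$, the contradiction hypothesis forces every node of $\nS\setminus\{p,q\}$ to be adjacent to all of $S$ (hence to have degree $n-1$), forces each node of $S$ to miss exactly one of $p,q$, and the constraints $\defi_\S(p),\defi_\S(q)\in\{-1,0\}$ cap the number of $S$-nodes missing each of $p,q$ at two, giving $n\le 8$. (Minor slip: having defined $S_p$ as the set of nodes that \emph{miss} $p$, its size lies in $\{1,2\}$, not $\{n/2-2,n/2-1\}$; the bound is unaffected.) One can even sharpen this: if only one node $x\in S$ missed $p$, then $\defi_\S(p)=-1$, $\defi_\S(x)=0$ and $W(x,p)=0$ would already violate Lemma~\ref{lem:minimal-ICALP}, so both missing sets have size exactly two and the entire residue is a single configuration at $n=8$: a $4$-clique $S=\{x_1,\dots,x_4\}$, two universal vertices $u,y\in\nS$, and non-adjacent $p,q$ with $p\sim x_3,x_4,u,y$ and $q\sim x_1,x_2,u,y$.

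The genuine gap is that you never dispose of this residue, and the disposal you promise would not work: that $n=8$ graph is neither forbidden nor extremal (its degree sequence is $7,7,6,6,6,6,4,4$), so ``checking that each is either forbidden or extremal'' fails, and the ``explicit subverting profile'' you invoke as the third alternative is never exhibited. What actually eliminates the configuration is that $(S,\nS)$ is not a minimum bisection at all: the bisection $(\{x_1,x_2,q,u\},\{x_3,x_4,p,y\})$ has width $11$ versus $W(S,\nS)=12$. Note that this improvement is a two-node exchange which is invisible to the single-swap criterion of Lemma~\ref{lem:minimal-ICALP} (all single swaps are non-improving there), so it genuinely must be checked by hand, and in the constructive version of the argument (Section~\ref{sec:algo}) it would have to be added as yet another pattern to the notion of local minimality. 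The paper avoids this entirely by arguing the clique case without any reduction to \mtwo: deficiency-$0$ nodes have even degree while deficiency-$(-1)$ nodes of $\nS$ have degree $n-1$, which is odd, so by parity $\nS$ contains a second degree-$(n-1)$ node $v\ne u$; non-forbiddenness yields a node $z$ of degree at most $n-3$, which must lie in $\nS$ with $\defi_\S(z)=0$; the subverting profile then assigns $1$ to $u$ and to $S$ minus two non-neighbors of $z$, lets $z$ and then $v$ flip, and stability is verified directly. That argument is uniform in $n$ and needs no small-case analysis; your route can be completed, but only after the $n=8$ configuration is explicitly killed.
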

\begin{proof}
Let $G$ be a non-forbidden, non-extremal graph with $n$ nodes and denote by 
$\S=(S,\nS)$ the \mthree\ bisection of $G$.
Assume, without loss of generality, 
that all nodes of $S$ have zero deficiency, 
that all nodes of $\nS$ have deficiency at least $-1$,
and that $u\in\nS$. Then, by Lemma~\ref{lem:minimal-ICALP}, 
$u$ is adjacent to all nodes of $S$.
If $S$ is not a clique then $\S$ is an \mtwo\ bisection and the claim
follows from Lemma~\ref{lem:mtwo}.

Suppose then that $S$ is a clique. Therefore, for all $x\in S$,  
$W(x,S)=n/2-1$ and, since $\defi_\S(x)=0$, we have that 
$\deg(x)=n-2$.
We observe that nodes with zero deficiency have even degree.
Moreover, every node $y\in\nS$ with $\defi_\S(y)=-1$ (including $u$) is, 
by Lemma~\ref{lem:minimal-ICALP}, adjacent to all nodes of $S$ and thus
$\deg(y)=n-1$ (which is odd). 
This observation has two important consequences. 
First, since the number of odd-degree nodes in a graph is even, 
there must be at least one node $v\ne u$ with odd degree and 
it must be the case that $v\in\nS$ and that $v$ has $\defi_\S(v)=-1$ and degree $n-1$.
Second, since $G$ is non-forbidden, there must exists a node $z$
with $\deg(z)\leq n-3$ and it must be the case that 
$z\in\nS$ and $\defi_\S(z)=0$. 
In turn this implies that there exist $x_1,x_2\in S$ that are not 
adjacent to $z$. Notice that, since $u$ has degree $n-1$, 
$z$ is adjacent to $u$.  
Consider now the following profile $\blf$ of initial preferences:
$\blf$ assigns preference $1$ to $u$ and to all nodes of $S$ except for
$x_1$ and $x_2$;
$\blf$ assigns preference $0$ to $x_1$ and $x_2$ and to all nodes of 
$\nS$ except for $u$.
It is easy to verify that $\blf$ assigns preference $1$ to exactly $n/2-1$ nodes.
In the truthful profile $\blf$,
$z$ is unhappy and switches from preference $0$ to preference $1$.
In fact, observe that 
all nodes of $\nS$ adjacent to $z$ except for $u$ have 
preference $0$ and $z$ is not adjacent to the two nodes of $S$, 
$x_1$ and $x_2$, that have preference $0$.
Therefore, $z$ is adjacent to $W(z,\nS)-1$ nodes with preference $0$.
On the other hand, all nodes of $S$ adjacent to $z$ have preference $1$ and
$z$ is adjacent to $u$. 
Therefore, $z$ is adjacent to $W(z,S)+1$ nodes with preference $1$.
The claim then follows from $\defi_\S(z)=0$.
We have thus reached a profile $\st_1$ in which the number of nodes
with preference $0$ and preference $1$ are equal.

In $\st_1$,
$v$ is unhappy and switches from preference $0$ to preference $1$.
This follows from the fact that $v$ is adjacent to all nodes of the graph
(it has degree $n-1$) and thus
$v$ is adjacent to $n/2-1$ nodes with preference $0$ and to 
$n/2$ nodes with preference $1$.

We have thus reached a profile $\st_2$ in which there are $n/2+1$ nodes
with preference $1$ (the initial $n/2-1$ plus $z$ and $v$)
and $n/2-1$ nodes with preference $0$.
We complete the proof by showing that in $\st_2$
every node with preference $1$ is not unhappy and stays with preference $1$. 
This is obvious for $v$ and $z$.
Node $u$ has $\deg(u)=n-1$ and it is adjacent
to $n/2$ nodes with preference $1$ and $n/2-1$ with preference $0$.
Thus, $u$ is not unhappy and stays with preference $1$.
Let us now consider a generic node $y\in S$. Node $y$ has degree $n-2$ and 
thus it is adjacent to all
nodes of the graph except for one.
Therefore $y$ has at least $n/2-1$ adjacent nodes with preference $1$ 
and at most $n/2-1$ nodes with preference $0$.
Hence, $y$ is not unhappy and stays with preference $1$.
\end{proof}

\subsubsection{Proof of Proposition~\ref{prop:nonextremal}}\label{sec:proof-nonextremal}
We partition the set of minimum bisections of a non-extremal
graph $G$ into three classes:
{\em weak}, {\em strong} and {\em zero}.
Specifically, let $\S$ be a minimum bisection. 
We call $\S$ {\em weak} if at least one node $z$ has 
deficiency $\defi_\S(z)<-1$.
$\S$ is called {\em strong} if all nodes have deficiency 
at least $-1$ and at least one has deficiency different from $0$.
Finally, if all nodes have deficiency $0$, $\S$ is called {\em zero}.

\begin{lemma}\label{lem:weak}
Every non-forbidden, non-extremal graph that admits a weak bisection is \mbm.
\end{lemma}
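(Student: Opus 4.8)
The plan is to show that a weak bisection is so lopsided that its very deficient node forces one whole side to consist of nodes of non-negative deficiency, after which either Lemma~\ref{lem:tone} fires immediately or we land in a rigid configuration that we can recognize as an \mtwo\ bisection. Concretely, let $\S=(S,\nS)$ be a weak bisection of $G$; by definition it is a minimum bisection and it contains a node $z$ with $\defi_\S(z)<-1$, i.e. $\defi_\S(z)\le -2$. I would assume without loss of generality that $z\in\nS$ and then concentrate on the opposite side $S$.

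First I would feed $z$ into the minimality inequality of Lemma~\ref{lem:minimal-ICALP}: for every $x\in S$ we obtain $\defi_\S(x)\ge -\defi_\S(z)-2W(x,z)\ge 2-2W(x,z)$. Since $W(x,z)\in\{0,1\}$, this already yields $\defi_\S(x)\ge 0$ for all $x\in S$, with $\defi_\S(x)\ge 2$ whenever $x$ is not adjacent to $z$. If some node of $S$ has strictly positive deficiency, then $S$ satisfies the hypotheses of Lemma~\ref{lem:tone} (all deficiencies non-negative, at least one positive), so $G$ is \mbm\ and I am done.

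The remaining case is that every $x\in S$ has $\defi_\S(x)=0$. Here the same inequality tightens sharply: $\defi_\S(x)=0$ forces $2-2W(x,z)\le 0$, hence $W(x,z)=1$ for every $x\in S$, so $z$ is adjacent to all of $S$; and then $\defi_\S(x)+\defi_\S(z)+2\ge 0$ pins down $\defi_\S(z)=-2$ exactly, which together with $W(z,S)=n/2$ gives $\deg(z)=n-2$. At this point I would split on whether $S$ is a clique. If $S$ is \emph{not} a clique, I would pick two non-adjacent nodes $u,v\in S$; both are adjacent to $z$, so $\S$ is an \mtwo\ bisection (the side containing $z$ has $\defi_\S(z)\le 0$, and the other side $S$ has all deficiencies $\ge -1$ together with the required non-adjacent pair adjacent to $z$), and Lemma~\ref{lem:mtwo} yields that $G$ is \mbm.

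The one delicate point, and the step I expect to be the crux, is ruling out the possibility that $S$ is a clique. If $S$ were a clique, then each $x\in S$ would have $W(x,S)=n/2-1$ and, since $\defi_\S(x)=0$, degree $n-2$; combined with $z\in\nS$ having degree $n-2$ and $\defi_\S(z)=-2$ while the clique side $S$ has all zero deficiencies, this is precisely the configuration forbidden by Lemma~\ref{lem:strongminimal}, which (invoking non-extremality of $G$, the only place that hypothesis is used) contradicts the minimality of $\S$. Hence $S$ cannot be a clique, the case analysis is exhausted, and $G$ is \mbm\ in every case.
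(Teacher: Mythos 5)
Your proof is correct and follows essentially the same route as the paper's: the minimality inequality of Lemma~\ref{lem:minimal-ICALP} forces non-negative deficiencies on the side opposite $z$, the positive-deficiency case is dispatched via Lemma~\ref{lem:tone} (the paper routes this through the \mone\ classification and Lemma~\ref{lem:mone}, whose proof in this regime reduces to Lemma~\ref{lem:tone} anyway), and in the all-zero case Lemma~\ref{lem:strongminimal} rules out $S$ being a clique, yielding an \mtwo\ bisection and Lemma~\ref{lem:mtwo}. Your uniform treatment of $\defi_\S(z)\le -2$ simply merges the paper's two cases ($\defi_\S(z)\le -3$, handled there as \mone, and $\defi_\S(z)=-2$), which is a harmless streamlining.
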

\begin{proof}
Let $G$ be a non-forbidden, non-extremal graph with $n$ nodes that admits a weak bisection $\S=(S,\nS)$. Let us consider first the case in which there exists $z$ with 
$\defi_\S(z)\leq -3$ and suppose, 
without loss of generality, that $z\in\nS$.
Then, by Lemma~\ref{lem:minimal-ICALP}, for all $x\in S$, 
$\defi_\S(x)\geq 1$ and thus $\S$ is \mone. 
The claim follows by Lemma~\ref{lem:mone}.

Suppose now that all nodes have deficiency at least $-2$ 
and that there exists a node $z$ with $\defi_\S(z)=-2$.  
Again, without loss of generality, assume that $z\in\nS$.
Then, by Lemma~\ref{lem:minimal-ICALP}, for all $x\in S$, 
$\defi_\S(x)\geq 0$. 
If there exists $u\in S$ with $\defi_\S(u)>0$ then $\S$ is \mone.
The claim then follows by Lemma~\ref{lem:mone}.

Suppose now that all $x\in S$ have $\defi_\S(x)=0$. 
Since $\S$ is a minimum bisection, by Lemma~\ref{lem:strongminimal}, $S$ is not a
clique and, by Lemma~\ref{lem:minimal-ICALP}, $z$ is adjacent to all 
nodes of $S$. But then $\S$ is \mtwo\ and the claim follows from 
Lemma~\ref{lem:mtwo}.
\end{proof}

The next lemma deals with strong bisections.
\begin{lemma}\label{lem:strong}
Every non-forbidden, non-extremal graph that admits a strong bisection is \mbm.
\end{lemma}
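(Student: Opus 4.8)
The plan is to show that a strong bisection $\S=(S,\nS)$ always falls into one of the three special types \mone, \mtwo, or \mthree, so that the conclusion follows immediately from Lemmas~\ref{lem:mone},~\ref{lem:mtwo}, and~\ref{lem:mthree}. Recall that in a strong bisection every node has deficiency at least $-1$ and at least one node has nonzero deficiency. First I would split on the sign of the deficiencies. If some node $u$ has $\defi_\S(u)>0$, then the side containing $u$ has all its nodes of deficiency at least $-1$ (this holds everywhere in a strong bisection) together with a node of positive deficiency, so $\S$ is an \mone\ bisection and we are done by Lemma~\ref{lem:mone}.

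Otherwise every deficiency lies in $\{-1,0\}$ and, by strongness, at least one node has deficiency $-1$. The second step treats the case where all deficiency-$(-1)$ nodes lie on a single side; assuming without loss of generality that this is $\nS$, the side $S$ consists entirely of deficiency-$0$ nodes while $\nS$ has deficiencies in $\{-1,0\}$ with at least one equal to $-1$. This is exactly the definition of an \mthree\ bisection, and Lemma~\ref{lem:mthree} finishes this case.

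The remaining, and main, case is when deficiency-$(-1)$ nodes appear on both sides: fix $v\in S$ and $z\in\nS$ with $\defi_\S(v)=\defi_\S(z)=-1$. The key observation is that Lemma~\ref{lem:minimal-ICALP} forces such a node to be adjacent to the entire opposite side: for any node $y$ on the other side, the inequality $\defi_\S(v)+\defi_\S(y)+2W(v,y)\geq0$ together with $\defi_\S(y)\leq 0$ yields $W(v,y)\geq1$. I would then argue that $S$ and $\nS$ cannot both be cliques: if they were, then every deficiency-$0$ node would have degree exactly $n-2$ and every deficiency-$(-1)$ node degree exactly $n-1$, so all $n$ nodes would have degree at least $n-2$, contradicting the assumption that $G$ is non-extremal. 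Hence at least one side, say $\nS$, is not a clique (the other case being symmetric). Then $v\in S$ has $\defi_\S(v)=-1\leq 0$, is adjacent to all of $\nS$, and $\nS$ contains two non-adjacent nodes, both necessarily adjacent to $v$; since in addition every node of $\nS$ has deficiency at least $-1$, this exhibits $\S$ as an \mtwo\ bisection, and Lemma~\ref{lem:mtwo} completes the proof.

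I expect the main obstacle to be this last case. The naive attempt to produce an \mtwo\ bisection fails if the candidate special node has all its opposite-side neighbors forming a clique, so no two non-adjacent witnesses are available; the resolution is precisely the ``adjacent to everything on the opposite side'' consequence of Lemma~\ref{lem:minimal-ICALP} applied to a deficiency-$(-1)$ node, combined with ruling out the doubly-clique configuration via non-extremality.
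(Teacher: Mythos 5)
Your proof is correct and takes essentially the same route as the paper: both arguments reduce a strong bisection to one of the special types \mone, \mtwo, or \mthree\ by a case analysis on the deficiencies, with Lemma~\ref{lem:minimal-ICALP} supplying the key fact that a deficiency-$(-1)$ node is adjacent to the entire opposite side. The only cosmetic differences are in how the cases are organized (you split according to which sides carry deficiency-$(-1)$ nodes, the paper according to emptiness of the deficiency-$0$ sets) and that you dismiss the both-sides-cliques configuration via non-extremality where the paper notes the graph would be \ftwo\ and hence forbidden; both contradictions are valid.
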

\begin{proof}
Let $G$ be a non-forbidden, non-extremal graph with $n$ nodes 
that admits a strong bisection $\S=(S,\nS)$. If there exists 
node $u$ with $\defi_\S(u)>0$ then $\S$ is \mone\ and the
claim follows by Lemma~\ref{lem:mone}.
Let us consider then the case in which $\S$ is not \mone\ and thus 
all nodes $x$ have $\defi_\S(x)\in\{-1,0\}$.
We partition $S$ into $S_0$ containing any $x \in S$ with $\defi_\S(x)=0$
and $S_1$ containing any $x \in S$ with $\defi_\S(x)=1$.
Similarly, we partition $\nS$ into $\nS_0$ and $\nS_{-1}$.
Note that at least one of 
$S_{-1}$ and $\nS_{-1}$ is non-empty.

We fist observe that if one among $S_0$ and $\nS_0$ is non-empty then the
other is also non-empty.
Suppose in fact that $S_0\ne\emptyset$ and 
assume, for sake of contradiction, that $\nS_0=\emptyset$ 
(and thus all nodes $y$ of $\nS$ have $\defi_\S(y)=-1$). 
Then, by Lemma~\ref{lem:minimal-ICALP}, we have that,
for every $x\in S_0$, $W(x,\nS)=n/2$ and, 
since $\defi_\S(x)=0$, it must be the case that $W(x,S)=n/2$. 
This is impossible. 

Let us now consider the case in which $S_0,\nS_0=\emptyset$ or, 
equivalently, that every node $x\in S\cup\nS$ has $\defi_\S(x)=-1$.
Lemma~\ref{lem:minimal-ICALP} then implies that $x$ is adjacent to all
nodes on the opposite side and this, together with $\defi_\S(x)=-1$, 
implies that $x$ is adjacent to all node on its side. 
In other words, $G$ is a clique and thus \ftwo\ and we reached a contradiction.

Let us now consider the case in which $S_0,\nS_0\ne\emptyset$.
Henceforth, we assume without loss of generality that $S_{-1}\ne\emptyset$.
If $\nS$ is not a clique then $\S$ is an 
\mtwo\ bisection and the claim follows by Lemma~\ref{lem:mtwo}.
Similarly, if also $\nS_{-1}\ne\emptyset$ and $S$ is not a clique.

Suppose now that neither $S_{-1}$ nor $\nS_{-1}$ is empty and 
both $S$ and $\nS$ are cliques. Then every node $y$ is adjacent to 
all $n/2-1$ nodes on its side and, since $\defi_\S(y)\geq -1$,
each node has degree at least $n-2$. Therefore the graph is \ftwo\ and 
we reached a contradiction. 
Finally, we are left with the case in which $S_0,S_{-1}\ne\emptyset$
and $\nS_{-1}=\emptyset$ and thus $\nS_0\ne\emptyset$. 
But then $\S$ is an \mthree\ partition and thus the claim follows
by Lemma~\ref{lem:mthree}.
\end{proof}

Before dealing with zero bisections, 
we prove the following technical lemma.
\begin{lemma}\label{lem:zerotech}
Let $\S=(S,\nS)$ be a zero bisection of graph $G$ and let 
$u\in S$ and $z\in\nS$ be two non-adjacent nodes.
Then either bisection ${\cal T}$ obtained from $\S$ when $u$ and $z$ 
switch sides is non-zero (that is, strong or weak) or 
$u$ and $z$ have the same neighborhood.
\end{lemma}
\begin{proof}
As $\defi_\S(u)=\defi_\S(z)=0$ and $W(u,z)=0$, 
by Lemma~\ref{lem:minimal-ICALP}, ${\cal T}$ is a minimum bisection. 
Suppose that there exists $w$ that is adjacent to $u$ and not to $z$. 
Then we have that
\begin{eqnarray*}
W(w,T) & = & W(w,S)-1 \\
W(w,\overline{T}) &=& W(w,\nS)+1
\end{eqnarray*}
whence we obtain $\defi_{{\cal T}}(w)=\pm 2$ and thus ${\cal T}$ is a
non-zero bisection.
The case in which $w$ is adjacent to $z$ but not to $u$ is similar.
\end{proof}

\begin{lemma}\label{lem:zero}
Every non-forbidden, non-extremal graph 
that admits a zero bisection is \mbm.
\end{lemma}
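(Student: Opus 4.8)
The plan is to reuse the machinery already developed for weak and strong bisections, treating the zero bisection as a degenerate starting point to be perturbed. First I would record the single structural consequence of $\S=(S,\nS)$ being a zero bisection: since $\defi_\S(x)=0$ forces $W(x,S)=W(x,\nS)$, every node has even degree, and, more to the point, any non-adjacent cross pair $u\in S$, $z\in\nS$ satisfies the hypotheses of Lemma~\ref{lem:zerotech}. Invoking that lemma yields a clean dichotomy: either some non-adjacent cross pair, once its two endpoints swap sides, produces a non-zero (hence weak or strong) minimum bisection, in which case $G$ is \mbm\ by Lemma~\ref{lem:weak} or Lemma~\ref{lem:strong}; or every non-adjacent cross pair consists of two false twins (nodes with identical neighborhoods). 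The whole difficulty of the proof is concentrated in this second, rigid case.

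In the all-twins case I would first pin down the global structure. Since $G$ is non-extremal there is a node $w$ with $\deg(w)\le n-4$; placing it without loss of generality in $S$, it has $(n-\deg(w))/2\ge 2$ non-neighbors in $\nS$, each of which is a false twin of $w$ by the all-twins hypothesis. Hence the twin class $T=\{v:N(v)=N(w)\}$ meets both sides, and $t:=|T|\ge 3$. Writing $N=N(w)$ and $d:=|N|$, the key claim is that $V=T\cup N$: any node outside $T\cup N$ is non-adjacent to every node of $T$, in particular to a $T$-node on the opposite side, and the all-twins hypothesis would then force it into $T$, a contradiction. Thus $G$ consists of an independent set $T$ that is complete to $N$, together with an arbitrary graph on $N$, and $n=t+d$. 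Finally, feeding the zero bisection back into this picture shows that $N$ is split evenly between the two sides and, for every $y\in N$, its neighbors inside $N$ are split evenly as well, i.e. $W(y,N\cap S)=W(y,N\cap\nS)$; this last balance is exactly what the construction will exploit.

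With this structure in hand I would exhibit the \mbm\ profile directly rather than through Lemma~\ref{lem:tone}, which is too weak here because the nodes of $N$ are dominated by $T$. I would set to $1$ all of $N\cap S$ together with a single extra node $p\in N\cap\nS$, and set everything else to $0$; this is a strict minority of $d/2+1$ ones, using $t\ge 3$. Each node of $T$ then sees $d/2+1$ neighbors at $1$ against $d/2-1$ at $0$, so the whole of $T$ can be flipped to $1$ one node at a time (the flips do not interfere, as $T$ is independent), reaching more than $n/2$ ones. The crux is then to check that in the resulting profile every node with preference $1$ is happy: for a node of $T$ this is immediate, while for a node $y\in N$ set to $1$ all $t$ of its $T$-neighbors are now at $1$ and its $N$-neighbors split evenly by the balance above, so its ones comfortably dominate its zeros. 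By the stability principle recorded at the beginning of Section~\ref{sec:even} this certifies that $G$ is \mbm. I expect the main obstacle to be establishing the rigid split structure of the all-twins case and, within it, recognizing that the internal even split of $N$ inherited from the zero bisection is precisely what keeps the switched $N$-nodes happy once $T$ has moved.
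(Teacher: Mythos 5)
Your proof is correct, and after its first step it takes a genuinely different route from the paper's. Both arguments begin with the same dichotomy from Lemma~\ref{lem:zerotech}: either some non-adjacent cross pair swaps to a non-zero minimum bisection of the same width, and then Lemmas~\ref{lem:weak} and~\ref{lem:strong} finish, or the pair has identical neighborhoods. The paper, however, never confronts the ``all twins'' situation structurally: it first disposes of two degenerate zero bisections (both sides cliques gives \ftwo; all degrees zero gives \fone) and then chooses the cross pair cleverly --- $v\in S$ with positive degree, $u\in S$ non-adjacent to $v$, and $z\in\nS$ adjacent to $v$ --- so that either $u\sim z$, in which case $\S$ is already an \mtwo\ bisection and Lemma~\ref{lem:mtwo} applies, or $u\not\sim z$ and the pair $(u,z)$ provably has different neighborhoods (witnessed by $v$), so the swap is non-zero. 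You instead take the all-twins case at face value and classify it: $V=T\sqcup N$ with $T$ an independent twin class (of even size at least $4$, by the parity observation) complete to $N$, both $T$ and $N$ split evenly by $\S$, and every $y\in N$ internally balanced; your profile (ones on $N\cap S$ plus one node $p\in N\cap\nS$, then flipping all of $T$) and the stability check for the nodes at $1$ are correct, and the principle stated at the start of Section~\ref{sec:even} indeed closes the argument. Two remarks. First, you should add a word on $N=\emptyset$: there $p$ does not exist, but then $G$ has no edges and is \fone, excluded by the non-forbidden hypothesis; this also secures $d\geq 2$, which your counts tacitly use. Second, your structural case is in fact subsumed by the paper's machinery: since $|T\cap S|\geq 2$, any two nodes of $T\cap S$ are non-adjacent and both adjacent to any $z\in N\cap\nS$, so $\S$ is itself an \mtwo\ bisection and Lemma~\ref{lem:mtwo} would finish immediately. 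What your longer route buys is a quantitatively stronger conclusion in this case: an initial minority of only $d/2+1$ nodes with preference $1$, rather than the $n/2-1$ used in the proof of Lemma~\ref{lem:mtwo}.
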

\begin{proof}
Let $G$ be a non-forbidden, non-extremal graph with $n$ nodes that admits a zero bisection $\S=(S,\nS)$. 
We observe that if one side of $\S$, say $S$, is a clique then all 
$x\in S$ have $W(x,S)=n/2-1$ which, together with $\defi_\S(x)=0$, implies
that $\deg(x)=n-2$.  Therefore if both sides are cliques, 
$G$ is \ftwo\ and the claim follows.
Suppose then that $S$ is not a clique. 

Let $v$ be a node in $S$ with $\deg(v)>0$. 
If no such node exists, then the graph is \fone,
because any node has deficiency $0$.
Since $S$ is not a clique there exists $u\in S$ such that 
$u$ and $v$ are not adjacent. 
Moreover, since $\deg(v)>0$ and since $\defi_\S(v)=0$ there exists
$z\in\nS$ such that $v$ and $z$ are adjacent.

Suppose that $u$ and $z$ are adjacent. 
Then, since $u$ and $v$ are not adjacent and $z$ is adjacent to 
both, $\S$ is \mtwo\ and the claim
follows from Lemma~\ref{lem:mtwo}.
If instead $u$ and $z$ are not adjacent, then
they have different neighborhoods  ($z$ is adjacent to $v$ whereas
$u$ is not) and thus, by Lemma~\ref{lem:zerotech},
the bisection ${\cal T}$ obtained when $u$ and $z$ switch sides is non-zero. 
The claim then follows from Lemma~\ref{lem:weak} and Lemma~\ref{lem:strong}.
\end{proof}

\subsubsection{Making the Proof of Proposition \ref{prop:nonextremal} Constructive}\label{sec:algo}
We now explain how we can transform the proof of the theorem into an algorithm. All we have to do is to explain how the assumption of (globally) minimum bisection can be replaced by a property that is testable in polynomial time.

First, by carefully inspecting the proof of Lemma \ref{lem:minimal-ICALP}, we observe that the assumption we need every time we invoke this lemma for some bisection is actually that 
the width of the bisection cannot be improved by swapping two nodes from different sides of the bisection; let us use the term {\em locally minimal} for such a bisection. Clearly, testing whether a bisection is locally minimal can be done in polynomial-time by considering all pairs of swapping nodes.

Hence, we can relax the term ``minimum'' in almost all statements, definitions, and proofs to locally minimal.
The only case where this replacement is not enough is in the definition of weak bisections and in statements which are proved using Lemma \ref{lem:strongminimal}; there, we require that the bisection is not only locally minimal but also does not have the structure assumed in the statement of Lemma \ref{lem:strongminimal}. Let us call such bisections {\em strongly locally minimal}. Here, we would like to be able to detect bisections that have the structure assumed in the statement of Lemma \ref{lem:strongminimal} (this is easy) and furthermore compute new bisections that are strongly locally minimal. Indeed, the proof of Lemma \ref{lem:strongminimal} follows by constructing another bisection that has strictly smaller width starting from a bisection that satisfies the particular conditions of the lemma. This construction is much more complicated
than just switching two nodes but can still be computed in polynomial-time.

Finally, the proof of Lemma \ref{lem:zero} follows by the fact that either the zero bisection is \mtwo\ or there is another bisection (obtained by swapping two non-adjacent nodes from different sides of the zero bisection) that is either weak or strong.
Hence, it is also necessary that the strongly minimal bisection enjoys this property.

So, at a high-level, our local-search algorithm takes as input a non-forbidden non-extremal graph and works as follows. 
\begin{enumerate}
\item It starts from an arbitrary bisection of the input graph.
\item It continuously considers a new bisection of strictly smaller width by swapping two nodes in different sides of the current bisection. When this is not possible anymore, it proceeds with Step~3.
\item It checks whether the current bisection has a node with degree $n-2$ and deficiency $-2$ on one side and the other side is a clique of nodes with deficiency $0$ (i.e., it checks whether the condition of Lemma \ref{lem:strongminimal} apply). If this is the case, it uses the modification in the proof of Lemma \ref{lem:strongminimal} to obtain another bisection of strictly smaller width and goes to Step~2.
\item It checks whether the current bisection is a zero bisection. If this is the case and the bisection is not \mtwo, it computes a bisection by swapping two non-adjacent nodes in different sides of the current bisection and goes to Step~2.
\item It defines an \mbm\ profile of initial preferences for the input graph using the machinery in the proof of 
\begin{itemize}
\item Lemma \ref{lem:mtwo}, if the bisection is \mtwo; 
\item Lemma \ref{lem:mone}, if the bisection is \mone;
\item Lemma \ref{lem:mthree}, if the bisection is \mthree.
\end{itemize}
\item It outputs the \mbm\ profile of initial preferences.
\end{enumerate}

First, observe that the width of the bisection strictly decreases every time a new bisection is constructed at Steps 2 or 3.
Also, observe that if we reach Step 4 and the current bisection is an \mtwo\ zero bisection, the algorithm moves to Steps 5 and 6 and then terminates. Otherwise, suppose that Step 4 constructs a new bisection and the algorithm proceeds with Step 2. Then, by the proof of Lemma \ref{lem:zero}, we have that the new bisection obtained is not zero and has the same width as the previous one. Therefore, if the algorithm reaches Step 4 again with the same bisection, it will proceed to Steps 5 and 6 and terminate. This implies that the width of the bisection strictly decreases every two iterations of Steps 2, 3, and 4 (except possibly the last two), and that the algorithm terminates after $O(n^2)$ steps.

Finally, we remark that when the algorithm reaches Step 5, the current bisection is strongly locally minimal and is either non-zero or zero and \mtwo. Therefore, the proofs of Lemmas \ref{lem:weak}, \ref{lem:strong}, \ref{lem:zero} (can be adapted to) show that such a bisection is either \mtwo, \mone, \mthree. Hence, by applying Lemmas \ref{lem:mtwo}, \ref{lem:mone}, or \ref{lem:mthree}, the desired \mbm\ profile of initial preferences will be computed.

\section{Hardness for Weaker Minorities}
\label{sec:hardness}
We next show that deciding if it is possible to subvert the majority
starting from a weaker minority is a computationally hard problem.
\begin{theorem}
\label{thm:reduction}
For every constant $0 < \varepsilon < \frac{1}{8}$, given a graph $G$ with $n$ nodes,
it is NP-hard to decide whether there exists an \mbm\ profile of initial preferences 
with at most $n\left(\frac{1}{4} - \varepsilon\right)$ nodes with initial preference $1$.
\end{theorem}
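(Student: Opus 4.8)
The plan is to reduce from a known NP-hard problem. The statement asks us to show that deciding whether a graph admits an \mbm\ profile with a \emph{small} minority (at most $n(\frac14-\varepsilon)$ nodes with preference $1$) is hard. The natural source of hardness, as hinted in the paper's introduction, is the difficulty of choosing \emph{where} to place the initial minority — i.e., the combinatorial structure of which $\lfloor n/2\rfloor$ nodes (or fewer) should start at preference $1$ so that a subverting cascade reaches a stable profile with at least $n/2$ nodes at $1$. The requirement of a small minority couples this to a problem where a small seed set must dominate or trigger the rest, so I would reduce from a densely-connected combinatorial problem such as \textsc{Clique} (or \textsc{Balanced Biclique}/\textsc{Densest-$k$-Subgraph}), where planting a clique of a prescribed size into an otherwise sparse gadget forces the only successful small minority to coincide with the clique.

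The key steps, in order, would be: (1) Fix $\varepsilon$ and, given an instance $(H,k)$ of the source problem, construct a graph $G$ on $n$ nodes consisting of the instance $H$ together with padding gadgets (dummy cliques, pendant structures, or controlled-degree nodes) engineered so that $n(\frac14-\varepsilon)$ equals exactly the budget $k$ needed to seed the interesting part. (2) Establish the forward direction: if $H$ has the desired substructure (e.g.\ a $k$-clique), then seeding preference $1$ on exactly those $k$ nodes, padded appropriately to stay within budget, yields by the machinery of Theorem~\ref{thm:forbidden} (specifically the deficiency/bisection arguments behind Lemma~\ref{lem:tone-ICALP} and Lemma~\ref{lem:mtwo}) a subverting update sequence reaching a stable profile where the minority has become majority. (3) Establish the converse: any \mbm\ profile with at most $n(\frac14-\varepsilon)$ nodes at $1$ must, by the degree and deficiency constraints imposed by the gadget, place its seeds inside the planted region, and a successful cascade can only occur if that region is densely enough connected — thereby recovering a solution to the source instance.

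I would carry out the forward direction by invoking the bisection/deficiency apparatus already developed: the padding is designed so that the seed set together with the cascade corresponds to a low-width cut, and the \emph{not-unhappy} analysis of the earlier lemmas certifies stability. The converse is where the careful accounting lives: I must show that the degree bounds in the gadget prevent any minority of size below the budget from triggering a full subversion \emph{unless} it exploits the planted dense substructure, so that a subverting small minority is essentially forced to reveal the hidden clique. The constant $\varepsilon < \frac18$ enters to guarantee that the budget $n(\frac14-\varepsilon)$ leaves enough slack to pad the remaining seed slots with gadget nodes while keeping the total minority strictly below $n/2$; the strict inequality $\varepsilon<\frac18$ is precisely what makes the arithmetic of seed budget versus gadget size consistent.

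The main obstacle I expect is the converse direction, and specifically the \emph{rigidity} argument: I must rule out ``clever'' small minorities that subvert the majority by cascading through the padding rather than through the planted substructure. This requires designing the gadgets so that cascades are \emph{blocked} outside the intended region (each padding node should need a large majority of $1$-neighbors to flip, which a sub-budget seed cannot supply without first conquering the dense part), while simultaneously keeping every gadget node's degree within the forbidden-graph-avoiding range so that Theorem~\ref{thm:forbiddennotmbm} does not accidentally make $G$ non-\mbm. Balancing these two competing pressures — enough connectivity to enable the intended cascade, little enough to block spurious ones — is the delicate heart of the reduction.
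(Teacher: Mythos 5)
Your proposal is a plan rather than a proof, and the one concrete mechanism it assigns to the forward direction cannot do the job you give it. You propose to certify that seeding the planted clique subverts the majority ``by the machinery of Theorem~\ref{thm:forbidden} (specifically the deficiency/bisection arguments behind Lemma~\ref{lem:tone-ICALP} and Lemma~\ref{lem:mtwo})''. But that machinery intrinsically produces initial minorities of size $\lceil n/2\rceil-1$: in those lemmas the profile assigns preference $1$ to all of one side of a bisection except one or two nodes, and the subverting sequence consists of only one or two flips. Under the budget $n(\frac14-\varepsilon)$ you must instead exhibit a \emph{long cascade} in which a small seed flips roughly $n/4$ additional nodes one by one, and you must prove that at every step the flipping node is unhappy while no already-flipped node becomes unhappy (the dynamics is not monotone --- seeds can defect back to $0$). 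Nothing in the bisection/deficiency apparatus provides such an amplification gadget; this is precisely the missing construction. The paper builds it explicitly: it reduces from 2P2N-3SAT (not \textsc{Clique}), and each variable gadget is a chain $v_1(x),\ldots,v_7(x),v_0(x)$ hanging off a literal node plus a hub $w_0(x)$ with pendant nodes $w_1(x),\ldots,w_7(x)$, designed so that seeding just two nodes (one literal and the hub) triggers a deterministic chain of flips ending in a stable sub-profile with $17$ nodes at preference $1$; clause gadgets similarly amplify two seeds to $17$ or $18$ nodes depending on whether a literal node carries preference $1$, which is what encodes satisfaction.

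The converse direction, which you correctly identify as the delicate heart, is also left entirely unbuilt, and your sketch of it inherits the same problem: ``rigidity'' in the paper is not a degree argument about a planted dense region but a case analysis showing that any non-proper seeding of a gadget (zero, one, or two nodes placed wrongly) collapses back to all-$0$, combined with a counting argument ($V_3+C_3\leq 2C_0+C_1+2V_0+V_1$, etc.) showing that exceeding the per-gadget seed budget anywhere forces a deficit elsewhere. Moreover, the role of $\varepsilon<\frac18$ is not generic ``slack'': it is exactly what makes the interval $12C\leq N\leq \frac{95C}{16\varepsilon}-\frac{123C}{4}$ for the padding clique non-empty, where the disconnected clique of size $N$ and the $N+\frac{99C}{4}$ isolated nodes are tuned so that a proper profile has at most $n(\frac14-\varepsilon)$ seeds while the satisfiable outcome reaches exactly $n/2$. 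Without a concrete cascade gadget, a concrete source problem whose structure the gadgets encode, and this padding arithmetic, the reduction does not exist yet; as written, the plan would fail at step (2) the moment you tried to invoke the bisection lemmas with a sub-$n/4$ seed.
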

%

We will use a reduction from the NP-hard problem 2P2N-3SAT, the problem of deciding whether a 3SAT formula in which every variable appears as positive in two clauses and as negative in two clauses has a truthful assignment or not (the NP-hardness follows by the results of \cite{Y05TRA}).

Given a Boolean formula $\phi$  with $C$ clauses and $V$ variables that is 
an instance of 2P2N-3SAT (thus $3C=4V$ and $C$ is a multiple of $4$), 
we will construct a graph $G(\phi)$ with $n$ nodes such that there exists a 
profile of initial preferences with at most $n\left(\frac{1}{4} - \varepsilon\right)$ nodes of $G(\phi)$ with
preference $1$ such that a sequence of updates can lead to a stable profile in
which at least $n/2$ nodes have preference $1$ if and only if $\phi$ has a satisfying assignment.

The graph $G(\phi)$ has the following nodes and edges.
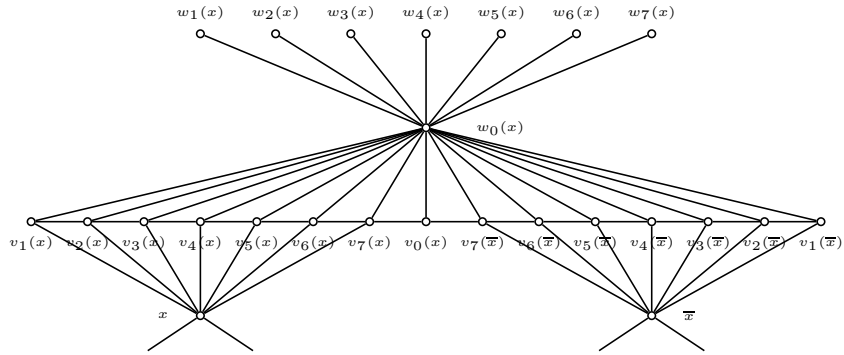
\begin{figure}[htb]
 \begin{tikzpicture}[-,auto,on grid=true,semithick,
                     prof/.style={shape=circle,draw,inner sep=0pt,minimum size=1mm},
                     every label/.style={font=\tiny}]

  \node[prof] (V1) [label={below:$v_{1}(x)$}] {};
  \node[prof] (V2) [right=0.75cm of V1] [label={below:$v_{2}(x)$}] {};
  \node[prof] (V3) [right=0.75cm of V2] [label={below:$v_{3}(x)$}] {};
  \node[prof] (V4) [right=0.75cm of V3] [label={below:$v_{4}(x)$}] {};
  \node[prof] (V5) [right=0.75cm of V4] [label={below:$v_{5}(x)$}] {};
  \node[prof] (V6) [right=0.75cm of V5] [label={below:$v_{6}(x)$}] {};
  \node[prof] (V7) [right=0.75cm of V6] [label={below:$v_{7}(x)$}] {};
  \node[prof] (V0) [right=0.75cm of V7] [label={below:$v_{0}(x)$}] {};
  \node[prof] (V7N) [right=0.75cm of V0] [label={below:$v_{7}(\ox)$}] {};
  \node[prof] (V6N) [right=0.75cm of V7N] [label={below:$v_{6}(\ox)$}] {};
  \node[prof] (V5N) [right=0.75cm of V6N] [label={below:$v_{5}(\ox)$}] {};
  \node[prof] (V4N) [right=0.75cm of V5N] [label={below:$v_{4}(\ox)$}] {};
  \node[prof] (V3N) [right=0.75cm of V4N] [label={below:$v_{3}(\ox)$}] {};
  \node[prof] (V2N) [right=0.75cm of V3N] [label={below:$v_{2}(\ox)$}] {};
  \node[prof] (V1N) [right=0.75cm of V2N] [label={below:$v_{1}(\ox)$}] {};

  \node[prof] (W0) [above=1.25cm of V0] [label={right:$\qquad w_{0}(x)$}] {};

  \node[prof] (W4) [above=1.25cm of W0] [label=$w_{4}(x)$] {};
  \node[prof] (W3) [left=1cm of W4] [label=$w_{3}(x)$] {};
  \node[prof] (W2) [left=1cm of W3] [label=$w_{2}(x)$] {};
  \node[prof] (W1) [left=1cm of W2] [label=$w_{1}(x)$] {};
  \node[prof] (W5) [right=1cm of W4] [label=$w_{5}(x)$] {};
  \node[prof] (W6) [right=1cm of W5] [label=$w_{6}(x)$] {};
  \node[prof] (W7) [right=1cm of W6] [label=$w_{7}(x)$] {};

  \node[prof] (X) [below=1.25cm of V4] [label={left:$x \quad$}] {};
  \node[prof] (XN) [below=1.25cm of V4N] [label={right:$\quad \ox$}] {};

  \phantom{\node[prof] (j) [below=1.75cm of V3] {};}
  \phantom{\node[prof] (k) [below=1.75cm of V5] {};}
  \phantom{\node[prof] (l) [below=1.75cm of V3N] {};}
  \phantom{\node[prof] (m) [below=1.75cm of V5N] {};}

  \draw (X) -- (j);
  \draw (X) -- (k);
  \draw (XN) -- (l);
  \draw (XN) -- (m);
  
  \draw (V1) -- (V2);
  \draw (V2) -- (V3);
  \draw (V3) -- (V4);
  \draw (V4) -- (V5);
  \draw (V5) -- (V6);
  \draw (V6) -- (V7);
  \draw (V7) -- (V0);
  \draw (V0) -- (V7N);
  \draw (V7N) -- (V6N);
  \draw (V6N) -- (V5N);
  \draw (V5N) -- (V4N);
  \draw (V4N) -- (V3N);
  \draw (V3N) -- (V2N);
  \draw (V2N) -- (V1N);
  
  \draw (W0) -- (W1);
  \draw (W0) -- (W2);
  \draw (W0) -- (W3);
  \draw (W0) -- (W4);
  \draw (W0) -- (W5);
  \draw (W0) -- (W6);
  \draw (W0) -- (W7);
  \draw (W0) -- (V1);
  \draw (W0) -- (V2);
  \draw (W0) -- (V3);
  \draw (W0) -- (V4);
  \draw (W0) -- (V5);
  \draw (W0) -- (V6);
  \draw (W0) -- (V7);
  \draw (W0) -- (V0);
  \draw (W0) -- (V7N);
  \draw (W0) -- (V6N);
  \draw (W0) -- (V5N);
  \draw (W0) -- (V4N);
  \draw (W0) -- (V3N);
  \draw (W0) -- (V2N);
  \draw (W0) -- (V1N);
  
  \draw (X) -- (V1);
  \draw (X) -- (V2);
  \draw (X) -- (V3);
  \draw (X) -- (V4);
  \draw (X) -- (V5);
  \draw (X) -- (V6);
  \draw (X) -- (V7);
  \draw (XN) -- (V1N);
  \draw (XN) -- (V2N);
  \draw (XN) -- (V3N);
  \draw (XN) -- (V4N);
  \draw (XN) -- (V5N);
  \draw (XN) -- (V6N);
  \draw (XN) -- (V7N);
\end{tikzpicture}
\centering
\caption{The variable gadget.}
\label{fig:variable}
\end{figure}
For each variable $x$ of $\phi$,
$G(\phi)$ includes a {\em variable} gadget for $x$ consisting of $25$ 
nodes and $50$ edges (see Figure~\ref{fig:variable}).
The nodes of the variable gadget for $x$ are
the {\em literal nodes}, $x$ and $\ox$,
nodes $v_1(x),\ldots,v_{7}(x)$,
nodes $v_1(\ox),\ldots,v_{7}(\ox)$,
nodes $v_0(x)$ and $w_0(x)$, and
nodes $w_1(x), \ldots, w_{7}(x)$.
The edges are
$(x,v_i(x))$ and $(\overline{x},v_i(\overline{x}))$ for $i=1, \ldots, 7$,
$(v_i(x),v_{i+1}(x))$ and $(v_i(\overline{x}),v_{i+1}(\overline{x}))$ for $i=1, \ldots, 6$,
$(v_{0}(x),v_7(x))$, $(v_{0}(x),v_7(\overline{x}))$, $(v_0(x), w_0(x))$,
$(w_0(x),v_i(x))$, $(w_0(x),v_i(\overline{x}))$ and $(w_0(x), w_i(x))$ for $i=1, \ldots, 7$.
\begin{figure}[htb]
 \begin{tikzpicture}[-,auto,on grid=true,semithick,
                     prof/.style={shape=circle,draw,inner sep=0pt,minimum size=1mm},
                     every label/.style={font=\tiny}]

  \node[prof] (V1) [label={left:$\upsilon_1(c)$}] {};
  \node[prof] (V2) [right=1cm of V1] [label={below:$\upsilon_2(c)$}] {};
  \node[prof] (V3) [right=1cm of V2] [label={below:$\upsilon_3(c)$}] {};

  \phantom{\node[prof] (LL) [right=0.5cm of V3] {};}
  \phantom{\node[prof] (RR) [right=5.5cm of V1] {};}
  
  \phantom{\node[prof] (G) [right=4cm of V1] {};}
  \node[prof] (V13) [right=6cm of V1] [label={below:$\upsilon_{13}(c)$}] {};
  \node[prof] (V14) [right=1cm of V13] [label={below:$\upsilon_{14}(c)$}] {};
  \node[prof] (V15) [right=1cm of V14] [label={right:$\upsilon_{15}(c)$}] {};
  
  \phantom{\node[prof] (H) [above=1.25cm of G] {};}
  \node[prof] (U1) [left=0.75cm of H] [label={left:$u_1(c) \qquad$}] {};
  \node[prof] (U2) [right=0.75cm of H] [label={right:$\qquad u_2(c)$}] {};
  
  \node[prof] (C) [above=1.25cm of H] [label={right:$c$}] {};

  \phantom{\node[prof] (L) [above=0.5cm of C] {};}
  \phantom{\node[prof] (M) [left=1cm of L] {};}
  \phantom{\node[prof] (N) [right=1cm of L] {};}

  \draw[dotted] (LL) -- (RR);
  \draw (C) -- (L);
  \draw (C) -- (M);
  \draw (C) -- (N);
  
  \draw (C) -- (U1);
  \draw (C) -- (U2);
  \draw (U1) -- (V1);
  \draw (U1) -- (V2);
  \draw (U1) -- (V3);
  \draw (U1) -- (V13);
  \draw (U1) -- (V14);
  \draw (U1) -- (V15);
  \draw (U2) -- (V1);
  \draw (U2) -- (V2);
  \draw (U2) -- (V3);
  \draw (U2) -- (V13);
  \draw (U2) -- (V14);
  \draw (U2) -- (V15);
\end{tikzpicture}
\centering
\caption{The clause gadget.}
\label{fig:clause}
\end{figure}
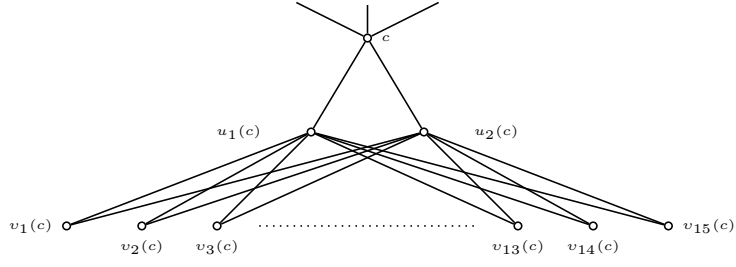
For each clause $c$ of $\phi$, graph $G(\phi)$ includes
a {\em clause gadget} for $c$ consisting of $18$ nodes
and $32$ edges (see Figure~\ref{fig:clause}).
The nodes of the gadget are the {\em clause} node
$c$, nodes $u_1(c)$, $u_2(c)$, and nodes $\upsilon_1(c),\ldots,\upsilon_{15}(c)$.
The $32$ edges are
$(c,u_1(x)), (c,u_2(x))$, and $(u_i(c),\upsilon_j(c))$ with $i=1, 2$ and $j=1, \ldots, 15$.
%
In $G(\phi)$, for every clause $c$, the clause node $c$ is connected to the three literal 
nodes corresponding to the literals that appear in clause $c$ in $\phi$.
Therefore, each literal node is connected to the two clauses in which it appears.
%
Graph $G(\phi)$ includes a {\em clique} of even size $N$, 
with $12C\leq N \leq \frac{95C}{16 \varepsilon}-\frac{123C}{4}$; 
the clique is disconnected from the rest of the graph.
%
Graph $G(\phi)$ includes $N+\frac{99C}{4}$ additional {\em isolated} nodes.
Overall, the total number of nodes in $G(\phi)$ is $n = 2N+\frac{99C}{4}+25V+18C= 2N+\frac{123C}{2}$.

A profile of initial preferences to the nodes of $G(\phi)$ is called \emph{proper} if:
for every variable $x$,
it assigns preference $1$ to node $w_0(x)$ and to exactly
one literal node of the gadget of $x$;
%
for every clause $c$,
it assigns preference $1$ to nodes $u_1(c)$ and $u_2(c)$;
%
it assigns preference $1$ to exactly $\frac{N}{2}$ nodes of
the clique;

it assigns preference $0$ to all the remaining nodes.
Hence, in a proper profile the number of nodes with preference $1$ is
$2V + 2C + \frac{N}{2} = \frac{7C}{2} + \frac{N}{2} \leq n(\frac{1}{4} - \varepsilon)$;
the inequality follows by the upper bound in the definition of $N$.

Theorem \ref{thm:reduction} will follow by the next two lemmas.
The first lemma proves that $G(\phi)$ has a proper profile of initial preferences 
that leads to a majority of nodes with preference $1$ if and only $\phi$ is satisfiable.
\begin{lemma}\label{lemma:proper}
The Boolean formula $\phi$ is satisfiable if and only 
if there exists a proper profile of initial preferences to the nodes 
of $G(\phi)$ so that a stable profile with at least $n/2$ nodes with preference $1$
can be reached by a sequence of updates.
\end{lemma}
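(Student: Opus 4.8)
The plan is to analyze how the proper profile evolves separately on each gadget and then reduce the whole question to a counting argument about the threshold $n/2$. First I would record the three easy contributions. The $N+\frac{99C}{4}$ isolated nodes are always happy with preference $0$ and contribute nothing. The clique, started with exactly $N/2$ ones, is not stable, but the adversary can flip one unhappy preference-$0$ clique node (it has $N/2$ one-neighbours out of $N-1$) and then cascade the whole clique up to all-ones; hence it contributes at most, and can be made to contribute exactly, $N$ ones. Since $n/2 = N + \frac{123C}{4}$, the entire problem reduces to showing that the variable and clause gadgets together can be driven to $\frac{123C}{4}$ ones precisely when $\phi$ is satisfiable.

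For the forward direction, given a satisfying assignment I set the literal node equal to $1$ for the true literal of each variable (this is the proper profile). I would then exhibit an explicit update sequence that flips to $1$ every node on the ``true'' side of each variable gadget: the leaves $w_1(x),\dots,w_7(x)$ (each sees only $w_0(x)=1$), then the path $v_1(x),\dots,v_7(x)$ (each sees $x=1$, $w_0(x)=1$ and its already-flipped predecessor, a strict majority of its at most $4$ neighbours), then $v_0(x)$. This yields exactly $17$ ones per variable gadget, leaving the eight nodes $\ox,v_1(\ox),\dots,v_7(\ox)$ at $0$. In each clause gadget the $\upsilon_j(c)$ flip because $u_1(c)=u_2(c)=1$, and the clause node $c$ (degree $5$, neighbours $u_1(c),u_2(c)$ and three literal nodes) flips to $1$ exactly when at least one of its literal nodes is a one, i.e.\ when the clause is satisfied. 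Thus every satisfied clause gadget reaches $18$ ones. A short check shows the resulting profile, together with the all-ones clique, is stable, and the count is $N+17V+18C = N + \frac{51C}{4} + \frac{72C}{4} = N+\frac{123C}{4} = n/2$.

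For the converse I argue the contrapositive: if the assignment encoded by the proper profile fails to satisfy $\phi$, then no reachable profile (stable or not) has $n/2$ ones. The heart of the argument, and the step I expect to be the main obstacle, is to prove two invariants that are preserved under \emph{every} update, not just the intended one. The first is that on the ``false'' side of each variable gadget the eight nodes $\ox,v_1(\ox),\dots,v_7(\ox)$ remain $0$ forever: $\ox$ would need at least $5$ of its $9$ neighbours to be ones, but at most its two clause neighbours can be, and a bootstrap-percolation style argument shows the path $v_1(\ox),\dots,v_7(\ox)$ can never ignite from its single constant one-neighbour $w_0(x)$, since each internal node needs $3$ of its $4$ neighbours to be ones. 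The second invariant is that an unsatisfied clause node $c$ stays $0$, because its three literal neighbours are precisely false-side literal nodes and hence remain $0$ by the first invariant, leaving $c$ with at most the two ones $u_1(c),u_2(c)$, which is fewer than $3$.

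Granting these invariants, in any reachable profile each variable gadget has at most $17$ ones and each unsatisfied clause gadget has at most $17$ ones, while every other clause gadget has at most $18$. If at least one clause is unsatisfied, the total is at most $N+17V+18(C-1)+17 = N+\frac{123C}{4}-1 = n/2-1 < n/2$, which yields the contradiction and completes the proof. The delicate points throughout are that the cascades used in the forward direction must decompose into valid single-flip updates on unhappy nodes, and that the two invariants of the converse hold independently of the adversarially chosen update order.
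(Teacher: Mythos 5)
Your proof is correct and follows essentially the same route as the paper's: the same clique-and-gadget cascade in the forward direction (yielding $N+17V+18C=n/2$ ones) and the same at-most-$17$ / at-most-$18$ per-gadget counting in the converse, with your first-flip invariants merely making rigorous what the paper asserts as "easy to see." One small caveat: for the node $v_7(\ox)$, which is adjacent to \emph{two} true-side ones ($w_0(x)$ and $v_0(x)$), your justification via "each internal node needs $3$ of its $4$ neighbours" silently relies on the model's tie-breaking rule that a node with exactly half of its neighbors disagreeing does not flip — this is the tightest case of the invariant and should be stated explicitly.
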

\begin{proof}
First observe that every clique node switches her preference to $1$ 
(as the strict majority of its neighbors has initially preference $1$ and this 
number gradually increases until all clique nodes switch to $1$). 

We next prove that starting from a proper profile of initial preferences, 
there is a sequence of updates that leads to a stable profile in which
$17$ nodes of every variable gadget have preference $1$.
To see this, consider a proper profile that assigns preference $1$ to $x$
(and to $w_0(x)$) and the following sequence of updates:
node $v_1(x)$ switches from $0$ to $1$;
then, for $i=1, \ldots, 6$, node $v_{i+1}(x)$ switches to $1$ immediately after node $v_i(x)$;
node $v_0(x)$ switches to $1$ after node $v_7(x)$;
finally, $w_1(x), \ldots, w_{7}(x)$ can switch in any order. 
Observe that in this sequence any switching node is unhappy since has 
a strict majority of nodes with preference $1$ in its neighborhood. 
Also, the resulting profile where the $17$ nodes 
$w_0(x),w_1(x),\ldots,w_7(x)$, $v_0(x),v_1(x),\ldots,v_7(x)$, and $x$ have preference $1$ 
is stable, i.e., no node in the gadget is unhappy. 
Indeed, for each node with preference $1$, the strict majority of the preferences of its neighbors is $1$.
Hence, the node has no incentive to switch to preference $0$.
For each of the remaining nodes (with preference $0$), at least half of its neighbors is $0$.
Hence, this node has no incentive to switch to preference $1$ either.
A similar sequence can be constructed for a proper profile that assigns
preference $1$ to node $\overline{x}$ (and $w_0(x)$) of the gadget for variable $x$. 
Intuitively, the two proper profiles of initial preferences simulate the assignment of values {\sc True} and 
{\sc False} to variable $x$, respectively. 

In addition, it is easy to see that starting from a proper profile,
there is no sequence of updates that reaches a stable profile where 
more than $17$ nodes in a variable gadget have preference $1$ 
(the observation needed here is the same that guarantees that we reach a stable profile above).

Let us now consider the clause gadgets associated with clause $c$ of $\phi$.
We observe that, starting from a proper profile of initial preferences,
there exists a sequence of updates that leads to a stable profile
in which $17$ nodes of the clause gadget have preference $1$. 
Indeed, starting from the proper assignment of preference $1$ to nodes $u_1(c)$ and $u_2(c)$, 
nodes $\upsilon_1(c),\ldots,\upsilon_{15}(c)$ will switch from $0$ to $1$ in arbitrary order 
(for each of them both neighbors have preference $1$).
After these updates, at least $15$ out of $17$ neighbors of $u_1(c)$ and $u_2(c)$ have preference $1$ and both neighbors of the nodes $\upsilon_1(c), \ldots, \upsilon_{15}(c)$ have preference $0$. Hence, none among these nodes have any incentive to switch to preference $0$.


Let us now focus on the clause nodes and observe that node $c$ in the corresponding clause gadget will switch to $1$ if and only if at least one of the literal nodes corresponding
to literals that appear in $c$ have preference $1$ 
(since the degree of a clause node is five and nodes $u_1(c)$ and $u_2(c)$ have preference $1$). 
This switch cannot trigger any other switch in literal nodes or in nodes of clause gadgets since the preference of these nodes coincides with a strict majority of preferences in its neighborhood.
Hence, the fact that a clause node has preference $1$ (respectively, $0$) corresponds to the clause being satisfied (respectively, not satisfied) by the Boolean assignment induced by the proper profile of initial preferences. Eventually, the updates lead to an additional number of $C$ clause nodes adopting preference $1$ in the stable profile
if and only if $\phi$ is satisfiable.

In conclusion, we have that if $\phi$ is satisfiable there is a sequence of updates
converging to a profile with $17V + 17C + N + C = N+123C/4 = n/2$ nodes with preference $1$.
Otherwise, if $\phi$ is not satisfiable, any sequence of updates
converges to a stable profile with strictly less than $n/2$ nodes having preference $1$.
\end{proof}
We conclude the proof by showing that it is sufficient to restrict to proper assignments
as non-proper assignments will never lead to a stable profile with
a majority of nodes with preference $1$.
\begin{lemma}\label{lemma:non-proper}
For non-proper profiles that assign preference $1$ to at most $\frac{7C}{2} + \frac{N}{2}$ nodes,
there is no sequence of updates converging to a stable profile with at least $n/2$ nodes having preference $1$.
\end{lemma}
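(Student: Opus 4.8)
The plan is to bound, over all update sequences, the maximum number of $1$'s in a reachable stable profile, and to show this maximum stays strictly below $n/2 = N + \tfrac{123C}{4}$ whenever the initial profile $\blf$ is non-proper and seeds at most $\tfrac{7C}{2}+\tfrac{N}{2}$ nodes. Since $G(\phi)$ splits into three non-interacting regions — the disconnected clique of size $N$, the isolated nodes, and the variable/clause gadgets (linked only through literal–clause edges) — I would analyze each region's maximal final $1$-count as a function of how many of its nodes are seeded.

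First the elementary pieces. Isolated nodes never change, so each seeds one $1$ at the cost of one budget unit and never helps anyone else. For the clique, a short calculation shows that seeding at least $N/2$ ones lets all $N$ nodes be driven to $1$, while seeding fewer than $N/2$ ones forces every reachable stable clique-configuration to be all-$0$ (any $1$ is unhappy and no $0$ ever becomes unhappy). For the gadgets I would establish the structural facts: a variable gadget ends with at most $17$ ones when at most two nodes are seeded, and it reaches $17$ only by seeding exactly $w_0(x)$ together with one literal; reaching more (up to all $25$) needs at least three seeds. A clause gadget yields its $17$ internal ones only if $u_1(c)$ and $u_2(c)$ are both seeded, and its node $c$ can turn $1$ only once $u_1(c)=u_2(c)=1$ and at least one of its three literals is finally $1$.

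With these bounds I run a global budget count. If the clique is seeded with fewer than $N/2$ ones it contributes $0$, so the final count is at most the budget plus the maximum gadget gain $22V+16C=\tfrac{65C}{2}$, giving at most $\tfrac{7C}{2}+\tfrac{N}{2}+\tfrac{65C}{2}=36C+\tfrac{N}{2}$, which is strictly below $N+\tfrac{123C}{4}$ exactly because $N\ge 12C$ (this is where the lower bound on $N$ is used). Hence any profile reaching $n/2$ must seed the clique with at least $N/2$ ones, leaving at most $\tfrac{7C}{2}$ budget for gadgets and isolated nodes. Activating one gadget to its target costs two seeds, and the $V+C$ gadgets need $2V+2C=\tfrac{7C}{2}$ seeds in total, so the leftover budget suffices only to give every gadget its two proper seeds and nothing else. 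An exchange argument then finishes: spending a third seed to push a variable gadget to $25$, or wasting a seed on an isolated node or a surplus clique node, forces some gadget to get fewer than two seeds and lose at least $17$, dropping the total below $\tfrac{123C}{4}$. Thus reaching $n/2$ forces exactly $N/2$ clique seeds, each variable gadget seeded at $w_0$ and one literal, each clause gadget seeded at $u_1,u_2$, and no isolated seed — i.e. $\blf$ is proper, contradicting non-properness.

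The main obstacle is the per-gadget structural analysis, namely ruling out clever seedings that exploit the adversarial update order or the cross-gadget literal–clause edges. I would control the ordering by arguing that no order can keep the hub $w_0(x)$ happy unless its pendants and a full $v$-chain are $1$, which in turn requires $w_0(x)$ itself to have been seeded. I would control cross-gadget influence by noting that a literal node has degree $9$ with only two clause neighbors, so the at most two ``votes'' arriving from satisfied clauses can never replace the three internal $v$-supporters a literal needs to remain $1$, nor the seed $w_0(x)$ inevitably requires. These observations localize the dynamics inside each gadget and justify the clean per-gadget seed lower bounds that drive the counting.
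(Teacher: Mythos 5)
Your overall architecture---split $G(\phi)$ into the clique, the isolated nodes, and the gadgets, show the clique must absorb at least $N/2$ seeds, then run a budget count over the gadgets---is exactly the paper's, and your clique analysis and the arithmetic of the first case are sound. The problems lie in the per-gadget structural facts, which are the real content of the lemma and which you partly misstate and partly defer. First, your clause-gadget claim (``yields its $17$ internal ones \emph{only if} $u_1(c)$ and $u_2(c)$ are both seeded'') is false: seed $\upsilon_1(c),\dots,\upsilon_9(c)$ and neither $u_i(c)$; then $u_1(c)$ has $9$ of its $16$ neighbors at preference $1$, so it is unhappy and flips, then $u_2(c)$ flips, then the remaining $\upsilon_j(c)$ flip, and all $17$ non-clause nodes end at $1$ without either $u_i(c)$ being seeded. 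The correct statement must be cost-based (both $u_i(c)$ seeded, or at least $8$ seeds inside the gadget), which is why the paper's count carries a separate class $C_3$ of clause gadgets with at least $3$ seeds, bounded in contribution by $18$ and charged at least $3$ seeds, rather than your clean ``two proper seeds or nothing'' dichotomy.

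Second, and more seriously, the structural facts you state are too weak to make your exchange argument close. You assert only that a variable gadget with at most two seeds ends with at most $17$ ones, reaching $17$ only under proper seeding; this permits a $1$-seeded or non-properly $2$-seeded gadget to retain as many as $16$ ones. With only that bound the count fails: put $3$ seeds in one variable gadget (up to $25$ ones, a surplus of $+8$) and pay for the extra seed by leaving another variable gadget with a single seed (a loss of possibly only $1$ under your bound); the configuration respects the budget and nets $+7$, so the adversary could reach $n/2$. Your phrase ``lose at least $17$'' silently invokes the much stronger collapse statement---with at most one seed, or with two seeds placed non-properly, \emph{every} node of a variable gadget ends at preference $0$, and all $17$ non-clause nodes of such a clause gadget end at $0$ (so it retains at most $1$, namely its clause node flipped by satisfied literals)---and this collapse statement is precisely the long case analysis that occupies most of the paper's proof, handling adversarial update orders, ties that keep seeded nodes alive, and the cross edges through literal and clause nodes. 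Your hub and degree-$9$ observations are ingredients of that analysis, but as written they do not establish it, so the proposal has a genuine gap exactly at its crux.
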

\begin{proof}
First observe that if the total number of clique and isolated nodes with preference $1$ is
strictly less than $\frac{N}{2}$, then no clique node with preference $0$ will adopt preference
$1$ (this holds trivially for isolated nodes too). Thus, in this case, even counting all nodes in variable and clause gadgets,
any sequence of updates will converge to a stable profile with at most $25V+18C+\frac{N}{2}-1 < \frac{n}{2}$ nodes with preference $1$ 
(the inequality follows since $N\geq 12C$).

Let us now focus on a profile of initial preferences that assigns preference $1$ to at most $7C/2=2C+2V$ nodes 
from variable and clause gadgets. Suppose that this profile of initial preferences is such that a sequence of 
updates leads to a stable profile with at least $n/2$ nodes with preference $1$.
We will show that this profile of initial preferences must be proper.

First, observe that if at most one node in a variable gadget is assigned preference $1$, 
then all nodes in the gadget will eventually adopt preference $0$ after a sequence of updates.
Indeed, a literal node will have at least six neighbors with preference $0$ 
and at most three with preference $1$, and any non-literal node will have at most 
one out of at least three of its neighbors with preference $0$. 

Consider now profiles of initial preferences that assign preference $1$ to two nodes 
of the variable gadget of $x$ in a non-proper way. 
We show that any sequence of updates leads to a profile in 
which all nodes of the gadget adopt preference $0$. 

Indeed, assume that $w_0(x)$ has preference $1$ and both $x$ and $\ox$ have preference $0$. 
Clearly, 
the nodes $w_1(x)$,\ldots, $w_7(x)$ can switch from $0$ to $1$ in any order. 
Among the non-literal nodes $v_i(x)$ and $v_i(\ox)$, only one among the degree-$3$ nodes $v_0(x)$, $v_1(x)$, and $v_1(\ox)$ can switch from $0$ to $1$; 
this can only happen if the second node with preference $1$ is in the neighborhood of one of these nodes (i.e., to some of the nodes $v_7(x)$, $v_7(\ox)$, $v_2(x)$, or $v_2(\ox)$). But then, the literal nodes will have at most four neighbors with preference $1$ and they cannot switch to $1$. So, no other node has any incentive to switch from $0$ to $1$. Then, $w_0(x)$ has at least $13$ among its $22$ neighbors with preference $0$ and will switch from $1$ to $0$, followed by the nodes $w_1(x)$, ..., $w_7(x)$ that will switch back to $0$ as well. Then, there are at most two nodes with preference $1$ among the nodes $v_i(x)$ and $v_i(\ox)$ that will eventually switch to $0$ as well (since they have degree at least three). 

Assume now that literal node $x$ has preference $1$ (the case for $\ox$ is symmetric) and 
that $w_0(x)$ has preference $0$. 
Then, only the degree-$3$ node $v_1(x)$ that is adjacent to $x$ can switch to $1$ provided that the second node with preference $1$ is node $v_2(x)$. Now notice that no other node can switch from $0$ to $1$. Even worse, the literal node $x$ has at least five (out of nine) neighbors of preference $0$ and will switch from $1$ to $0$. And then, we are left with at most two nodes with preference $1$ among the nodes $v_i(x)$ and $v_i(\ox)$ that will eventually switch to $0$ as well. 

Finally, we consider the case in which $w_0(x)$ and the two literal nodes have preference $0$. 
Now
the only node that can initially switch from $0$ to $1$ is $v_0(x)$ provided that the two nodes with preference $1$ are $v_7(x)$ and $v_7(\ox)$. But then, there is no other node that can switch from $0$ to $1$ and, eventually, nodes $v_7(x)$ and $v_7(\ox)$ will switch to $0$ and finally node $v_0(x)$ will switch back to $0$. 

We have covered all possible cases in which a variable gadget has a non-proper assignment of 
preference $1$ to two nodes and shown that in all of these cases, 
all nodes of the gadget will switch to preference $0$.
On the other hand, 
as discussed in the proof of Lemma~\ref{lemma:proper}, a proper profile of initial preferences can end up with preference $1$ in $17$ nodes of the variable gadget.

Now, observe that if at most one node in a clause gadget has preference $1$ 
(or two nodes are assigned preference $1$ in a non-proper way), 
then all the $17$ non-clause nodes in the gadget will end up with preference $0$. 
This is due to the fact that none among the nodes $\upsilon_1(c)$, ..., $\upsilon_{15}(c)$ can switch from $0$ to $1$ since at least one of their neighbors will have preference $0$. But this means that nodes $u_1(c)$ and $u_2(c)$ are adjacent to many (i.e., at least $13$) nodes with preference $0$; so, they will also switch to $0$. And then, if there is still some node $\upsilon_i(c)$ with preference $1$, it will switch to $0$ since both its neighbors have preference $1$. 

Now, by denoting with $V_0,V_1,V_3$  the number of variable gadgets that have $0,1$ or at least
$3$ nodes with preference $1$ and by $V_{2p}$ and $V_{2n}$ the number of variable gadgets
with proper and non-proper assignment of preference $1$ to exactly two nodes, we have
$V = V_0+V_1+V_{2n}+V_{2p}+V_3$
and, by denoting with $C_0$, $C_1$, and $C_3$ the number of clause gadgets with $0$, $1$, and at least $3$ nodes with preference $1$ in nodes other than the clause node and
by $C_{2p}$ and $C_{2n}$ the number of clause gadgets with two nodes with preference $1$
assigned in a proper and non-proper way, we have
$C = C_0+C_1+C_{2n}+C_{2p}+C_3$.
Since the total number of nodes with preference $1$ does not exceed $2V+2C$, we have
$V_1+2V_{2n}+2V_{2p}+3V_3+C_1+2C_{2n}+2C_{2p}+3C_3 \leq 2V+2C$
from which we get
$V_3+C_3 \leq 2C_0+C_1+2V_0+V_1$.
Now consider the difference between the number of nodes with preference $1$ in any stable profile reached after a sequence of updates and the quantity $17V+18C$. It is at most
$17V_{2p}+25V_3+C_0+C_1+C_{2n}+18C_{2p}+18C_3-17V-18C
= -17V_0-17V_1-17V_{2n}+8V_3-17C_0-17C_1-17C_{2n}
\leq -V_0-9V_1-17V_{2n}-C_0-9C_1-17C_{2n}-8C_3$.
Hence, if at least one of $V_0$, $V_1$, $V_{2n}$, $C_0$, $C_1$, $C_{2n}$, and $C_3$ is positive, the proof follows since the number of nodes with preference $1$ will be strictly less than $N+17V+18C=n/2$. Otherwise, i.e., if all these quantities are $0$, this implies that $C=C_{2p}$ and $V=V_{2p}+V_3$ which in turn implies that $V_3=0$ since the number of nodes with preference $1$ cannot exceed $2C+2V$. Hence, the only case where a sequence of updates may lead to a stable profile with at least $N+17V+18C$ nodes having preference $1$ is when the profile of initial preferences is proper. The claim follows.
\end{proof}
\noindent \emph{Checking whether minority can become majority.}
%
We next show that, given a graph $G$ and a profile of initial preferences $\blf$, 
it is possible to decide whether $\blf$ is \mbm\ for $G$ in polynomial time.
Moreover, if this is the case, then there is an efficient algorithm
that computes the subverting sequence of updates.
This algorithm was used in \cite{ckoEC13} for bounding the 
price of stability.
\begin{theorem}\label{thm:poly-mbm}
There is a polynomial time algorithm that, given a graph $G=(V,E)$ and a profile of initial preferences $\blf$, decides whether $\blf$ is \mbm\ for graph $G$ and, if it is, it outputs a subverting sequence of updates.
\end{theorem}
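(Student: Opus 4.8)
The plan is to reduce the decision to computing a single canonical stable profile by running the dynamics in two monotone phases, and then to argue that this profile carries the maximum possible number of $1$'s among all stable profiles reachable from $\blf$. First I would compute the \emph{up-closure} $X^{*}$: starting from the set $X_{0}=\{v : \blf(v)=1\}$, repeatedly pick any node $v$ with preference $0$ that is unhappy (i.e. $W(v,Z)>\deg(v)/2$, where $Z$ is the current set of $1$'s) and flip it to $1$, until no such node remains. Each node flips at most once in this phase, so $X^{*}$ is reached after at most $n$ valid updates, and the flips are recorded. By construction $X^{*}$ is \emph{up-closed}: no node outside $X^{*}$ has strictly more than half of its neighbors in $X^{*}$.

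The first key step is an upper bound: every configuration reachable from $\blf$ has its set of $1$'s contained in $X^{*}$. I would prove this by induction on the update sequence. The base case holds since $X_{0}\subseteq X^{*}$. A $1\to 0$ flip only shrinks the current set of $1$'s, so it preserves the invariant. A $0\to 1$ flip of a node $v$ occurs only when $v$ has strictly more than half of its neighbors in the current set $Z\subseteq X^{*}$ of $1$'s; then $v$ has strictly more than half of its neighbors in $X^{*}$, and up-closedness forces $v\in X^{*}$. Hence the invariant is maintained, and in particular the $1$-set of any reachable \emph{stable} profile is a subset of $X^{*}$.

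Next I would build the canonical trajectory. Observe that in $X^{*}$ no $0$-node is unhappy, and flipping any $1$-node down to $0$ can only decrease the number of $1$-neighbors of the remaining nodes, so no $0$-node ever becomes unhappy afterwards. Thus, continuing the dynamics from $X^{*}$, every update is a $1\to 0$ flip of an unhappy $1$-node; running this \emph{down-phase} until no unhappy node remains yields, after at most $n$ further flips, a stable profile whose $1$-set I call $Y^{**}\subseteq X^{*}$. The heart of the argument is to show that $Y^{**}$ is optimal, i.e. $Y\subseteq Y^{**}$ for every stable profile with $1$-set $Y$ reachable from $\blf$. Let $f_{1},f_{2},\dots$ be the nodes flipped down, in the order of the down-phase, so $F=X^{*}\setminus Y^{**}=\{f_{1},f_{2},\dots\}$. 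I would argue by induction that no $f_{i}$ lies in $Y$: at the moment $f_{i}$ is flipped, its $0$-neighbors are contained in $(V\setminus X^{*})\cup\{f_{1},\dots,f_{i-1}\}$, which by the inductive hypothesis and by $Y\subseteq X^{*}$ is contained in $V\setminus Y$; since this set already accounts for strictly more than half of $f_{i}$'s neighbors, if $f_{i}$ belonged to the stable set $Y$ it would be an unhappy $1$-node there, a contradiction. Therefore $Y\cap F=\emptyset$, i.e. $Y\subseteq Y^{**}$, and $|Y|\le |Y^{**}|$.

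Putting the pieces together, $Y^{**}$ is a reachable stable profile of maximum size, so $\blf$ witnesses $G$ being \mbm\ if and only if $\blf$ has an initial strict majority of $0$'s and $|Y^{**}|\ge n/2$; in the affirmative case the concatenation of the recorded up-phase and down-phase flips is a subverting sequence, and the whole computation uses $O(n)$ flips, each found in polynomial time. I expect the optimality of $Y^{**}$ to be the main obstacle: the upper bound and the construction of the trajectory are direct, but ruling out that some cleverer interleaving of $0\to 1$ and $1\to 0$ flips, or a different flip order, reaches a strictly larger stable profile is the delicate point; the inductive argument above along the canonical down-order — crucially using that the already-committed $0$-nodes are common to every candidate stable set $Y\subseteq X^{*}$ — is what makes it go through.
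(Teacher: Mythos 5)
Your proposal is correct, and the algorithm you construct is exactly the paper's two-phase procedure (Step~1: flip unhappy $0$-nodes up; Step~2: flip unhappy $1$-nodes down; then test for a majority of $1$'s). Where you genuinely diverge is in the proof of optimality, which is the mathematical heart of the theorem. The paper argues by sequence surgery: it takes an arbitrary update sequence $\sigma$ reaching a stable profile with the maximum number of $1$'s, repeatedly swaps adjacent ($1$-to-$0$, $0$-to-$1$) pairs of moves (cancelling pairs belonging to the same agent) to push all $0$-to-$1$ moves to the front, and then completes the prefix of $0$-to-$1$ moves with valid $1$-to-$0$ updates, showing the result has the algorithm's canonical form without losing any final $1$'s; stability of the algorithm's output is not reproved but cited from prior work. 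You instead prove two structural facts: the up-closure $X^{*}$ contains the $1$-set of \emph{every} configuration reachable from $\blf$ (a monotonicity invariant), and the down-phase output $Y^{**}$ contains the $1$-set of \emph{every} reachable stable profile (induction along the down-order, using that the nodes already flipped down are excluded from any candidate stable set). Your route buys three things: it is self-contained (stability of $\blfp$ follows from the monotonicity of the down-phase rather than from an external lemma); it establishes a strictly stronger conclusion, namely set-wise domination $Y\subseteq Y^{**}$ rather than mere cardinality maximality, which in particular shows the maximum reachable stable $1$-set is unique; and it avoids the somewhat delicate verification that swapped moves in the exchange argument remain legal updates. The paper's route is shorter given the cited lemma and illustrates a reusable exchange technique for reordering dynamics. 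Both proofs are sound; yours is arguably the cleaner one to present from scratch.
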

\begin{proof}
Consider the following algorithm that receives as input an $n$-node graph $G=(V,E)$
and a profile $\blf$ with less than $n/2$ nodes with initial preference $1$:\\
\textbf{Step 1:} While there exists an unhappy node $v$ whose current preference is $0$, update $v$'s preference to $1$.\\
\textbf{Step 2:} While there exists an unhappy node $v$ whose current preference is $1$, update $v$'s preference to $0$;
let $\blfp$ denote the profile at the end of this phase.\\
\textbf{Step 3:} If preference $1$ is a majority in $\blfp$, then return ``yes'' and output the sequence of updates of the first and second phase; otherwise, return ``no''.

Clearly, the running time of the algorithm is polynomial in the size of the input graph, since each node updates its preference at most twice. The fact that $\blfp$ is a stable profile is proved in \cite[Lemma~3.3]{ckoEC13}.
We next show that $\blfp$ is actually the stable profile that maximizes the number of nodes with preference $1$.
We refer to updates that change the preference from $x$ to $\ox$ as $x$-to-$\ox$ moves.

Consider a sequence $\sigma$ of updates leading to a stable profile $\st$ that maximizes the number of nodes with preference $1$.
We will show that there is another sequence of updates that has the form computed by the two-phase algorithm described above
and converges to a stable profile in which the agents with preference $1$ are at least as many as in $\st$.
We start by constructing a first sequence of moves $\sigma'$ from $\blf$ to $\st$ that contains at most two moves per agent and has the following properties:
$0$-to-$1$ moves are only executed by unhappy agents and precede the $1$-to-$0$ moves, which may or may not executed by unhappy agents.
To construct the sequence $\sigma'$, we repeatedly apply the following procedure to the sequence $\sigma$ until this is no longer possible:
pick an agent $i$ that performs an $1$-to-$0$ move just before the $0$-to-$1$ move of agent $i'$;
if $i=i'$ (this is possible when applying the process repeatedly), simply remove both moves from the sequence;
otherwise, swap the moves in the sequence.
The crucial observation is that $0$-to-$1$ moves are shifted to the beginning of the sequence in this way.
Note that such a move will be still executed by an unhappy node if it survives the application of the swap process
(because if the corresponding agent was unhappy after a $1$-to-$0$ move, it will also be unhappy also before it).
In contrast, the repeated application of the swap process shifts $1$-to-$0$ moves to the end of the sequence.

We now construct a sequence $\sigma''$ by keeping the $0$-to-$1$ moves as in $\sigma'$
and by completing the sequence with additional $1$-to-$0$ updates until reaching a stable profile.
We observe that the new sequence cannot contain any other $1$-to-$0$ move besides the ones in $\sigma'$
(otherwise, $\st$ would not be a stable profile).
Hence, if $\sigma$ has the maximum number of nodes with preference $1$, so does $\sigma''$.
The theorem follows by observing that $\sigma''$ has the form computed by the algorithm.
\end{proof}

\section{Conclusions and Open Problems}
\label{sec:conclusion}
In this work we showed that, for any social network topology
except very few and extreme cases,
social pressure can subvert a majority.
We proved this with respect to a very natural {\em majority} dynamics
in the case in which agents must express preferences.
We also showed that, for each of these graphs,
it is possible to compute in polynomial time
an initial majority and a sequence of updates that subverts it.
The initial majority constructed in this way
consists of only $\lceil (n+1)/2 \rceil$ agents.
On the other hand, our hardness result proves that
it may be hard to compute an initial majority of size at least $3n/4$
that can be subverted by the social pressure.
The main problem that this work left open is to close this gap.

Even if computational considerations rule out a simple characterization
of the graphs for which a large majority can be subverted,
it would be still interesting to gain knowledge on these graphs.
Specifically, can we prove that the set of graphs
for which large majority can be subverted
can be easily described by some simple (but hard to compute) graph-theoretic measure?
We believe that our ideas can be adapted (e.g., by considering unbalanced partitions in place of bisections),
for gaining useful hints in this direction.


\newpage
\bibliographystyle{plain}
\bibliography{opinion}

\begin{thebibliography}{10}

\bibitem{berger}
Eli Berger.
\newblock Dynamic monopolies of constant size.
\newblock {\em Journal of Combinatorial Theory, Series B}, 83(2):191--200,
  2001.

\bibitem{BKO11}
David Bindel, Jon~M. Kleinberg, and Sigal Oren.
\newblock How bad is forming your own opinion?
\newblock In {\em Proceedings of the 52nd Annual IEEE Symposium on Foundations
  of Computer Science (FOCS)}, pages 55--66, 2011.

\bibitem{BCMP13}
Simina Br\^anzei, Ioannis Caragiannis, Jamie Morgenstern, and Ariel~D.
  Procaccia.
\newblock How bad is selfish voting?
\newblock In {\em Proceedings of the 27th AAAI Conference on Artificial
  Intelligence (AAAI)}, pages 138--144, 2013.

\bibitem{ckoEC13}
Flavio Chierichetti, Jon~M. Kleinberg, and Sigal Oren.
\newblock On discrete preferences and coordination.
\newblock In {\em Proceedings of the 14th ACM Conference on Electronic Commerce
  (EC)}, pages 233--250, 2013.

\bibitem{coleman1966medical}
James~S. Coleman, Elihu Katz, and Herbert Menzel.
\newblock {\em Medical innovation: a diffusion study}.
\newblock Advanced Study in Sociology. Bobbs-Merrill Co., 1966.

\bibitem{D74}
Morris~H. DeGroot.
\newblock Reaching a consensus.
\newblock {\em Journal of the American Statistical Association},
  69(345):118--121, 1974.

\bibitem{EK10}
David Easley and Jon Kleinberg.
\newblock {\em Networks, Crowds, and Markets: Reasoning about a Highly
  Connected World}.
\newblock Cambridge University Press, 2010.

\bibitem{feldman}
Michal Feldman, Nicole Immorlica, Brendan Lucier, and S~Matthew Weinberg.
\newblock Reaching consensus via non-bayesian asynchronous learning in social
  networks.
\newblock {\em Approximation, Randomization, and Combinatorial Optimization.
  Algorithms and Techniques (APPROX/RANDOM 2014)}, 28:192--208, 2014.

\bibitem{fgvSAGT12}
Diodato Ferraioli, Paul~W. Goldberg, and Carmine Ventre.
\newblock Decentralized dynamics for finite opinion games.
\newblock In {\em Proceedings of the 5th International Symposium on Algorithmic
  Game Theory (SAGT)}, pages 144--155, 2012.

\bibitem{FJ90}
Noah~E. Friedkin and Eugene~C. Johnsen.
\newblock Social influence and opinions.
\newblock {\em Journal of Mathematical Sociology}, 15(3--4):193--205, 1990.

\bibitem{MPRJ10}
Reshef Meir, Maria Polukarov, Jeffrey~S. Rosenschein, and Nicholas~R. Jennings.
\newblock Convergence to equilibria in plurality voting.
\newblock In {\em Proceedings of the 24th AAAI Conference on Artificial
  Intelligence (AAAI)}, pages 823--828, 2010.

\bibitem{mossel}
Elchanan Mossel, Joe Neeman, and Omer Tamuz.
\newblock Majority dynamics and aggregation of information in social networks.
\newblock {\em Autonomous Agents and Multi-Agent Systems}, 28(3):408--429,
  2014.

\bibitem{mossel-sly}
Elchanan Mossel, Allan Sly, and Omer Tamuz.
\newblock Asymptotic learning on bayesian social networks.
\newblock {\em Probability Theory and Related Fields}, 158(1-2):127--157, 2014.

\bibitem{Peleg02}
David Peleg.
\newblock Local majorities, coalitions and monopolies in graphs: A review.
\newblock {\em Theoretical Computer Science}, 282:231--257, 2002.

\bibitem{ryan1950acceptance}
Bryce Ryan and Neal~Crasilneck Gross.
\newblock {\em Acceptance and diffusion of hybrid corn seed in two Iowa
  communities}, volume 372.
\newblock Agricultural Experiment Station, Iowa State College of Agriculture
  and Mechanic Arts, 1950.

\bibitem{tamuz}
Omer Tamuz and Ran~J Tessler.
\newblock Majority dynamics and the retention of information.
\newblock {\em Israel Journal of Mathematics}, 206(1):483--507, 2013.

\bibitem{Y05TRA}
Ryo Yoshinaka.
\newblock Higher-order matching in the linear lambda calculus in the absence of
  constants is np-complete.
\newblock In Jürgen Giesl, editor, {\em Term Rewriting and Applications},
  volume 3467 of {\em Lecture Notes in Computer Science}, pages 235--249.
  Springer Berlin Heidelberg, 2005.

\end{thebibliography}

\newpage
\appendix

\end{document}